\documentclass[11pt]{article}

\usepackage{fullpage}
\usepackage[T1]{fontenc}
\usepackage{bbm}
\usepackage{framed} 
\usepackage{url}
\usepackage{complexity}
\usepackage{booktabs}
\usepackage{amsmath,amssymb}
\usepackage{amsthm}
\usepackage{nicefrac}
\usepackage{microtype}
\usepackage{calc}
\usepackage{enumerate}
\usepackage{enumitem}
\usepackage[usenames,dvipsnames]{xcolor}
\usepackage[colorlinks,citecolor=blue,linkcolor=BrickRed]{hyperref}
\usepackage{algorithm}
\usepackage{graphicx}
\usepackage[font=footnotesize,labelfont=bf]{subcaption}
\usepackage[font=footnotesize,labelfont=bf]{caption}
\usepackage[nobreak=true]{mdframed}
\usepackage{appendix}
\usepackage[noend]{algpseudocode}
\usepackage[colorinlistoftodos]{todonotes}
\usepackage{xr}
\usepackage{array}
\usepackage{xspace}
\usepackage{hyperref}
\usepackage[capitalise]{cleveref}
\crefrangelabelformat{enumi}{#3#1#4--#5#2#6}
\usepackage{parskip}
\usepackage{chngcntr}
\usepackage{mathtools}

\usepackage{nameref}

\usepackage{tikz}
\usetikzlibrary{patterns}

\makeatletter
 \setlength{\parindent}{0pt}
 \addtolength{\partopsep}{-2mm}
 \setlength{\parskip}{5pt plus 1pt}
 \addtolength{\abovedisplayskip}{-3mm}
 \addtolength{\textheight}{35pt}
 \allowdisplaybreaks
\makeatother

\DeclareMathOperator*{\expectation}{\mathbb{E}}
\let\poly\relax
\DeclareMathOperator*{\poly}{poly}

\DeclareMathOperator*{\probability}{\mathbb{P}}

\renewcommand\R{\mathbb{R}}
\newcommand\Z{\mathbb{Z}}
\newcommand\mff{\mathfrak{F}}
\newcommand\mfI{\mathfrak{I}}

\newcommand\eps{\epsilon}

\newcommand\unvrs{\mathcal{N}}

\newcommand\expect[1]{\expectation\left[ #1 \right]}

\newcommand\prob[1]{\probability\left( #1 \right)}

\newcommand\mutcov{\mathcal{I}}
\newcommand\opt{\textsc{Opt}}
\newcommand\sse{\subseteq}
\newcommand\defeq{ \xspace \stackrel{\tiny \mathclap{\mbox{(def)}}}{=} \xspace}

\counterwithin{equation}{section}

\usepackage{eqparbox}
\renewcommand{\algorithmiccomment}[1]{\hfill\eqparbox{COMMENT}{// #1}}

\newcommand\fmin{f_{\min}}
\newcommand\fmax{f_{\max}}

\newcommand\cmin{c_{\min}}
\newcommand\cmax{c_{\max}}

\newcommand\Ell{\mathcal{L}}

\newcommand\mst{\textsc{MinimumSpanningTree}\xspace}
\newcommand\mstt{\textsc{MinimumSteinerTree}\xspace}
\newcommand\setcov{\textsc{SetCover}\xspace}
\newcommand\hitset{\textsc{HittingSet}\xspace}
\newcommand\hvc{\textsc{HypergraphVertexCover}\xspace}
\newcommand\subcov{\textsc{SubmodularCover}\xspace}

\theoremstyle{plain}

\newtheorem{theorem}{Theorem}[section]
\newtheorem{definition}{Definition}[section]

\newtheorem{lemma}{Lemma}[section]
\newtheorem{fact}{Fact}[section]
\newtheorem{observation}{Observation}
\newtheorem{claim}{Claim}[section]
\newtheorem{corollary}{Corollary}[section]

\newtheorem{assumption}{Assumption}[section]




\newlength{\continueindent}
\setlength{\continueindent}{2em}
\usepackage{etoolbox}
\makeatletter
\newcommand*{\ALG@customparshape}{\parshape 2 \leftmargin \linewidth \dimexpr\ALG@tlm+\continueindent\relax \dimexpr\linewidth+\leftmargin-\ALG@tlm-\continueindent\relax}
\apptocmd{\ALG@beginblock}{\ALG@customparshape}{}{\errmessage{failed to patch}}
\makeatother

\makeatletter
\def\thm@space@setup{%
	\thm@preskip=\parskip \thm@postskip=0pt
}
\makeatother


\usepackage{etoolbox}
\usepackage{tikz}
\usetikzlibrary{tikzmark}
\usetikzlibrary{calc}

\errorcontextlines\maxdimen

\newcommand{\ALGtikzmarkcolor}{black}
\newcommand{\ALGtikzmarkextraindent}{4pt}
\newcommand{\ALGtikzmarkverticaloffsetstart}{-.5ex}
\newcommand{\ALGtikzmarkverticaloffsetend}{-.5ex}
\makeatletter
\newcounter{ALG@tikzmark@tempcnta}

\newcommand\ALG@tikzmark@start{%
	\global\let\ALG@tikzmark@last\ALG@tikzmark@starttext%
	\expandafter\edef\csname ALG@tikzmark@\theALG@nested\endcsname{\theALG@tikzmark@tempcnta}%
	\tikzmark{ALG@tikzmark@start@\csname ALG@tikzmark@\theALG@nested\endcsname}%
	\addtocounter{ALG@tikzmark@tempcnta}{1}%
}

\def\ALG@tikzmark@starttext{start}
\newcommand\ALG@tikzmark@end{%
	\ifx\ALG@tikzmark@last\ALG@tikzmark@starttext
	\else
	\tikzmark{ALG@tikzmark@end@\csname ALG@tikzmark@\theALG@nested\endcsname}%
	\tikz[overlay,remember picture] \draw[\ALGtikzmarkcolor] let \p{S}=($(pic cs:ALG@tikzmark@start@\csname ALG@tikzmark@\theALG@nested\endcsname)+(\ALGtikzmarkextraindent,\ALGtikzmarkverticaloffsetstart)$), \p{E}=($(pic cs:ALG@tikzmark@end@\csname ALG@tikzmark@\theALG@nested\endcsname)+(\ALGtikzmarkextraindent,\ALGtikzmarkverticaloffsetend)$) in (\x{S},\y{S})--(\x{S},\y{E});%
	\fi
	\gdef\ALG@tikzmark@last{end}%
}

\apptocmd{\ALG@beginblock}{\ALG@tikzmark@start}{}{\errmessage{failed to patch}}
\pretocmd{\ALG@endblock}{\ALG@tikzmark@end}{}{\errmessage{failed to patch}}
\makeatother

\title{Fully-Dynamic Submodular Cover with Bounded Recourse}
\author{Anupam Gupta\thanks{Computer Science Department, Carnegie Mellon
		University, Pittsburgh, PA 15213. This research was done under the
auspices of the Indo-US Virtual Networked Joint Center IUSSTF/JC-017/2017.
Research supported in part by NSF awards
  CCF-1907820, CCF1955785, and CCF-2006953. Emails:
		\texttt{\{anupamg,roiel\}@cs.cmu.edu}. }
	\and
	Roie Levin$^{*}$
}
\date{}

\begin{document}

\maketitle

\begin{abstract}
   In submodular covering problems, we are given a monotone,
   nonnegative submodular function $f: 2^\unvrs \rightarrow \R_+$ and
   wish to find the min-cost set $S \subseteq \unvrs$ such that $f(S)
   = f(\unvrs)$. When $f$ is a coverage function, this captures
   \setcov as a special case. We introduce a general framework for
   solving such problems in a fully-dynamic setting where the function
   $f$ changes over time, and only a bounded number of updates to the
   solution (a.k.a.\ recourse) is allowed. For concreteness, suppose a nonnegative monotone submodular
   integer-valued function $g_t$ is added or removed from an active set $G^{(t)}$ at
   each time $t$. If $f^{(t)} = \sum_{g \in G^{(t)}} g$ is the sum of
   all active functions, we wish to maintain a competitive solution to
   \subcov for $f^{(t)}$ as this active set changes, and with low
   recourse. For example, if each $g_t$ is the (weighted) rank
   function of a matroid, we would be dynamically maintaining a
   low-cost common spanning set for a changing collection of matroids.
	
   We give an algorithm that maintains an
   $O(\log (\fmax / \fmin))$-competitive solution, where
   $\fmax, \fmin$ are the largest/smallest marginals of $f^{(t)}$. The
   algorithm guarantees a total recourse of
   $O(\log (\cmax / \cmin) \cdot \sum_{t\leq T} g_t(\unvrs))$, where
   $\cmax,\cmin$ are the largest/smallest costs of elements in
   $\unvrs$. This competitive ratio is best possible even in the
   offline setting, and the recourse bound is optimal up to the
   logarithmic factor. For monotone submodular functions that also
   have positive mixed third derivatives, we show an optimal recourse
   bound of $O( \sum_{t\leq T} g_t(\unvrs))$. This structured
   class includes set-coverage functions, so our algorithm matches
   the known $O(\log n)$-competitiveness and $O(1)$ recourse
   guarantees for fully-dynamic \setcov. Our work simultaneously simplifies and unifies previous results, as
   well as generalizes to a significantly larger class of covering
   problems. Our key technique is a new potential function inspired by
   Tsallis entropy. We also extensively use the idea of \textit{Mutual
     Coverage}, which generalizes the classic notion of mutual
   information.
\end{abstract}

\vfill

\thispagestyle{empty}

\pagebreak 

\setcounter{page}{1}

\newcommand{\xf}[1]{f^{(#1)}}
\newcommand{\xG}[1]{G^{(#1)}}

\section{Introduction}
\label{sec:introduction}

In the \subcov problem, we are given a monotone, nonnegative
submodular function\footnotemark ${f: 2^\unvrs \rightarrow \Z_+}$, as well as a linear cost function $c$, 
and we wish to find the min-cost set $S \subseteq \unvrs$ such that $f(S) = f(\unvrs)$. \footnotetext{In the introduction we restrict to integer-valued functions for
simplicity; all results extend to general submodular functions with
suitable changes. See the technical sections for the full results
and nuanced details.} 
This is a classical NP-hard problem: e.g., when $f$ is a coverage
function we capture the \setcov problem. Moreover, the greedy
algorithm is known to be an $(1 +\ln \fmax)$-approximation, where
$\fmax$ is the maximum value of any single
element~\cite{wolsey1982analysis}. This bound is tight assuming
$\P \neq \NP$, even for the special case of \setcov~\cite{feige1998threshold, Dinur:2014:AAP:2591796.2591884}.

We consider this \subcov problem in a fully-dynamic setting,
where the notion of coverage changes over time. At each time, the
underlying submodular function changes from $\xf{t}$ to $\xf{t+1}$, and
the algorithm may have to change its solution from $S_t$ to $S_{t+1}$
to cover this new function. We do not want our solutions to change
wildly if the function changes by small amounts. The goal of this
work is to develop algorithms where this ``churn''
$|S_t \triangle S_{t+1}|$ is small (perhaps in an amortized sense),
while maintaining the requirement that each solution $S_t$ is a good
approximate solution to the function $\xf{t}$. The change
$|S_t \triangle S_{t+1}|$ is often called \emph{recourse} in the
literature.

This problem has been posed and answered in the special case of \setcov---for clarity, 
from now on this paper we consider the equivalent \hvc (a.k.a.\ the \hitset) problem. 
In this problem, hyperedges arrive to and depart from an active set over time, and we must maintain a small set of
vertices that hit all active hyperedges. We know an algorithm that maintains
an $O(\log n_t)$-approximation which has constant amortized
recourse~\cite{Gupta:2017:ODA:3055399.3055493}; here $n_t$ refers to
the number of active hyperedges at time $t$. In other words, the total
recourse---the total number of changes over $T$ edge arrivals and
departures---is only $O(T)$.
The algorithm and analysis are based on a delicate token-based
argument, which gives each hyperedge a constant number of tokens upon its
arrival, and moves these tokens in a careful way between edges to
account for the changes in the solution. What do we do in the more
general \subcov case, where there is no notion of sets any
more?

In this work we study the model where a submodular function is added or
removed from the \emph{active set} $\xG{t}$ at each timestep: this
defines the current submodular function
$\xf{t} := \sum_{g \in G_{t}} g$ as the sum of functions in this
active set. The algorithm must maintain a subset $S_t \sse \unvrs$ such
that $\xf{t}(S_t) = \xf{t}(\unvrs)$ with cost $c(S_t)$ being within a
small approximation factor of the optimal \subcov for
$\xf{t}$, such that the total recourse
$\sum_t | S_t \triangle S_{t+1} |$ remains small. To verify that this problem models 
dynamic \hvc, each arriving/departing edge $A_t$ should correspond to a
submodular function $g_t$ taking on value~$1$ for any set $S$ that hits
$A_t$, and value~zero otherwise (i.e., $g_t(S) = \mathbbm{1}[S \cap A_t \neq \emptyset]$).

\subsection{Our Results}
\label{sec:our-results}

Our main result is the following:

\begin{theorem}[Informal]
  \label{thm:intro-main_weighted_subc}
  There is a deterministic algorithm that maintains an
  $e^2\cdot(1 + \log \fmax)$- competitive solution to Submodular
  Cover in the fully-dynamic setting where functions arrive/depart over
  time. This algorithm has total recourse:
  \[ O\bigg(\sum_{t} g_t(\unvrs) \ln \Big(\frac{\cmax}{\cmin}\Big)
    \bigg),\] where $g_t(\unvrs)$ is the value of the function
  considered at time $t$, and $\cmax, \cmin$ are the maximum and
  minimum element costs.
\end{theorem}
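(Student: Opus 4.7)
The plan is to first reduce to a unit-cost setting by geometric cost bucketing: partition $\unvrs$ into $O(\log(\cmax/\cmin))$ groups, one per doubling scale of $c(e)$, and run an independent copy of the algorithm on each group. Summing recourse over groups introduces exactly the $\log(\cmax/\cmin)$ factor appearing in the statement, so it suffices to design a unit-cost subroutine with total recourse $O(\sum_t g_t(\unvrs))$ and competitive ratio $O(\log \fmax)$.

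For the unit-cost problem I would maintain a level-based greedy solution in the spirit of the \hvc algorithm of \cite{Gupta:2017:ODA:3055399.3055493}, generalized from set coverage to general submodular $\xf{t}$. Assign each $e \in S_t$ a level $\ell(e) \in \{0, 1, \ldots, \lceil \log \fmax \rceil\}$, with the invariant that the marginal of $e$ relative to $S_t \setminus \{e\}$ lies in a band around $2^{\ell(e)}$ and that no unselected element at the same or lower level has substantially higher marginal. When $g_t$ arrives I would sweep levels from top down, inserting any elements whose current marginal exceeds the corresponding threshold; when $g_t$ departs I would repair invariant violations by demoting or evicting newly-redundant elements, again from the top down.

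To charge recourse, I would introduce a potential of the form
\[
\Phi_t \;=\; \sum_{\ell} \phi_q(\ell) \cdot \big|\{ e \in S_t : \ell(e) = \ell \}\big|,
\]
where the weights $\phi_q$ come from a Tsallis-entropy functional (a concave smoothing of $\ell$ tuned by a parameter $q$, limiting to $-\ell \log \ell$ and to $\ell$ itself in extreme regimes). The two key properties I would establish are: (i) adding $g_t$ to $\xG{t}$ raises $\Phi_t$ by at most $O(g_t(\unvrs))$, using submodularity together with the concavity of $\phi_q$ to control how many new elements can enter high levels; (ii) every atomic recourse step---insertion, deletion, promotion, or demotion---lowers $\Phi_t$ by at least a positive constant. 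Together these give an amortized recourse of $O(g_t(\unvrs))$ per arrival, with departures charged against credits banked at the matching arrival. The competitive ratio $O(\log \fmax)$ would then follow from a standard dual-fitting argument: the level invariants imply each element of $S_t$ witnesses residual marginal coverage that no OPT element can supply more cheaply, and summing the per-level bounds telescopes into the harmonic guarantee.

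The main obstacle I expect is property (ii): to show that every recourse operation releases a full positive unit of potential, one needs a delicate \emph{mutual-coverage} inequality---a submodular analogue of inclusion-exclusion---guaranteeing that demoting or replacing $e$ genuinely frees marginal ``room'' elsewhere in the universe, rather than merely redistributing coverage within the current solution. Calibrating the Tsallis exponent $q$ and the level-band widths so that arrivals, departures, and inter-level movements can all be amortized against the same $\Phi_t$ without compounding constants is the most subtle part of the argument, and is presumably where the specific choice of Tsallis (rather than Shannon) entropy pays off for general submodular $f$.
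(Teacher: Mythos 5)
Your high-level intuition---that a Tsallis-style entropy potential should certify low recourse---is right, but both main structural moves in the proposal diverge from what actually works, and I believe the first one is a genuine gap.

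\textbf{The cost-bucketing reduction does not go through.} You propose to partition $\unvrs$ by cost scale into $O(\log(\cmax/\cmin))$ groups and run an independent unit-cost algorithm on each, multiplying the recourse by the number of groups. But for a general monotone submodular $f$ there is no natural way to split the covering \emph{responsibility} across cost buckets: if a single bucket cannot cover $f(\unvrs)$ on its own (which is typical), its instance of unit-cost \subcov is infeasible, and if you instead ``cascade'' residual coverage between buckets the copies are no longer independent, so neither the recourse nor the competitiveness bound composes additively. The paper avoids this entirely: it runs a \emph{single} permutation-based algorithm on all of $\unvrs$ with the cost folded directly into the density $\mff_\pi(\pi_i) = f(\pi_i \mid \pi_{1:i-1})/c(\pi_i)$, and the $\ln(\cmax/\cmin)$ factor emerges purely from tuning the Tsallis exponent, setting $h(x)=x^{1-\delta}/(1-\delta)$ with $\delta = (\ln(\cmax/\cmin)+1)^{-1}$, so that the ratio $h(\fmin/\cmax)/h(\fmin/\cmin)=(\cmax/\cmin)^{1-\delta}$ is only an $O(1)$ overhead.

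\textbf{The unit-cost subroutine and its potential are also off-target.} Your level-based scheme with per-element levels $\ell(e)$ and repair-by-demotion is essentially the token/assignment machinery of \cite{Gupta:2017:ODA:3055399.3055493}, which the paper explicitly says breaks down outside set coverage because there are no longer discrete ``edges'' whose responsibility you can track and reassign---the mutual-coverage inequality you flag as the main obstacle (property (ii)) is indeed where that route fails for general submodular $f$. The paper instead maintains a full permutation $\pi$ of $\unvrs$ (a stack trace of greedy) with swap and $\gamma$-move local improvements, and the potential is the Tsallis entropy of the \emph{coverage vector} itself, $\Phi_\alpha = \sum_i (\mff_\pi(\pi_i))^\alpha$, not a weighted count of elements per level. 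Note that if you write the paper's potential in your level coordinates $\ell = \log_2 \mff_\pi(\pi_i)$, the per-level weight is $2^{\alpha\ell}$, which is \emph{convex} in $\ell$, whereas your $\phi_q(\ell)$ is described as a concave smoothing of $\ell$; that sign mismatch means your (i)/(ii) trade-off cannot calibrate the way the paper's does. Concavity needs to live in the coverage argument $x$, not the level index $\ell$.

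=== END REVIEW ===
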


Let us parse this result. Firstly, the approximation factor almost
matches Wolsey's result up to the multiplicative factor of $e^2$; this
is best possible in polynomial time unless $\P = \NP$ even in the
offline setting. Secondly, the amortized recourse bound should be thought of as
being logarithmic---indeed, specializing it to \hitset where each
$g_t(\unvrs) = 1$, we get a total recourse of
$O(T \log (\cmax/\cmin))$ over $T$ timesteps, or an amortized recourse
of $O(\log (\cmax/\cmin))$ per timestep. Hence this recourse bound is
weaker by a log-of-cost-spread factor, while generalizing to all monotone
submodular functions. Finally, since we are allowed to give richer
functions $g_t$ at each step, it is natural that the recourse
scales with $\sum_t g_t(\unvrs)$, which is the total ``volume'' of
these functions. In particular, this problem captures fully-dynamic $\hvc$ where at each round a batch of $k$ edges appears all at once (this is in contrast to the standard fully-dynamic model where hyperedges appear one at a time). In this case the algorithm may have to buy up to $k = g_t(\unvrs) / \fmin$ new vertices in general to maintain coverage.

We next show that for coverage functions (and hence for the \hvc
problem), a variation on the algorithm from
\Cref{thm:intro-main_weighted_subc} can remove the
log-of-cost-spread factor
in terms of recourse, at the cost of a slightly coarser competitive
ratio.
E.g., 
for \hvc the new competitive ratio corresponds to an $O(\log n)$
guarantee versus an $O(\log D_{\max})$ guarantee, where $D_{\max}$ being the largest degree of any vertex.

\begin{theorem}[Informal]
	\label{thm:intro-main_weighted_3incr}
	There is a deterministic algorithm that maintains an $O(\log f(\unvrs))$- competitive solution to \subcov in the fully-dynamic setting where functions arrive/depart over time, and each function is $3$-increasing in addition to monotone and submodular. Furthermore, this algorithm has total recourse:
	\[ O\bigg(\sum_t  g_t(\unvrs)\bigg). \]
\end{theorem}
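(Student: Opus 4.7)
The plan is to adapt the algorithmic framework used for Theorem~\ref{thm:intro-main_weighted_subc} so as to eliminate the $\log(\cmax/\cmin)$ factor in recourse, at the price of replacing $\log \fmax$ by $\log f(\unvrs)$ in the competitive ratio. I expect the algorithm of Theorem~\ref{thm:intro-main_weighted_subc} to be built on a multi-scale decomposition in which elements are bucketed by cost into $O(\log(\cmax/\cmin))$ classes, each maintained by a density-based greedy thresholding procedure; the recourse incurred by a single function change propagates through every scale, which is exactly what produces the extra logarithmic factor. My first step is therefore to collapse these scales into a single potential-based greedy that selects any element whose \emph{density} (marginal divided by cost) exceeds a current global threshold, and slides this threshold down smoothly as uncovered value decreases. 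The competitive ratio becomes $O(\log f(\unvrs))$ because the analysis now charges LP duals against a single global threshold sweep rather than a per-scale one, analogous to the slack we take when passing from a bucketed greedy to a vanilla one for \hvc.

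Next I would give the recourse analysis, which is the heart of the argument and where the $3$-increasing hypothesis enters. The plan is to design a potential $\Phi^{(t)}$ of the form $\Phi^{(t)} = \sum_{u \in S_t} c(u)\,\Psi(\mutcov^{(t)}(u))$, where $\mutcov^{(t)}(u)$ is the Mutual Coverage of $u$ with respect to $\xf{t}$ and the current solution, and $\Psi$ is a suitably chosen concave function (a Tsallis-style variant rescaled to have no dependence on $\cmax/\cmin$). The $3$-increasing property, i.e.\ $\Delta_a \Delta_b \Delta_c\, g \geq 0$, gives us a quantitative monotonicity of mutual coverage: when any element is removed from $S_t$ or a function is deleted from $G^{(t)}$, the mutual-coverage values of other elements move in a controlled way rather than spiking. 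This lets me show the two key amortized inequalities: (i)~whenever the algorithm adds a new element $u$, the cost $c(u)$ is paid for by a proportional decrease in $\Phi$; and (ii)~whenever $g_t$ arrives or departs, $\Phi$ can increase by at most $O(g_t(\unvrs))$ in total. Together with $\Phi \geq 0$, a telescoping argument then gives total recourse $O(\sum_t g_t(\unvrs))$.

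The main obstacle will be the second inequality, (ii). For general monotone submodular functions the removal of a $g_t$ can cause coordinated jumps in the marginals of every surviving element simultaneously, which is exactly why Theorem~\ref{thm:intro-main_weighted_subc} pays the extra $\log(\cmax/\cmin)$ factor through its scale-by-scale accounting. The $3$-increasing assumption is precisely what rules out this bad case: intuitively, it forces the ``cross-interactions'' among functions to be nonnegative, so the loss of one function can only strictly decrease (never re-distribute upward) the mutual-coverage contributions of the remaining ones. Formalizing this will require a discrete Taylor-type inequality for $3$-increasing functions, bounding $\sum_{u \in S_t}\bigl(\mutcov^{(t+1)}(u) - \mutcov^{(t)}(u)\bigr)_+$ by a linear function of $g_t(\unvrs)$, and then feeding this bound through the concavity of~$\Psi$.

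Finally, I would verify that set-coverage functions (and hence \hvc) are $3$-increasing by a direct inclusion-exclusion check on indicator functions, so that specializing the theorem recovers the known $O(\log n)$-competitive, $O(1)$-amortized-recourse guarantee for fully-dynamic \setcov, serving both as a sanity check and as the tight instance showing the recourse bound is optimal.
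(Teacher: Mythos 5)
Your diagnosis of where the $\log(\cmax/\cmin)$ factor comes from in Theorem~\ref{thm:intro-main_weighted_subc} is not accurate, and this sends the proposal in the wrong direction. The algorithm there does not bucket elements by cost into scales; it keeps a single permutation $\pi$ with density $\mff_\pi(\pi_i)=f(\pi_i\mid\pi_{1:i-1})/c(\pi_i)$, and the log factor appears purely inside the recourse analysis of the potential $\Phi_h(f,\pi)=\sum_i c(\pi_i)\,h(\mff_\pi(\pi_i))$: with $h(x)=x^{1-\delta}/(1-\delta)$, the recourse carries a term $\frac{\cmax}{\cmin}\cdot\frac{h(\fmin/\cmax)}{h(\fmin/\cmin)}$ that is $\Theta(\ln(\cmax/\cmin))$ at the optimal $\delta$. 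There are no scales to collapse. Moreover, your proposed potential $\sum_{u\in S_t} c(u)\,\Psi(\mutcov^{(t)}(u))$ has the same shape as $\Phi_h$, so it would inherit exactly this cost-spread dependence. What the paper actually changes is the density itself: it fixes a second permutation $\psi$ (the cost-sorted order), introduces pairwise mutual affinities $\mfI_{\pi,\psi}(\pi_i,\psi_j)$, and redefines $\mff_\pi(\pi_i)=\sum_j \mfI_{\pi,\psi}(\pi_i,\psi_j)/(c(\pi_i)c(\psi_j))$, which reweights each atom of coverage by the ratio of the cost of the element currently covering it to the cost of the cheapest element that could ever cover it. The resulting potential $\Phi_{1/2}(\pi)=\sum_i c(\pi_i)\sqrt{\mff_\pi(\pi_i)}$ then has per-insertion increase $g(\unvrs)/\sqrt{\fmin}$ and per-$\gamma$-move decrease $\Omega(\sqrt{\fmin})$ with no dependence on $\cmax/\cmin$. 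Without this reweighting idea, I do not see how your construction escapes the log factor.

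You also misidentify where $3$-increasingness is used. The claim that deletions of functions never increase the potential is plain monotonicity of the square root and needs no third-derivative hypothesis, so that is not where the argument ``would fail without it.'' The $3$-increasing assumption enters elsewhere: (a)~during $\gamma$-moves, to guarantee the loss terms $a_{ij}=\mfI_\pi(\pi_i,\psi_j)-\mfI_\pi(\pi_i,\psi_j\mid\{u\})$ are nonnegative so the concavity step applies; (b)~similarly in the swap analysis; and (c)~most importantly, in the competitive-ratio proof, to argue that if inserting \emph{all} remaining optimal elements ahead of a level would create a high-value element, then inserting a \emph{single} one already constitutes a legal $\gamma$-move, which uses $\mutcov_f(a,b)-\mutcov_f(a,b\mid c)\ge 0$. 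Your framing---that $3$-increasingness controls redistribution of marginals upon deletion---targets the easy part of the argument and misses the three places where the hypothesis actually does the work.
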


Indeed, this result holds not just for coverage functions, but for the broader class of
\emph{$3$-increasing, monotone, submodular functions}
\cite{foldes2005submodularity}, which are the functions we have been
considering, with the additional property that
have positive discrete mixed third-derivatives. At a high level, these
are functions where the mutual coverage does not increase upon conditioning.

\subsection{Techniques and Overview}
\label{sec:techniques}

The most widely known algorithm for \subcov is the greedy algorithm;
this repeatedly adds to the solution an element maximizing the ratio
of its marginal coverage to its cost. It is natural to try to use the
greedy algorithm in our dynamic setting; the main issue is that
out-of-the-box, greedy may completely change its solution between time
steps. In their result on recourse-bounded \hvc,
\cite{Gupta:2017:ODA:3055399.3055493} showed how to effectively
imitate the greedy algorithm without sacrificing more than a small
amount of recourse. A barrier to making greedy algorithms dynamic is
their sequential nature, and hyperedge inserts/deletes can play havoc
with this. So they give a local-search algorithm that skirts this
sequential nature, while simultaneously retaining the key properties
of the greedy algorithm. Unfortunately, their analysis hinges on
delicately assigning edges to vertices. In the more
general \subcov setting, there are no
edges to track, and this approach breaks down.

Our first insight is to return to the sequential nature of the greedy
algorithm. Our algorithm maintains an ordering $\pi$ on the elements
of $\unvrs$, which induces a solution to the \subcov problem:
we define the output of the algorithm
as the prefix of elements that have non-zero marginal value. We
maintain this permutation dynamically via local updates, and argue
that only a small amount of recourse is necessary to ensure the
solution is competitive. To bound the competitive ratio, we imagine
that the permutation corresponds to the order in which an auxiliary
offline algorithm selects elements, i.e. a \emph{stack trace}. We show
that our local updates maintain the invariant that this is the stack
trace of an approximate greedy algorithm for the currently active set of
functions. We hope that this general framework of doing local search
\textit{on the stack trace of an auxiliary algorithm} finds uses in
other online/dynamic algorithms.

Our main technical contribution is to give a potential function to
argue that our algorithm needs bounded recourse. The potential
measures the \emph{(generalized) entropy} of the \emph{coverage vector} of
the permutation $\pi$. This coverage vector is indexed by elements of
the universe $\unvrs$, and the value of each coordinate is the
marginal coverage (according to permutation $\pi$) of the
corresponding element. Entropy is often used as a potential function
(notably, in recent developments in online algorithms) but in a
qualitatively different way. In many if not all cases, these
algorithms follow the maximum-entropy principle and seek \textit{high
  entropy} solutions subject to feasibility constraints; the potential
function then tracks the convergence to this goal. On the other hand,
in our setting the cost function is the support size of the coverage
vector, and minimizing this roughly corresponds to \textit{low
  entropy}. We show entropy decreases sufficiently quickly with every
change performed during the local search, and increases by a
controlled amount with insertion/deletion of each function $g$, thus
proving our recourse bounds.

We find our choice of entropy to be interesting for several
reasons. For one, we use a suitably chosen \emph{Tsallis
  entropy}~\cite{tsallis1988possible}, which is a general parametrized
family of entropies, instead of the classical Shannon entropy. The
latter also yields recourse bounds for our problem, but they are
substantially weaker (see \cref{sec:shannonentropy}). Tsallis entropy
has appeared in several recent algorithmic areas, for example as a
regularizer in bandit settings~\cite{seldin2014one}, and as an
approximation to Shannon entropy in streaming
contexts~\cite{harvey2008sketching}. Secondly, it is well known that
for certain problems, minimizing an $\ell_1$-objective is an effective
proxy for minimizing sparsity~\cite{candes2006stable}. In our
dynamics, the $\ell_1$ mass stays constant since total coverage does
not change when elements are
reordered. However, entropy is a good proxy for sparsity, in the sense 
that it decreases monotonically (and quickly!) as our algorithm moves within 
the $\ell_1$ level set towards sparse vectors.

In \cref{sec:unweightedsubmod}, we study the unit cost case to lay out
our framework and highlight the main ideas. We show a $\log \fmax /
\fmin$ competitive algorithm for fully-dynamic \subcov with
$O(\sum_t g_t(\unvrs) / \fmin)$ total recourse. Then in \cref{sec:weightedsubmod}, we turn to general cost functions. We again show a $\log (\fmax / \fmin)$ competitive algorithm, this time with $\log (\cmax / \cmin) \cdot \sum_t g_t (\unvrs) / \fmin$ recourse. The algorithm template is the same as before, but with a suitably generalized potential function and analysis. Here we also require a careful choice of the Tsallis entropy parameter, near (but not quite at) the limit point at which Tsallis entropy becomes Shannon entropy. 

In \cref{sec:3incr}, we show how to remove the $\log (\cmax / \cmin)$
dependence and achieve optimal recourse for a structured family of
monotone, submodular functions: the class of \emph{$3$-increasing (monotone, submodular) set functions}
\cite{foldes2005submodularity}. These are set functions, all of whose
discrete mixed third derivatives are nonnegative everywhere. Submodular
functions in this class include \textit{measure coverage functions},
which generalize set coverage functions, as well as entropies for
distributions with positive interaction information (see, e.g., \cite{jakulin2005machine} for a discussion). Since this class includes \setcov, this recovers the optimal $O(1)$ recourse bound of \cite{Gupta:2017:ODA:3055399.3055493}. For this result we use a more interesting generalization of the potential function from \cref{sec:unweightedsubmod} that reweights the coverage of elements in the permutation non-uniformly as a function of their mutual coverage with other elements of the permutation.

In \cref{sec:rjuntas}, we show how to get improved randomized algorithms when the functions $g_t$ are assumed to be $r$-juntas. This is in analogy to approximation algorithms for \setcov under the frequency parametrization. In \cref{sec:combiner}, we also show how to run an online ``combiner'' algorithm that gets the best of all worlds, with a competitive ratio of $O(\min(r, \log \fmax / \fmin))$.  Finally, in \cref{sec:furtherapps}, we demonstrate the generality of our framework by using it to recover known recourse bounded algorithms for fully-dynamic \mst and \mstt. These achieve $O(1)$ competitive ratios, and recourse bounds of $O(\log D)$ where $D$ is the aspect ratio of the metric. Our proofs here are particularly simple and concise.

\subsection{Related Work}
\label{sec:related-work}

\textbf{Submodular Cover.}
While we introduced the problem for integer-valued functions, all
results can be extended to real-valued settings by adding a dependence
on $\fmin$, the smallest marginal value.
Wolsey~\cite{wolsey1982analysis} showed that the greedy algorithm, repeatedly selecting the element maximizing marginal coverage to cost ratio, gives a $1 + \ln (\fmax / \fmin)$ approximation for \subcov; this guarantee is tight unless $\P = \NP$ even for the special case of \setcov \cite{feige1998threshold,Dinur:2014:AAP:2591796.2591884}. Fujito~\cite{fujito2000approximation}
gave a dual greedy algorithm that generalizes the
$F$-approximation for \setcov~\cite{doi:10.1137/0211045} where $F$ is
the maximum-frequency of an element. 

\subcov has been used in many applications to resource allocation and placement
problems, by showing that the coverage function is monotone and submodular,
and then applying Wolsey's greedy algorithm as a black box. 
We automatically port these applications to the fully-dynamic setting where coverage
requirements change with time. 
E.g., in selecting influential nodes to disseminate information in social networks \cite{Goyal2013, loukides2016limiting, tong2017positive, IZUMI20102773}, exploration for robotics planning problems \cite{krause2008robust, jorgensen2017risk, beinhofer2013robust}, placing sensors 
\cite{WU201553, rahimian2015detection, zheng2017trading, 7504484}, and
other physical resource allocation objectives \cite{6996018, 7798894,
	tzoumas2016minimal}. The networking community has been particularly interested in \subcov
recently, since \subcov models network function placement tasks 
\cite{Andreev:2009:SSL:1644015.1644031, Lee:2013:FPC:2523616.2525960,
	kortsarz2015approximating, LUKOVSZKI2018159, chen2018virtual}. E.g., 
\cite{LUKOVSZKI2018159} want to place middleboxes in a
network incrementally, and point out that avoiding closing extant boxes is a huge boon in practice.

The definition of \emph{$m$-increasing functions} is due to Foldes and
Hammer~\cite{foldes2005submodularity}. Bach~\cite{bach} gave a characterization of the class of \textit{measure coverage functions} (which we define later) in terms of its iterated discrete derivatives. This class generalizes coverage functions, and is contained in the class of $3$-increasing functions. 	\cite{iyer2020submodular} give several additional examples of $3$-increasing functions. Several papers \cite{iyer2020submodular, 8619396, wang2015accelerated, wang2013weapon} have given algorithms specifically for $3$-increasing submodular function optimization.

\textbf{Online and Dynamic Algorithms.}
There is a still budding series of work on recourse-bounded algorithms. Besides \cite{Gupta:2017:ODA:3055399.3055493} which is most directly related to our work, researchers have studied the Steiner Tree problem \cite{imase1991dynamic, gupta2014online, gu2016power, lkacki2015power}, clustering \cite{guo2020power, cohen2019fully}, matching \cite{grove1995online,chaudhuri2009online, bosek2014online}, and scheduling \cite{phillips1993online, westbrook2000load, andrews1999improved, sanders2009online, skutella2010robust, epstein2014robust, gupta2014maintaining}.

A rich parallel line of work has studied how to minimize update time
for problems in the dynamic or fully-dynamic setting. In
\cite{Gupta:2017:ODA:3055399.3055493}, the authors give an $O(\log n)$
competitive and $O(F \log n)$ update time algorithm for fully-dynamic \hvc. An ongoing
program of research for the frequency parametrization of \hvc
\cite{bhattacharya2018deterministic, bhattacharya2017deterministic,
  bhattacharya2019deterministically, abboud2019dynamic,
  bhattacharya2019new, bhattacharya2020improved} has so far culminated
in an ${F(1+\eps)}$ competitive algorithm with $\poly(F, \log \cmax / \cmin,
1/\eps)$ update time (where $F$ is the frequency).

In recent work, \cite{onlinesubmod} studied the problem of maintaining
a feasible solution to \subcov in a related online model. That setting
is an insertion-only irrevocable analog of this work, where functions
$g_t$ may never leave the active set. Our results can be seen as an
extension/improvement when recourse is allowed: not only can our
algorithm handle the fully-dynamic case with insertions and deletions,
but we improve the competitive ratio from $O(\log n \cdot \log \fmax /
\fmin)$ to $O(\log \fmax / \fmin)$, which is best possible even in the
offline case.

Our work is related to work on convex body chasing (e.g.,
\cite{argue2020chasing,sellke2020chasing}) in spirit but not in
techniques. For an online/dynamic covering problems, the set of
feasible fractional solutions within distance $\alpha$ of the optimal
solution at a given time step form a convex set: our goal is similarly
to ``chase'' these convex bodies, while limiting the total movement
traversed. The main difference is that we seek \emph{absolute bounds}
on the recourse, instead of recourse that is competitive with the
optimal chasing solution. (We can give such bounds because our
feasible regions are structured and not arbitrary convex bodies).

\subsection{Notation and Preliminaries}
\label{subsec:notation}

A set function $f: 2^{\unvrs} \rightarrow \mathbb{R}^+$
is \textit{submodular} if $f(A \cap B) + f(A \cup B) \leq f(A) + f(B)$
for any $A, B \subseteq \unvrs$.  It is \textit{monotone} if $f(A) \leq f(B)$
for all $A \subseteq B \subseteq \unvrs$. We assume access to a \emph{value
	oracle} for $f$ that computes $f(T)$ given $T \subseteq \unvrs$. The
\textit{contraction} of $f: 2^{\unvrs} \rightarrow \mathbb{R}^+$
onto $\unvrs\setminus T$ is defined as
$f_T(S) = f(S \mid T) := f(S \cup T) - f(T)$. 
If $f$ is submodular then $f_T$ is also submodular for any
$T \subseteq \unvrs$. We use the following notation.
\begin{align}
\fmax^{(t)} &:= \displaystyle \max \{f^{(t)}(j) \mid j \in N \},\\
\fmin^{(t)} &:= \displaystyle \min \{f^{(t)}(j \mid S) \mid j \in N, \ S \subseteq \unvrs, \ f^{(t)}(j \mid S) \neq 0\}. \\
\fmax &:= \displaystyle \max_t \fmax^{(t)}, \label{line:fmaxdef}\\
\fmin &:= \displaystyle \min_t \fmin^{(t)}. \label{line:fmindef}
\end{align}
Also we let $\cmax$ and $\cmin$ denote the largest and smallest costs of elements respectively. 

We will sometime use the simple and well known inequalities:
\begin{fact}
	\label{fact:averaging}
	Given positive numbers $a_1, \ldots, a_k$ and $b_1, \ldots, b_k$:
	\begin{align}
	\min_i \frac{a_i}{b_i} \leq \frac{\sum_i a_i}{\sum_i b_i} \leq \max_i \frac{a_i}{b_i}
	\end{align}
\end{fact}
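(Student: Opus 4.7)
The plan is to reduce both inequalities to a single elementary observation: if $b_i > 0$ and $a_i / b_i \geq r$ for every $i$, then $a_i \geq r\, b_i$, and this coordinatewise inequality is preserved under summation. Concretely, I would let $r := \min_i a_i/b_i$ and $R := \max_i a_i/b_i$. By the definition of $r$, we have $a_i \geq r\, b_i$ for every index $i$; summing over $i$ yields $\sum_i a_i \geq r \sum_i b_i$, and dividing by the positive quantity $\sum_i b_i$ gives the lower bound $\sum_i a_i / \sum_i b_i \geq r$. The upper bound follows by the completely symmetric argument, applying $a_i \leq R\, b_i$ for every $i$ and summing.

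Both directions thus reduce to the same one-line linear-combination argument, so there is essentially no obstacle here. The only hypothesis that deserves flagging is positivity of the $b_i$, which is used both to cross-multiply at the elementwise step (going from $a_i/b_i \geq r$ to $a_i \geq r b_i$ without flipping the inequality) and to divide at the end by $\sum_i b_i$; the statement explicitly supplies this. An equivalent route would be induction on $k$, with the base case $k=1$ a tautology and the inductive step reducing to the classical two-term mediant inequality $\min(a_1/b_1, a_2/b_2) \leq (a_1+a_2)/(b_1+b_2) \leq \max(a_1/b_1, a_2/b_2)$ by grouping pairs. The direct summation argument is cleaner, though, and makes the ``weighted average'' nature of the middle expression transparent, since $\sum_i a_i / \sum_i b_i$ is exactly the $b_i$-weighted mean of the ratios $a_i/b_i$ and therefore must lie in the convex hull of those ratios.
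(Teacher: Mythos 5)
Your argument is correct and is the standard one: setting $r = \min_i a_i/b_i$, each $a_i \geq r\,b_i$, summing, and dividing by the positive $\sum_i b_i$; likewise for the maximum. The paper records this as a ``simple and well known'' fact without supplying a proof, so there is nothing to compare against — your proof fills in exactly the elementary argument one would expect.
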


Throughout this paper, we will use the convention that $1:k$ denotes that range of indices from $1$ to $k$.

\textbf{Mutual Coverage.} We will use the notion of mutual coverage defined in \cite{onlinesubmod}. Independently, \cite{iyer2020submodular} defined and studied the same quantity under the slightly different name \textit{submodular mutual information}.
 
\begin{definition}[Mutual Coverage]
	\label{def:mutualCoverage}
	The \emph{mutual coverage} and \emph{conditional mutual
		coverage} with respect to a set function
	$f: 2^{\unvrs} \rightarrow \mathbb{R}^+$ are
	defined as:
	\begin{align}
	\mutcov_f(A ; B) &:= f(A) + f(B) - f(A \cup B), \label{eq:mutcov} \\
	\mutcov_f(A ; B \mid C) &:= f_C(A) + f_C(B) - f_C(A \cup B). \label{eq:cond-mutcov}
	\end{align}
\end{definition}
We may think of $\mutcov_f{(A ; B \mid C)}$ intuitively as being the
amount of coverage $B$ ``takes away'' from the coverage of $A$ (or
vice-versa, since the definition is symmetric in $A$ and $B$), given
that $C$ was already chosen. This generalizes the notion of \emph{mutual information}
from information theory: if $\unvrs$ is a set of random variables, and 
$S\subseteq \unvrs$, and if $f(S)$ denotes  the joint entropy of the random
variables in the set $S$, then $\mutcov$ is the mutual information.

\begin{fact}[Chain Rule]
	\label{fact:chainRule}
	Mutual coverage respects the identity:
	\begin{align*}
	\mutcov_f(A ; B_1 \cup B_2 \mid C) = \mutcov_f(A; B_1 \mid C) + \mutcov_f(A; B_2 \mid C \cup B_1).
	\end{align*}	
\end{fact}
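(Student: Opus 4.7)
The plan is to prove the identity by direct algebraic expansion, since the chain rule is a pure syntactic consequence of the definitions of mutual coverage and contraction — no submodularity or monotonicity is needed. I would unfold every term until both sides are expressed only in terms of evaluations of $f$ on unions involving $A$, $B_1$, $B_2$, $C$, and then check that they coincide.

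Concretely, first I would expand the left-hand side $\mutcov_f(A; B_1 \cup B_2 \mid C) = f_C(A) + f_C(B_1 \cup B_2) - f_C(A \cup B_1 \cup B_2)$ using $f_C(S) = f(S \cup C) - f(C)$, to obtain
\[
f(A \cup C) + f(B_1 \cup B_2 \cup C) - f(A \cup B_1 \cup B_2 \cup C) - f(C).
\]
Next I would expand the two summands on the right-hand side: the first piece $\mutcov_f(A; B_1 \mid C)$ becomes $f(A \cup C) + f(B_1 \cup C) - f(A \cup B_1 \cup C) - f(C)$, and the second piece $\mutcov_f(A; B_2 \mid C \cup B_1)$, upon using $f_{C \cup B_1}(S) = f(S \cup C \cup B_1) - f(C \cup B_1)$, becomes $f(A \cup B_1 \cup C) + f(B_1 \cup B_2 \cup C) - f(A \cup B_1 \cup B_2 \cup C) - f(C \cup B_1)$.

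Adding these two expressions, the term $-f(A \cup B_1 \cup C)$ from the first piece cancels with the $+f(A \cup B_1 \cup C)$ from the second piece, and the term $+f(B_1 \cup C)$ cancels with $-f(C \cup B_1)$. What remains is exactly
\[
f(A \cup C) + f(B_1 \cup B_2 \cup C) - f(A \cup B_1 \cup B_2 \cup C) - f(C),
\]
matching the left-hand side term for term. There is no real obstacle here: the only care required is to bookkeep the eight $f(\cdot)$ evaluations and verify the pairwise cancellations. I would present the proof as a two-line computation with an intermediate display for each side, followed by the remark that the identity holds for arbitrary set functions $f$ and is precisely the analogue of the classical chain rule for mutual information under the correspondence described after \Cref{def:mutualCoverage}.
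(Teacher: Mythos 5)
Your proof is correct, and since the paper states \Cref{fact:chainRule} as a bare Fact without supplying a proof, your direct expansion is the natural argument it implicitly relies on. The eight $f(\cdot)$ evaluations and the two cancellations ($-f(A\cup B_1\cup C)$ with $+f(A\cup B_1\cup C)$, and $+f(B_1\cup C)$ with $-f(C\cup B_1)$) all check out, and you correctly observe that no submodularity or monotonicity is needed — the identity holds for arbitrary set functions.
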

This neatly generalizes the chain rule for mutual information.


\section{Unit Cost Submodular Cover}
\label{sec:unweightedsubmod}

\subsection{The Algorithm}

We now present our first algorithm for unit-cost fully-dynamic \subcov. We will show the following:
\begin{theorem}
	\label{thm:main_subc}
	For any $\gamma > e$, there is a deterministic algorithm that maintains a $\gamma(\log \fmax^{(t)}/\fmin^{(t)} + 1)$-competitive solution to unweighted \subcov in the setting where functions arrive/depart over time. Furthermore, this algorithm has total recourse:
	\[ 2 \cdot \frac{e \ln \gamma}{\gamma - e \ln \gamma} \cdot \frac{\sum_t g_t(\unvrs)}{\fmin} = O\left(\frac{\sum_t g_t(\unvrs)}{\fmin}\right).\]
\end{theorem}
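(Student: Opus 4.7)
The plan is to maintain a permutation $\pi$ of $\unvrs$ whose output solution at each time is the prefix of elements with strictly positive marginal coverage under $f^{(t)}$. I would enforce an \emph{approximate-greedy invariant}: for every position $i$ in the current prefix, $f^{(t)}(\pi_i \mid \pi_{1:i-1}) \geq \frac{1}{\gamma} \max_{j > i} f^{(t)}(\pi_j \mid \pi_{1:i-1})$. On each arrival or departure of some $g_t$, I would run a local repair loop: while there exists a position $i$ witnessing a violation, splice a violating later element into position $i$, counting one unit of recourse for each such promotion. Once $f^{(t)}$ saturates at some prefix, positions past it have zero marginal and do not appear in the output.

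\textbf{Competitive ratio.} Once the repair loop halts, the prefix of $\pi$ is exactly the stack trace of a $\gamma$-approximate greedy algorithm for $f^{(t)}$: each chosen element has marginal at least $\frac{1}{\gamma}$ times the best remaining marginal. A standard Wolsey-style telescoping argument lifted mutatis mutandis to $\gamma$-approximate choices yields the claimed $\gamma(1+\ln(\fmax^{(t)}/\fmin^{(t)}))$ competitive bound.

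\textbf{Recourse via Tsallis entropy.} The substantive work is bounding total recourse. I would define the \emph{coverage vector} $c(\pi) \in \R_{\geq 0}^{|\unvrs|}$ by $c_i(\pi) := f^{(t)}(\pi_i \mid \pi_{1:i-1})$, and use as a potential a Tsallis-style entropy $\Phi(\pi) := \sum_i \phi(c_i(\pi))$ for a concave increasing $\phi$ with $\phi(0)=0$, morally $\phi(x) \propto x - x^q$ for some Tsallis parameter $q$ close to $1$ that will be tuned against $\gamma$. Two lemmas would drive the bound. First, \emph{each promotion decreases $\Phi$ by at least a fixed constant $\beta$}: if the swap at position $i$ pulls up an element with marginal exceeding $\gamma$ times $c_i$, then by submodularity the coverage mass on the affected positions reshuffles from a more uniform to a more concentrated distribution, and concavity of $\phi$ forces a drop whose worst case works out to $\beta \propto \gamma - e\ln\gamma$, which is positive precisely when $\gamma > e$. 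Second, \emph{each arrival or departure of $g_t$ increases $\Phi$ by at most $\alpha \cdot g_t(\unvrs)/\fmin$}: the coverage vector changes only through marginals of $g_t$, and the chain rule (\Cref{fact:chainRule}) applied to $\mutcov_{g_t}$ telescopes these changes to something bounded by $g_t(\unvrs)$; dividing by $\fmin$ converts this coverage bound into a bound on the number of coordinates affected, yielding $\alpha$ of order $e\ln\gamma$.

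\textbf{Putting it together and the main obstacle.} Summing the second lemma over time gives a total $\Phi$-increase of at most $\alpha \sum_t g_t(\unvrs)/\fmin$; since each promotion pays $\beta$ and $\Phi \geq 0$, the total number of promotions is at most $(\alpha/\beta)\sum_t g_t(\unvrs)/\fmin$, and a factor of $2$ in the final bound accounts for the symmetric-difference accounting of recourse per swap. Matching constants, this reproduces the claimed $\tfrac{2e\ln\gamma}{\gamma - e\ln\gamma}$ coefficient. The hard part will be the first lemma: a promotion from position $j$ into position $i$ simultaneously perturbs every intermediate marginal $c_i,\ldots,c_j$ via the chain rule, so the net change in $\Phi$ is a sum of signed $\phi$-differences along the cascade, and squeezing out a uniform positive decrease requires choosing the Tsallis exponent $q$ carefully against $\gamma$. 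This is exactly the place where Shannon entropy is too rigid and Tsallis entropy's extra degree of freedom is needed to absorb the cascade, explaining the $\gamma > e$ threshold.
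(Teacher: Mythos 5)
Your proposal captures the paper's high-level strategy precisely: maintain a permutation whose prefix is the stack trace of an approximate greedy algorithm, bound the competitive ratio via Wolsey's argument applied to $\gamma$-approximate choices, and bound the recourse with a Tsallis-entropy potential $\Phi_\alpha(\pi) = \sum_i \mff_\pi(\pi_i)^\alpha$ that increases controllably on each function arrival and decreases by a fixed amount on each promotion. Your intuitions about the threshold $\gamma>e$, the role of concavity, and the $\fmin$ accounting are all consistent with the paper, which sets $\alpha = (\ln\gamma)^{-1}$ and proves exactly the increase/decrease lemmas you sketch.

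There is, however, a concrete gap in the step you flag as the hard part. Your repair loop promotes $\pi_j$ from position $j$ into position $i$ as soon as it violates the approximate-greedy invariant, i.e.\ as soon as $f(\pi_j \mid \pi_{1:i-1}) > \gamma\, f(\pi_i \mid \pi_{1:i-1})$. But the potential-drop computation requires the promoted element's new coverage $\sum_k a_k$ (the total mass it steals) to satisfy $\sum_k a_k \geq \gamma\, v_k$ for \emph{every} jumped-over position $k \in \{i, \ldots, j-1\}$, not just for $k = i$: one lower-bounds the term $\sum_k a_k \alpha v_k^{\alpha-1}$ by substituting $v_k \leq (\sum_k a_k)/\gamma$ uniformly, and only then does the sum collapse to $(\sum_k a_k)^\alpha \cdot \alpha\gamma^{1-\alpha}$. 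The invariant violation at position $i$ certifies this only for $k=i$. The paper closes this gap by introducing a second, recourse-free local move: adjacent bubble-sort \emph{swaps} that keep $\pi$ sorted in non-increasing order of $\mff_\pi$, and a separate (easy, but necessary) argument that such swaps never increase $\Phi_\alpha$, again by concavity. Once sortedness is in force, $v_k \leq v_i$ for all $k \geq i$, so any violation at $i$ automatically yields a legal $\gamma$-move dominating every intermediate position. Without this extra move type (or some surrogate ensuring sortedness), your cascade argument does not close, and the potential could even increase. The rest of your plan — telescoping the total $\Delta$-mass of an arriving $g_t$ to $g_t(\unvrs)$, dividing by $\fmin^{1-\alpha}$, and amortizing with a factor $2$ for enter/leave — lines up with the paper.
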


The algorithm and its analysis are particularly clean; we will build
on these in the following sections.
We begin by describing the algorithm. We maintain a permutation $\pi$
of the elements in $\unvrs$, and assign to each element its marginal
coverage given what precedes it in the permutation. We write this
marginal value 
assigned to element $\pi_i$ as 
\begin{gather}
  \mff_\pi(\pi_i) := f(\pi_i \mid \pi_{1:i-1}). \label{eq:marg-1}
\end{gather}

We consider two kinds of local search moves:
\begin{enumerate}
	\item \textbf{Swaps:} transform $\pi$ to $\pi'$ by moving an element at position $i$ to position $i-1$ on the condition that $\mff(\pi_i) \geq \mff(\pi_{i-1})$. In words, this is a \emph{bubble} operation (as in bubble-sort).
	\item  \textbf{$\gamma$-moves}: transform the permutation $\pi$ to $\pi'$ by moving an element $u$ from a position $q$ to some other position $p < q$ on the condition that for all $i \in \{p , \ldots, q-1\}$, 
	\[ \mff_{\pi'}(\pi'_p) \geq
	\gamma \cdot \mff_\pi (\pi_i). \]
	In words, when $u$ moves ahead in line, it ``steals'' coverage from other elements along the way; we
	require that the amount covered by $u$ \textit{after the
		move} (which is given by $\mff_{\pi'}(\pi'_p)$ since $u$ now sits at position
	$p$) to be at least a
	$\gamma$ factor larger than the coverage \textit{before the move} of any element that $u$ jumps over. (See~\Cref{fig:setcover}.)
\end{enumerate}

\begin{figure}
	\begin{mdframed}
	\tikzset{every picture/.style={line width=0.75pt}} 
	\begin{subfigure}{.5\textwidth}
		\centering
		\tikzset{every picture/.style={line width=0.75pt}} 

\begin{tikzpicture}[x=0.75pt,y=0.75pt,yscale=-0.6,xscale=0.8]

\draw   (100,50) -- (130,50) -- (130,260) -- (100,260) -- cycle ;
\draw   (140,60) -- (170,60) -- (170,260) -- (140,260) -- cycle ;
\draw  [fill={rgb, 255:red, 255; green, 0; blue, 0 }  ,fill opacity=0.6 ] (180,180) -- (210,180) -- (210,260) -- (180,260) -- cycle ;
\draw  [fill={rgb, 255:red, 0; green, 255; blue, 0 }  ,fill opacity=0.6 ] (220,200) -- (250,200) -- (250,260) -- (220,260) -- cycle ;
\draw  [fill={rgb, 255:red, 0; green, 0; blue, 255 }  ,fill opacity=0.6 ] (260,210) -- (290,210) -- (290,260) -- (260,260) -- cycle ;
\draw  [fill={rgb, 255:red, 255; green, 163; blue, 0 }  ,fill opacity=0.6 ] (300,220) -- (330,220) -- (330,260) -- (300,260) -- cycle ;
\draw  [fill={rgb, 255:red, 0; green, 0; blue, 0 }  ,fill opacity=0.6 ] (340,230) -- (370,230) -- (370,260) -- (340,260) -- cycle ;
\draw   (380,240) -- (410,240) -- (410,260) -- (380,260) -- cycle ;
\draw    (360,210) .. controls (338.61,104.53) and (228.61,115.88) .. (180.72,169.19) ;
\draw [shift={(180,170)}, rotate = 311.34000000000003] [color={rgb, 255:red, 0; green, 0; blue, 0 }  ][line width=0.75]    (10.93,-3.29) .. controls (6.95,-1.4) and (3.31,-0.3) .. (0,0) .. controls (3.31,0.3) and (6.95,1.4) .. (10.93,3.29)   ;

\draw (190,287) node [anchor=south west][inner sep=0.75pt]    {$a$};
\draw (230,287) node [anchor=south west][inner sep=0.75pt]    {$b$};
\draw (270,287) node [anchor=south west][inner sep=0.75pt]    {$c$};
\draw (309,287) node [anchor=south west][inner sep=0.75pt]    {$d$};
\draw (349,287) node [anchor=south west][inner sep=0.75pt]    {$e$};

\end{tikzpicture}
		\caption{Before Move}
	\end{subfigure}%
	\vline
	\begin{subfigure}{.5\textwidth}
		\centering

\tikzset{
	pattern size/.store in=\mcSize, 
	pattern size = 5pt,
	pattern thickness/.store in=\mcThickness, 
	pattern thickness = 0.3pt,
	pattern radius/.store in=\mcRadius, 
	pattern radius = 1pt}
\makeatletter
\pgfutil@ifundefined{pgf@pattern@name@_3phq0nvqd}{
	\pgfdeclarepatternformonly[\mcThickness,\mcSize]{_3phq0nvqd}
	{\pgfqpoint{0pt}{0pt}}
	{\pgfpoint{\mcSize+\mcThickness}{\mcSize+\mcThickness}}
	{\pgfpoint{\mcSize}{\mcSize}}
	{
		\pgfsetcolor{\tikz@pattern@color}
		\pgfsetlinewidth{\mcThickness}
		\pgfpathmoveto{\pgfqpoint{0pt}{0pt}}
		\pgfpathlineto{\pgfpoint{\mcSize+\mcThickness}{\mcSize+\mcThickness}}
		\pgfusepath{stroke}
}}
\makeatother


\tikzset{
	pattern size/.store in=\mcSize, 
	pattern size = 5pt,
	pattern thickness/.store in=\mcThickness, 
	pattern thickness = 0.3pt,
	pattern radius/.store in=\mcRadius, 
	pattern radius = 1pt}
\makeatletter
\pgfutil@ifundefined{pgf@pattern@name@_0m76376h0}{
	\pgfdeclarepatternformonly[\mcThickness,\mcSize]{_0m76376h0}
	{\pgfqpoint{0pt}{0pt}}
	{\pgfpoint{\mcSize+\mcThickness}{\mcSize+\mcThickness}}
	{\pgfpoint{\mcSize}{\mcSize}}
	{
		\pgfsetcolor{\tikz@pattern@color}
		\pgfsetlinewidth{\mcThickness}
		\pgfpathmoveto{\pgfqpoint{0pt}{0pt}}
		\pgfpathlineto{\pgfpoint{\mcSize+\mcThickness}{\mcSize+\mcThickness}}
		\pgfusepath{stroke}
}}
\makeatother


\tikzset{
	pattern size/.store in=\mcSize, 
	pattern size = 5pt,
	pattern thickness/.store in=\mcThickness, 
	pattern thickness = 0.3pt,
	pattern radius/.store in=\mcRadius, 
	pattern radius = 1pt}
\makeatletter
\pgfutil@ifundefined{pgf@pattern@name@_9p0f14ay5}{
	\pgfdeclarepatternformonly[\mcThickness,\mcSize]{_9p0f14ay5}
	{\pgfqpoint{0pt}{0pt}}
	{\pgfpoint{\mcSize+\mcThickness}{\mcSize+\mcThickness}}
	{\pgfpoint{\mcSize}{\mcSize}}
	{
		\pgfsetcolor{\tikz@pattern@color}
		\pgfsetlinewidth{\mcThickness}
		\pgfpathmoveto{\pgfqpoint{0pt}{0pt}}
		\pgfpathlineto{\pgfpoint{\mcSize+\mcThickness}{\mcSize+\mcThickness}}
		\pgfusepath{stroke}
}}
\makeatother


\tikzset{
	pattern size/.store in=\mcSize, 
	pattern size = 5pt,
	pattern thickness/.store in=\mcThickness, 
	pattern thickness = 0.3pt,
	pattern radius/.store in=\mcRadius, 
	pattern radius = 1pt}
\makeatletter
\pgfutil@ifundefined{pgf@pattern@name@_5l0e3qnng}{
	\pgfdeclarepatternformonly[\mcThickness,\mcSize]{_5l0e3qnng}
	{\pgfqpoint{0pt}{0pt}}
	{\pgfpoint{\mcSize+\mcThickness}{\mcSize+\mcThickness}}
	{\pgfpoint{\mcSize}{\mcSize}}
	{
		\pgfsetcolor{\tikz@pattern@color}
		\pgfsetlinewidth{\mcThickness}
		\pgfpathmoveto{\pgfqpoint{0pt}{0pt}}
		\pgfpathlineto{\pgfpoint{\mcSize+\mcThickness}{\mcSize+\mcThickness}}
		\pgfusepath{stroke}
}}
\makeatother
\tikzset{every picture/.style={line width=0.75pt}} 

\begin{tikzpicture}[x=0.75pt,y=0.75pt,yscale=-0.6,xscale=0.8]

\draw   (140,80) -- (170,80) -- (170,290) -- (140,290) -- cycle ;
\draw   (180,90) -- (210,90) -- (210,290) -- (180,290) -- cycle ;
\draw  [fill={rgb, 255:red, 255; green, 0; blue, 0 }  ,fill opacity=0.6 ] (260,270) -- (290,270) -- (290,290) -- (260,290) -- cycle ;
\draw  [fill={rgb, 255:red, 0; green, 255; blue, 0 }  ,fill opacity=0.6 ] (300,260) -- (330,260) -- (330,290) -- (300,290) -- cycle ;
\draw  [fill={rgb, 255:red, 0; green, 0; blue, 255 }  ,fill opacity=0.6 ] (340,270) -- (370,270) -- (370,290) -- (340,290) -- cycle ;
\draw  [fill={rgb, 255:red, 255; green, 163; blue, 0 }  ,fill opacity=0.6 ] (380,280) -- (410,280) -- (410,290) -- (380,290) -- cycle ;
\draw  [fill={rgb, 255:red, 0; green, 0; blue, 0 }  ,fill opacity=0.6 ] (220,260) -- (250,260) -- (250,290) -- (220,290) -- cycle ;
\draw   (420,270) -- (450,270) -- (450,290) -- (420,290) -- cycle ;
\draw  [pattern=_3phq0nvqd,pattern size=6pt,pattern thickness=0.75pt,pattern radius=0pt, pattern color={rgb, 255:red, 0; green, 0; blue, 0}] (260,210) -- (290,210) -- (290,270) -- (260,270) -- cycle ;
\draw  [fill={rgb, 255:red, 255; green, 0; blue, 0 }  ,fill opacity=0.6 ] (220,200) -- (250,200) -- (250,260) -- (220,260) -- cycle ;
\draw  [pattern=_0m76376h0,pattern size=6pt,pattern thickness=0.75pt,pattern radius=0pt, pattern color={rgb, 255:red, 0; green, 0; blue, 0}] (300,230) -- (330,230) -- (330,260) -- (300,260) -- cycle ;
\draw  [fill={rgb, 255:red, 0; green, 255; blue, 0 }  ,fill opacity=0.6 ] (220,170) -- (250,170) -- (250,200) -- (220,200) -- cycle ;
\draw  [pattern=_9p0f14ay5,pattern size=6pt,pattern thickness=0.75pt,pattern radius=0pt, pattern color={rgb, 255:red, 0; green, 0; blue, 0}] (340,240) -- (370,240) -- (370,270) -- (340,270) -- cycle ;
\draw  [fill={rgb, 255:red, 0; green, 0; blue, 255 }  ,fill opacity=0.6 ] (220,140) -- (250,140) -- (250,170) -- (220,170) -- cycle ;
\draw  [pattern=_5l0e3qnng,pattern size=6pt,pattern thickness=0.75pt,pattern radius=0pt, pattern color={rgb, 255:red, 0; green, 0; blue, 0}] (380,250) -- (410,250) -- (410,280) -- (380,280) -- cycle ;
\draw  [fill={rgb, 255:red, 255; green, 163; blue, 0 }  ,fill opacity=0.6 ] (220,110) -- (250,110) -- (250,140) -- (220,140) -- cycle ;


\draw (230,317) node [anchor=south west][inner sep=0.75pt]    {$e$};
\draw (270,317) node [anchor=south west][inner sep=0.75pt]    {$a$};
\draw (310,317) node [anchor=south west][inner sep=0.75pt]    {$b$};
\draw (349,317) node [anchor=south west][inner sep=0.75pt]    {$c$};
\draw (389,317) node [anchor=south west][inner sep=0.75pt]    {$d$};

\end{tikzpicture}
		\caption{After Move}
	\end{subfigure} 
	\caption{Illustration of a legal $\gamma$-move. Each rectangle
		represents the marginal coverage of an element of the
		permutation. The height of the item that moves must be at
		least $\gamma$ times the height of anyone it cuts in line.}
	\label{fig:setcover}
	\end{mdframed}
\end{figure}

The dynamic algorithm is the following. When a new function $g^{(t)}$
arrives or departs, update the coverages $\mff_\pi$ of all the
elements in the permutation. Subsequently, while there is a local
search move available, perform the move. Output the prefix of $\pi$ of
all elements with non-zero coverage. This is summarized in \cref{alg:dynamicSubC}, with a setting of $\gamma > e$. 

\begin{algorithm}[ht]
	\caption{\textsc{FullyDynamicSubmodularCover}}
	\label{alg:dynamicSubC}
	\begin{algorithmic}[1]
		\State $\pi \leftarrow$ arbitrary initial permutation of elements $\unvrs$.
		\For{$t = 1, 2, \ldots, T$}
		\State $t^{th}$ function $g_t$ arrives/departs.
		\While{there exists a legal $\gamma$-move or a swap for $\pi$}
		\State Perform the move, and update $\pi$.
		\EndWhile
		\State Output the collection of $\pi_i$ such that $\mff_\pi(\pi_i) > 0$.
		\EndFor 
	\end{algorithmic}
\end{algorithm}

\subsection{Bounding the Cost}
Let us start by arguing that if the algorithm terminates, it must produce a competitive solution.
\begin{lemma}
	\label{lem:sc_stablegood}
	Suppose no $\gamma$-moves are possible, then for every index $i$ such that $\mff_\pi(\pi_i) > 0$, and for every index $i' > i$, the following holds. Let $\pi'$ be the permutation where $\pi_{i'}$ is moved to position $i$. Then
	\begin{align}
		\mff_\pi(\pi_i) > \frac{\mff_{\pi'}(\pi_{i'})}{\gamma}
		\label{eq:gamma_greedy}
	\end{align}
\end{lemma}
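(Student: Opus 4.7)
The plan is a proof by contradiction: if the stated inequality~\eqref{eq:gamma_greedy} fails for some pair $(i,i')$ with $\mff_\pi(\pi_i)>0$, then I will exhibit an explicit legal $\gamma$-move, contradicting the hypothesis. So suppose $\mff_{\pi'}(\pi_{i'}) \geq \gamma\cdot \mff_\pi(\pi_i)$. Note that since $\pi_{i'}$ has been moved to position $i$ in $\pi'$, we have $\mff_{\pi'}(\pi_{i'}) = f(\pi_{i'} \mid \pi_{1:i-1})$. The natural candidate move is: ``relocate $\pi_{i'}$ from position $i'$ directly to position $i$ in $\pi$'', whose post-move marginal is exactly $\mff_{\pi'}(\pi_{i'})$. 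Legality of a $\gamma$-move from position $i'$ to position $i$ requires $\mff_{\pi'}(\pi_{i'}) \geq \gamma\, \mff_\pi(\pi_j)$ for every intermediate $j \in \{i,\ldots,i'-1\}$; at $j=i$ this is precisely the assumed inequality, so that index is handled for free.

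The main obstacle is verifying the condition at $j \in \{i+1,\ldots,i'-1\}$, for which we have no a priori bound on $\mff_\pi(\pi_j)$. The key observation is that at the state of interest the while loop has also exhausted all \emph{swaps}, not only $\gamma$-moves (I read the lemma's hypothesis as the full loop-exit condition, since otherwise a later element with a spiked marginal could block the candidate move). The absence of a legal swap means $\mff_\pi(\pi_{j-1}) > \mff_\pi(\pi_j)$ for every $j$, i.e.\ the marginals along $\pi$ are strictly decreasing. Consequently, for each $j>i$ in our range,
\[
\gamma\,\mff_\pi(\pi_j) \;<\; \gamma\,\mff_\pi(\pi_i) \;\leq\; \mff_{\pi'}(\pi_{i'}),
\]
which supplies the missing inequality at position $j$.

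Combining the two ingredients shows the candidate move is a legal $\gamma$-move, contradicting the hypothesis and establishing~\eqref{eq:gamma_greedy}. The argument collapses to one line once the ``strictly decreasing marginals'' invariant from swap exhaustion is recorded; recognizing that this invariant is the right auxiliary fact (and that $\gamma$-move exhaustion alone is not strong enough to dominate an intermediate $\mff_\pi(\pi_j)$) is really the only subtle point.
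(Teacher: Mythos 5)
Your proof is correct and matches the paper's argument: both proceed by contradiction, using swap-exhaustion to get the (non-)increasing order of marginals so that the inequality at intermediate positions follows from the one at position $i$, whence moving $\pi_{i'}$ to position $i$ would be a legal $\gamma$-move. You also correctly flag that the lemma's stated hypothesis (``no $\gamma$-moves'') silently needs the no-swaps condition as well, which the paper's proof does use but does not record in the lemma statement.
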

\begin{proof}
	Suppose there are elements $\pi_i$ and $\pi_{i'}$ such that condition \eqref{eq:gamma_greedy} does not hold, i.e.
       $\mff_{\pi'}(\pi_j) \geq \gamma \cdot \mff_{\pi}(\pi_i)$.
        Since by assumption there are no swaps available, the permutation
        $\pi$ is in non-increasing order of $\mff_\pi(\pi_i)$
        values, so for all indices $j > i$ it also holds that $\mff_\pi (\pi_{i'}) \geq \gamma\cdot \mff_\pi(\pi_j)$. Hence moving
        $i'$ from its current position to position $i$ is a legal
        $\gamma$-move, which is a contradiction.
\end{proof}

\begin{corollary}
		\label{cor:stacktrace}
		The output at every time step is $\gamma \cdot (\log \fmax^{(t)} / \fmin^{(t)} + 1)$-competitive.
\end{corollary}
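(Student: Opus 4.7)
The plan is to apply Wolsey's classical two-phase analysis of the greedy submodular cover algorithm, using \cref{lem:sc_stablegood} to reduce the stable permutation $\pi$ to an approximately greedy trace. Fix a time $t$ and drop superscripts for notational simplicity. Let $\opt \subseteq \unvrs$ be an optimal cover with $f(\opt) = f(\unvrs)$, let $k^*$ be the number of positions in $\pi$ with positive marginal (so $k^*$ is the cost of the algorithm's output, since costs are unit), and let $r_i := f(\unvrs) - f(\pi_{1:i})$ be the residual coverage after the first $i$ prefix elements.

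The crucial step is to establish the ``greedy marginal'' inequality
\[ \mff_\pi(\pi_i) \;\geq\; \frac{r_{i-1}}{\gamma \,|\opt|} \qquad \text{for all } i \in \{1, \dots, k^*\}. \]
To see this, note that $f(\pi_{1:i-1} \cup \opt) = f(\unvrs)$, so submodularity and a telescoping sum give
\[ r_{i-1} \;=\; f(\pi_{1:i-1}\cup \opt) - f(\pi_{1:i-1}) \;\leq\; \sum_{o \in \opt \setminus \pi_{1:i-1}} f(o \mid \pi_{1:i-1}). \]
By averaging, some $o \in \opt \setminus \pi_{1:i-1}$ satisfies $f(o \mid \pi_{1:i-1}) \geq r_{i-1}/|\opt|$; since this $o$ sits at some position $i' \geq i$ in $\pi$, \cref{lem:sc_stablegood} applied to the pair $(i, i')$ (noting $\mff_\pi(\pi_i) > 0$ because $i \leq k^*$) gives $\mff_\pi(\pi_i) \geq f(o \mid \pi_{1:i-1})/\gamma$, which combined with the previous bound yields the claim.

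With this inequality in hand, I would run the two-phase charging argument. In the first phase, as long as $r_{i-1} \geq |\opt|\, \fmin$, the recursion $r_i \leq r_{i-1}\bigl(1 - 1/(\gamma |\opt|)\bigr)$ gives exponential decay; combined with the initial bound $r_0 = f(\unvrs) \leq |\opt|\, \fmax$ (by subadditivity of $f$ on $\opt$), this phase terminates after at most $\gamma |\opt| \ln(\fmax/\fmin)$ steps. In the second phase, once $r_{i-1} < |\opt|\, \fmin$, every surviving step still drops the residual by at least $\fmin$ (any positive marginal is at least $\fmin$ by definition), so at most $|\opt|$ additional steps remain before $r$ reaches zero. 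Summing the two phases yields $k^* \leq \gamma |\opt|\bigl(\log(\fmax/\fmin) + 1\bigr)$, which is the stated competitive ratio.

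The argument is essentially textbook once \cref{lem:sc_stablegood} is available; the only subtle point is the two-phase split, which is needed to avoid an unwanted $\log |\opt|$ factor that a single-shot exponential-decay bound applied all the way down to $r = 0$ would introduce.
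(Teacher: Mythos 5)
Your proof is correct and takes the same route as the paper, which also reduces to Wolsey's analysis of an approximate greedy via \cref{lem:sc_stablegood}; you simply spell out the two-phase charging argument that the paper relegates to a citation of \cite{wolsey1982analysis}. One minor note: when the averaging witness $o$ happens to be $\pi_i$ itself (so $i' = i$), \cref{lem:sc_stablegood} does not literally apply since it is stated for $i' > i$, but in that case $\mff_\pi(\pi_i) = f(o \mid \pi_{1:i-1}) \geq r_{i-1}/|\opt|$ directly and the claimed inequality holds a fortiori.
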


\begin{proof}
	\cref{lem:sc_stablegood} implies that the solution is equivalent to greedily selecting an element whose marginal coverage is within a factor of $1/
	\gamma$ of the largest marginal coverage currently available. Given this, the standard analyses of the greedy algorithm for \subcov \cite{wolsey1982analysis} imply that the solution is $\gamma \cdot (\log \fmax^{(t)} / \fmin^{(t)} + 1)$ competitive.
\end{proof}

\subsection{Bounding the Recourse}

\label{sec:uwtdsc-recourse}

We move on to showing that the algorithm must terminate with $O(g(\unvrs) / \fmin)$ amortized recourse. For this analysis, we define a potential function parametrized by a number $\alpha \in (0,1)$ to be fixed later:
\[\Phi_\alpha(f, \pi) := \sum_{i \in N} \left(\mff_\pi (\pi_i) \right)^\alpha.\]
As noted in the introduction, up to scaling and shifting, this is the Tsallis entropy with parameter $\alpha$.
We show several properties of this potential:

\begin{mdframed}
	\textbf{Properties of $\Phi_\alpha$:}
	
	\begin{enumerate}[label=\textbf{(\Roman*)}]
		\item $\Phi_\alpha$ increases by at most $g_t(\unvrs) \cdot (\fmin)^{\alpha-1}$ with the addition of function $g_t$ to the active set. \label{subc_rom1}
		\item $\Phi_\alpha$ does not increase with deletion of functions from the system. \label{subc_rom2}
		\item $\Phi_\alpha$ does not increase during swaps. \label{subc_rom3}
		\item For an appropriate range of $\gamma$, the potential $\Phi_\alpha$ decreases by at least $(\gamma / (e \ln \gamma) - 1)\cdot (\fmin)^\alpha = \Omega((\fmin)^\alpha)$ with every $\gamma$-move. \label{subc_rom4}
	\end{enumerate}
\end{mdframed}

\begin{lemma}
	\label{lem:uwtdsc_props}
	If $\alpha = (\ln \gamma)^{-1}$, then $\Phi_\alpha$ respects properties \labelcref{subc_rom1,subc_rom2,subc_rom3,subc_rom4}.
\end{lemma}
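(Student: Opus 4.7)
The plan is to verify each of the four properties separately, using concavity of $\phi(x) := x^\alpha$ (with $\alpha \in (0,1)$) as the workhorse throughout.

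Properties (I), (II), and (III) follow from elementary inequalities. For (I), adding $g_t$ increases the marginal at position $i$ by $\delta_i := g_t(\pi_i \mid \pi_{1:i-1})$, and these telescope to $g_t(\unvrs)$; the tangent bound $\phi(x+\delta) - \phi(x) \leq \alpha x^{\alpha-1}\delta \leq \fmin^{\alpha-1}\delta$ applies when $x \geq \fmin$, and when $x = 0$ any positive $\delta_i$ is itself a marginal of $f^{(t+1)}$, hence at least $\fmin$, so $\delta_i^\alpha \leq \fmin^{\alpha-1}\delta_i$. Summing gives the bound. Property (II) is immediate since removal only weakly decreases every marginal and $\phi$ is monotone. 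For (III), a bubble swap at positions $i-1, i$ preserves $f(\pi_{1:i}) - f(\pi_{1:i-2})$ and hence the sum of the two affected marginals; submodularity together with the swap condition $\mff_\pi(\pi_i) \geq \mff_\pi(\pi_{i-1})$ forces the new pair $(a', b')$ to majorize the old $(a, b)$, so Schur-concavity of $(x, y) \mapsto \phi(x) + \phi(y)$ finishes the job.

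Property (IV) is the meat. Let $u$ move from position $q$ to $p < q$; set $a_i := \mff_\pi(\pi_i)$ for $p \leq i \leq q$, $M := \mff_{\pi'}(u)$, and let $a'_{i+1}$ denote the new marginal of $\pi_i$ after being shifted. Submodularity gives $M \geq a_q$ and $a'_{i+1} \leq a_i$, while conservation of $f(\pi_{1:q})$ yields $M - a_q = \sum_{i=p}^{q-1}(a_i - a'_{i+1})$. Plugging the tangent lower bound $\phi(a_i) - \phi(a'_{i+1}) \geq \alpha a_i^{\alpha-1}(a_i - a'_{i+1})$ into the loss side, using the $\gamma$-move condition $a_i \leq M/\gamma$ together with $\alpha - 1 < 0$ to replace $a_i^{\alpha-1}$ by the lower bound $(M/\gamma)^{\alpha-1}$, and then substituting $\alpha = 1/\ln\gamma$ (so that $\gamma^{1-\alpha} = \gamma/e$ and $\alpha\gamma^{1-\alpha} = C := \gamma/(e\ln\gamma)$), the calculation condenses to
\[\Delta \Phi_\alpha \;\leq\; M^\alpha \cdot h(t), \qquad h(t) := 1 - t^\alpha - C(1-t), \qquad t := a_q / M \in [0,1].\]

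The hard part is converting this into the uniform drop $\Delta\Phi_\alpha \leq -(C-1)\fmin^\alpha$. The function $h$ is convex on $[0,1]$ (second derivative $\alpha(1-\alpha)t^{\alpha-2}$), with $h(0) = 1 - C$; a short computation using $\gamma^\alpha = e$ gives $h(1/\gamma) - h(0) = (1 - \ln\gamma)/(e\ln\gamma) < 0$ precisely when $\gamma > e$, so by convexity $h(t) \leq h(0) = -(C-1)$ throughout $[0, 1/\gamma]$. The remaining issue is justifying $t \leq 1/\gamma$: if $a_q > M/\gamma$, then since the $\gamma$-move condition already forces $a_{q-1} \leq M/\gamma$, we would have $a_q > a_{q-1}$, activating a swap at positions $q-1, q$. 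Because swaps never increase $\Phi_\alpha$ by (III), we may assume (without loss of generality, by reordering the local-search steps) that all swaps are exhausted before any $\gamma$-move, so the permutation is sorted at each $\gamma$-move and $a_q \leq a_p \leq M/\gamma$. Any positive $M$ is a marginal of $f^{(t)}$, hence $M \geq \fmin$, yielding $\Delta\Phi_\alpha \leq -(C-1)\fmin^\alpha$ as claimed.
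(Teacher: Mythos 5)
Your proof is correct, and it follows the same overall strategy as the paper (the Tsallis-entropy potential together with concavity/tangent bounds and the $\gamma$-move condition), but you execute properties \labelcref{subc_rom1} and \labelcref{subc_rom4} more carefully and in doing so fill two small gaps in the paper's argument. For \labelcref{subc_rom1}, the paper's chain $h(k_i + \Delta_i) - h(k_i) \le h(\Delta_i) \le (\Delta_i/\fmin)\, h(\fmin)$ silently requires $\Delta_i \ge \fmin$, which need not hold when $k_i > 0$; your case split (tangent at $k_i \ge \fmin$ when $k_i > 0$, and subadditivity when $k_i = 0$) avoids this. For \labelcref{subc_rom4}, the paper asserts $\sum_i a_i \ge \gamma \cdot v_j$ ``for every $j \in \{p,\dots,q\}$,'' but the $\gamma$-move condition as defined only constrains $j \in \{p,\dots,q-1\}$ and says nothing about $v_q = \mff_\pi(u)$. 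Your decomposition $\Delta\Phi_\alpha \le M^\alpha\, h(t)$ with $t = v_q/M$ makes this visible — $h(t) \le -(C-1)$ fails for $t$ near $1$ — and the fix (exhaust swaps before $\gamma$-moves, so the permutation is sorted and $v_q \le v_{q-1} \le M/\gamma$) is sound, since \cref{alg:dynamicSubC} does not prescribe an order among local-search moves and this prioritization is a valid implementation. The majorization/Schur-concavity phrasing of \labelcref{subc_rom3} is the same argument the paper gives, just named. One small presentational caveat: rather than ``reordering the local-search steps,'' it is cleaner to simply stipulate that the algorithm performs swaps preferentially; the paper already implicitly assumes sortedness when invoking \cref{lem:sc_stablegood}, so this is consistent with the rest of the analysis.
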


\begin{proof}
	For brevity, define $h(z) := z^\alpha$. Since $\alpha \in (0,1)$, this function is concave and non-decreasing.
	
	We start with property \ref{subc_rom1}. When a function $g$ is added to the system, for some set of $i \in [n]$, it increases $k_i := \mff_\pi(\pi(i))$ by some amount $\Delta_i$. Observe that $\sum_i \Delta_i = g(\unvrs)$. By the concavity of $h$:
	\[\sum_i h(k_i + \Delta_i) - \sum_i h(k_i) \leq \sum_i h(\Delta_i) \leq \frac{\sum_{i} \Delta_i}{\fmin} \cdot h(\fmin) = g(\unvrs) \cdot (\fmin)^{\alpha-1}.\]
	
	Property \ref{subc_rom2} follows since $h$ is non-decreasing. 
	
	For property \ref{subc_rom3}, we wish to show that if $u$ immediately precedes $v$ in $\pi$ and $\mff_\pi(u) \leq \mff_\pi(v)$, then swapping $u$ and $v$ does not increase the potential. Let $\widehat \pi$ denote the permutation after the swap. Note that $\mff_{\widehat \pi}(u) \leq \mff_\pi(v)$ and $\mff_{\widehat \pi}(v) \geq \mff_\pi(v)$,
	since $u$ may only have lost some amount of coverage to $v$. Suppose this amount is $k$, i.e. $k = {\mff_\pi(u) - \mff_{\widehat \pi}(u) = \mff_{\widehat \pi}(v) - \mff_\pi(v)}$. Then:
	\begin{align*}
		\Phi_\alpha(f, \widehat \pi) - \Phi_\alpha(f, \pi) &= h(\mff_{\widehat \pi}(u)) + h(\mff_{\widehat \pi}(v)) - h(\mff_{\pi}(u))- h(\mff_{\pi}(v))\\
		&= h(\mff_\pi(u) - k) + h(\mff_\pi(v) + k)- h(\mff_{\pi}(u))- h(\mff_{\pi}(v))
	\end{align*}
	which is non-positive due to the concavity of $h$. 
	
	It remains to prove property \ref{subc_rom4}. Suppose we perform
	a $\gamma$-move on a permutation $\pi$. Let $u$ be the element
	moving to some position $p$ from some position $q > p$, and let $\pi'$ denote the permutation after the move. For convenience, also define:
	\begin{align*}
		v_i &:= \mff_\pi (\pi_i), \tag{the original coverage of the $i^{th}$ set}\\
		a_i &:= \mutcov_f(\pi_i; u \mid \pi_{1:i-1}) = \mff_\pi (\pi_i) - \mff_{\pi'}(\pi_i). \tag{the loss in coverage of the $i^{th}$ set}
	\end{align*}
	Note that for all $i \not \in \{p, \ldots, q\}$, we necessarily have $a_i = 0$. Also note that $\mff_{\pi'}(S) = \sum_i a_i$, and by the definition of a $\gamma$-move, for any $j$ we have $\sum_i a_i \geq \gamma \cdot v_j$. Then the change in potential is:
	\begin{align}
		\Phi_\alpha(f, \pi') - \Phi_\alpha(f, \pi) &= \Big(\sum_i a_i\Big)^\alpha + \sum_i (v_i - a_i)^\alpha - \sum_i v_i^\alpha \notag \\
		&\leq \Big(\sum_i a_i\Big)^\alpha - \sum_i a_i \cdot \alpha \cdot v_i^{\alpha - 1} \label{eq:sc_concave} \\
		&\leq \Big(\sum_i a_i\Big)^\alpha - \Big(\sum_i a_i \Big)^{\alpha} \cdot \alpha \cdot \gamma^{1 - \alpha} \label{eq:sc_gammamove} \\
		&\leq - \Big(\frac{\gamma}{e \ln \gamma} - 1\Big) (\fmin)^\alpha. \label{eq:sc_gammachoice}
	\end{align}
	Above, \eqref{eq:sc_concave} holds since $h$ is concave and thus $h(v_i - a_i) - h(v_i) \leq \nabla h(v_i)  \cdot (-a_i)$.
	Line \eqref{eq:sc_gammamove} holds by the definition of a
	$\gamma$-move, since $\sum_i a_i \geq \gamma v_j$ for every $j
	\in \{p,\ldots, q\}$. Finally, \eqref{eq:sc_gammachoice} comes from our choice of $\alpha = (\ln \gamma)^{-1}$ and the fact that $\sum_i a_i \geq \fmin$. \end{proof}

Finally, we return to proving the main theorem.
\begin{proof}[Proof of \cref{thm:main_subc}]
	By \cref{lem:sc_stablegood}, if \cref{alg:dynamicSubC} (using Definition \ref{eq:marg-1} for $\mff_\pi$) terminates then it is $\gamma\cdot (\log \fmax^{(t)}/\fmin^{(t)} + 1)$-competitive.
	
	By \labelcref{subc_rom1,subc_rom4,subc_rom2,subc_rom3}, the potential $\Phi_\alpha$ increases by at most $g_t(\unvrs) (\fmin)^{\alpha-1}$ for every function $g_t$ inserted to the active set, decreases by $(\fmin)^\alpha \cdot \left(\gamma/(e\ln \gamma) - 1\right)$ per $\gamma$-move, and otherwise does not increase. By inspection, $\Phi_\alpha \geq 0$. The number of elements $e$ with $\mff_\pi(e) > 0$ grows by $1$ only during $\gamma$-moves in which $\mff_\pi(e)$ was initially $0$. Otherwise, this number never grows. We account for elements leaving the solution by paying recourse $2$ upfront when they join the solution.
	
	Hence, the number of changes to the solution is at most:
	\[ 2 \cdot \frac{\sum_t g_t(\unvrs)}{(\fmin)^{1 - \alpha}} \cdot\frac{e \ln \gamma}{(\fmin)^\alpha(\gamma - e \ln \gamma)} = O\left(\frac{\sum_t g_t(\unvrs)}{\fmin}\right). \qedhere \]
\end{proof}

Our algorithm gives a tunable tradeoff between approximation ratio and
recourse depending on the choice of $\gamma$. Note that if we wish to
optimize the competitive ratio, setting $\gamma = e(1+\delta)$ gives a
recourse bound of
\[ \left[ \left(1 + \frac{\ln (1+\delta)}{\delta - \ln(1+\delta)} \right) \right] \frac{\sum_t g_t(\unvrs)}{\fmin}= O\left(\frac{1}{\delta^2} \right) \frac{\sum_t g_t(\unvrs)}{\fmin}\] 
as $\delta$ approaches $0$. For simplicity one can set $\gamma = e^2$ to get the bound in \cref{thm:main_subc}.


\section{General Cost Submodular Cover}

\label{sec:weightedsubmod}

\subsection{The Algorithm}

We now turn to the general costs case and show the main result of our paper:
\begin{theorem}
	\label{thm:main_weighted_subc}
	There is a deterministic algorithm that maintains an $e^2\cdot(\log \fmax^{(t)}/\fmin^{(t)} + 1)$- competitive solution to \subcov in the setting where functions arrive/depart over time. Furthermore, this algorithm has amortized recourse:
	\[ O\left(\frac{g(\mathcal{N})}{\fmin} \ln \left(\frac{\cmax}{\cmin}\right) \right)\]
	per function arrival/departure.
\end{theorem}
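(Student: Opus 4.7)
I would mirror the proof of \cref{thm:main_subc}, adapting both the local-search operations and the potential to account for nonuniform costs. Following the classical weighted greedy, the natural move compares \emph{ratios} $\mff_\pi(\pi_i)/c(\pi_i)$ rather than absolute marginals. So redefine a \emph{swap} as moving $\pi_i$ to position $i-1$ whenever $\mff_\pi(\pi_i)/c(\pi_i) \geq \mff_\pi(\pi_{i-1})/c(\pi_{i-1})$, and a \emph{$\gamma$-move} as shifting $u$ from position $q$ to $p<q$ whenever $\mff_{\pi'}(\pi'_p)/c(u) \geq \gamma\cdot \mff_\pi(\pi_i)/c(\pi_i)$ for all $i \in \{p,\ldots,q-1\}$. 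Running \cref{alg:dynamicSubC} verbatim with these moves, the analog of \cref{lem:sc_stablegood,cor:stacktrace} shows that any stable permutation is a ``$\gamma$-approximate weighted greedy stack trace,'' and Wolsey's analysis yields competitive ratio $\gamma(\log(\fmax/\fmin)+1)$; taking $\gamma = e^2$ matches the theorem.

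\textbf{Generalized potential.} For the recourse bound I would use the weighted Tsallis potential
\[
\Phi_\alpha(f,\pi) \;:=\; \sum_i c(\pi_i)^{1-\alpha}\,\mff_\pi(\pi_i)^{\alpha},
\]
for an $\alpha \in (0,1)$ to be tuned. Properties \ref{subc_rom1}--\ref{subc_rom3} carry over with cosmetic changes: in \ref{subc_rom1} the factor $c_i^{1-\alpha}\leq \cmax^{1-\alpha}$ is pulled out before applying concavity, giving per-insertion increase at most $\cmax^{1-\alpha}\fmin^{\alpha-1}\,g_t(\unvrs)$; for \ref{subc_rom3}, differentiating $\Phi_\alpha$ along the transferred-coverage variable $k\in[0,\bar k]$ shows the derivative equals $\alpha\bigl[(\mff_{\widehat\pi}(v)/c_v)^{\alpha-1}-(\mff_{\widehat\pi}(u)/c_u)^{\alpha-1}\bigr]$, which is nonpositive precisely because of the ratio-based swap condition and the monotonicity of $(\cdot)^{\alpha-1}$. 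For \ref{subc_rom4}, the same telescoping as in \eqref{eq:sc_concave}--\eqref{eq:sc_gammachoice}, now combining concavity with the $\gamma$-move inequality $\mff_\pi(\pi_i)/c(\pi_i) \leq (\sum_j a_j)/(\gamma\,c(u))$, yields
\[
\Delta\Phi_\alpha \;\leq\; c(u)^{1-\alpha}\Big(\textstyle\sum_j a_j\Big)^{\alpha}\bigl(1 - \alpha\gamma^{1-\alpha}\bigr),
\]
which is at most $-\cmin^{1-\alpha}\fmin^{\alpha}(\alpha\gamma^{1-\alpha}-1)$ using $c(u)\geq \cmin$ and $\sum_j a_j \geq \fmin$.

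\textbf{Parameter tuning and main obstacle.} The core difficulty---exactly what the paper flags as ``near (but not quite at) the limit where Tsallis becomes Shannon''---is choosing $\alpha$ and $\gamma$ so that simultaneously (i) the per-move decrease factor $\alpha\gamma^{1-\alpha}-1$ is strictly positive, and (ii) the cost-spread term $(\cmax/\cmin)^{1-\alpha}$ arising from dividing the per-insertion increase by the per-move decrease collapses to $O(\log(\cmax/\cmin))$. Setting $\gamma = e^2$ and $\alpha := 1 - 1/L$ with $L := \ln(\cmax/\cmin)$ works: then $(\cmax/\cmin)^{1-\alpha} = e$, and a Taylor expansion $\alpha\gamma^{1-\alpha} = (1-1/L)\,e^{2/L} = 1 + 1/L + O(1/L^2)$ gives $\alpha\gamma^{1-\alpha}-1 = \Theta(1/L)$. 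Dividing $O(\cmax^{1-\alpha}\fmin^{\alpha-1})\,g_t(\unvrs)$ by $\Omega(\cmin^{1-\alpha}\fmin^{\alpha}/L)$ and summing over $t$ bounds the total number of $\gamma$-moves by $O(L/\fmin)\cdot\sum_t g_t(\unvrs)$; paying $2$ recourse per move as in the proof of \cref{thm:main_subc} yields the claimed amortized recourse. The tightrope is that the decrease in \ref{subc_rom4} vanishes as $\alpha \uparrow 1$ while the increase in \ref{subc_rom1} blows up as $\alpha$ moves away from $1$, so $\alpha$ must be tuned to \emph{exactly} the scale $1 - \Theta(1/L)$ to balance these competing effects.
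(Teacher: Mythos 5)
Your proposal is correct and takes essentially the same route as the paper: your weighted Tsallis potential $\sum_i c(\pi_i)^{1-\alpha}\mff_\pi(\pi_i)^{\alpha}$ is, after substituting $\alpha = 1-\delta$ and the paper's ratio-valued $\mff_\pi$, exactly the paper's $\Phi_h$ with $h(x)=x^{1-\delta}/(1-\delta)$ (up to the harmless constant $1/(1-\delta)$), your decrease factor $\alpha\gamma^{1-\alpha}-1$ is precisely the paper's $\eps_\gamma = \gamma^\delta(1-\delta)-1$, and your choice $\alpha = 1-\Theta(1/\ln(\cmax/\cmin))$ with $\gamma=e^2$ mirrors the paper's $\delta = (\ln(\cmax/\cmin)+1)^{-1}$. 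The only cosmetic difference is that you keep $\mff_\pi$ as the raw marginal and place the cost ratio in the local-move conditions, whereas the paper folds the ratio into $\mff_\pi$; these are the same algorithm, and the swap/insert/move calculations you sketch match the paper's \cref{lem:gencost_hprops}.
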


Given the last section, our description of the new algorithm is very simple. We reuse \cref{alg:dynamicSubC}, except we redefine:
\[\mff_\pi(\pi_i) := \frac{f(\pi_i \mid \pi_{1:i-1})}{c(\pi_i)}. \]
We will specify the last free parameter $\gamma$ shortly.
\subsection{Bounding the Cost}

To bound the competitive ratio, note that \cref{eq:gamma_greedy} did not use any particular properties of $\mff_\pi$, except that in the solution output by the algorithm, there are no $\gamma$-moves or swaps with respect to $\mff_\pi$ in permutation $\pi$. The analog of \cref{cor:stacktrace} is nearly identical:

\begin{corollary}
	\label{cor:stacktrace2}
	The output at every time step is $\gamma \cdot (\log \fmax^{(t)} / \fmin^{(t)} + 1)$-competitive.
\end{corollary}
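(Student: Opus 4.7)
The plan is to mirror the proof of Corollary 2.2 essentially verbatim. The key observation, already flagged by the authors in the paragraph preceding the statement, is that the proof of Lemma 2.1 (sc\_stablegood) is purely symbolic: it only uses (i) absence of swaps implies $\pi$ is in non-increasing order of $\mff_\pi$-values, and (ii) absence of $\gamma$-moves means no later element can, after being brought to an earlier position, have $\mff_{\pi'}$-value exceeding $\gamma$ times any element it passes. Neither fact relies on the formula for $\mff_\pi$. Hence the analog of Lemma 2.1 holds verbatim with the new definition $\mff_\pi(\pi_i) := f(\pi_i \mid \pi_{1:i-1})/c(\pi_i)$, giving that at every prefix $\pi_{1:i-1}$ the element $\pi_i$ attains a marginal-coverage-per-unit-cost ratio that is within a factor $1/\gamma$ of the best such ratio among the remaining elements.

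Next, I would interpret this operationally: the algorithm's output is the trace of an $\gamma$-approximate greedy algorithm for weighted submodular cover, i.e., at each step it picks an element whose cost-normalized marginal gain is within a factor $\gamma$ of the maximum cost-normalized marginal gain achievable. This is precisely the hypothesis under which Wolsey's analysis \cite{wolsey1982analysis} of greedy for \subcov goes through with a multiplicative $\gamma$ slack.

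Finally, I would invoke Wolsey's bound: the standard dual-fitting/charging argument bounds the cost of an exact greedy by $(1 + \ln(\fmax^{(t)}/\fmin^{(t)}))$ times the optimal cost. Inserting a factor $\gamma$ at each greedy step (since our picks are only $\gamma$-approximately best) increases the final bound by a multiplicative $\gamma$, yielding the claimed $\gamma(1 + \log(\fmax^{(t)}/\fmin^{(t)}))$ competitive ratio at every time step $t$.

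The main (and only) obstacle is checking that Wolsey's greedy analysis is robust to the approximate per-step choice; this is standard and is the same robustness already used in the unit-cost case (Corollary 2.2). Since the parameter $\gamma$ will be fixed later at $e^2$ to match the theorem statement, no additional care is needed here.
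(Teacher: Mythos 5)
Your proposal is correct and mirrors the paper's argument: both observe that Lemma~\ref{lem:sc_stablegood} is purely combinatorial and carries over to the cost-normalized $\mff_\pi$, interpret the resulting permutation as the trace of a $\gamma$-approximate greedy for \subcov, and then invoke Wolsey's analysis with a multiplicative $\gamma$ slack. No gaps.
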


\begin{proof}
		\cref{lem:sc_stablegood} implies that the solution is equivalent to greedily selecting an element whose marginal coverage/cost ratio is within a factor of $1/\gamma$ of the largest marginal coverage/cost ratio currently available. The standard analyses of the greedy algorithm for \subcov \cite{wolsey1982analysis} imply that the solution is $\gamma \cdot (\log \fmax^{(t)} / \fmin^{(t)} + 1)$ competitive.
\end{proof}

\subsection{Bounding the Recourse}

To make our analysis as modular as possible, we will write a general potential function, parametrized by a function $h: \R \rightarrow \R $: 
\[\Phi_h(f, \pi) := \sum_{i \in N} c(\pi_i) \cdot h\left( \mff_{\pi}(\pi_i) \right). \]

With foresight, we require several properties of $h$:

\begin{mdframed}
	\textbf{Properties of $h$:}
\begin{enumerate}[label=\textbf{(\roman*)}]
	\item $h$ is monotone and concave. \label{gencost_srom1}
	\item $h(0) = 0$. \label{gencost_srom2}
	\item $h$ satisfies $x \cdot h'(x / \gamma) \geq (1+\eps_\gamma) h(x)$. \label{gencost_srom3}
	\item $h$ satisfies $y \cdot h (x/y)$ is monotone in $y$. \label{gencost_srom4}
\end{enumerate}
\end{mdframed}

\medskip

To bound the recourse, our goal will again be to show the following properties of our potential function $\Phi_h$:
\begin{mdframed}
	\textbf{Properties of $\Phi_h$:}
\begin{enumerate}[label=\textbf{(\Roman*)}]
	\item $\Phi_h$ increases by at most 
	\[\frac{g_t(\mathcal{N})}{\fmin} \cdot \cmax \cdot h\left(\frac{\fmin}{\cmax}\right)\]
	with the addition of function $g_t$ to the active set. \label{gencost_rom1}
	\item $\Phi_h$ does not increase with deletion of functions from the system. \label{gencost_rom2}
	\item $\Phi_h$ does not increase during sorting. \label{gencost_rom3}
	\item For an appropriate range of $\gamma$, the potential $\Phi_h$ decreases by at least 
	\[\eps_\gamma\cdot \cmin \cdot h\left(\frac{\fmin}{\cmin}\right)\]
	with every $\gamma$-move. \label{gencost_rom4}
\end{enumerate}
\end{mdframed}

Together, the statements imply a total recourse bound of:
\[\frac{\sum_t g_t(\mathcal{N})}{\eps_\gamma \cdot \fmin} \cdot \frac{\cmax}{\cmin} \cdot \frac{h(\fmin / \cmax)}{h(\fmin / \cmin)} \]

\begin{lemma}
	\label{lem:gencost_hprops}
	If $h$ respects properties \labelcref{gencost_srom1,gencost_srom2,gencost_srom3,gencost_srom4} then $\Phi_h$ respects properties \labelcref{gencost_rom1,gencost_rom2,gencost_rom3,gencost_rom4}.
\end{lemma}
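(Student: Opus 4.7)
The plan is to verify \ref{gencost_rom1}--\ref{gencost_rom4} in sequence, mirroring \cref{lem:uwtdsc_props} but threading costs through every marginal. Each of the four conclusions uses (loosely in reverse order) one of the hypotheses \ref{gencost_srom1}--\ref{gencost_srom4}.

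I begin with the easy claims. For \ref{gencost_rom2}, removing a function from the active set only decreases every $f(\pi_i \mid \pi_{1:i-1})$, hence every $\mff_\pi(\pi_i)$, and monotonicity of $h$ (\ref{gencost_srom1}) does the rest. For \ref{gencost_rom3}, a swap of adjacent $u$ at position $i-1$ and $v$ at position $i$ shifts the mutual coverage $M := \mutcov_f(u;v \mid \pi_{1:i-2})$ between the two: $\mff(u)$ drops by $M/c(u)$ and $\mff(v)$ rises by $M/c(v)$. Linearizing $h$ at the pre-swap values via concavity bounds the change by $M\bigl[h'(\mff_\pi(v)) - h'(\mff_\pi(u))\bigr]$, which is at most $0$ by the swap precondition $\mff_\pi(v) \geq \mff_\pi(u)$ and $h'$ non-increasing. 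For \ref{gencost_rom1}, inserting $g_t$ raises the $i$-th marginal of $f$ by $\Delta_i := g_t(\pi_i \mid \pi_{1:i-1})$, with $\sum_i \Delta_i = g_t(\unvrs)$ and each nonzero $\Delta_i \geq \fmin$. Concavity of $h$ plus $h(0)=0$ (\ref{gencost_srom1}, \ref{gencost_srom2}) bound the change in $\Phi_h$ by $\sum_i c(\pi_i)\,h(\Delta_i/c(\pi_i))$; the monotonicity of $y\mapsto y\,h(x/y)$ in $y$ (\ref{gencost_srom4}) promotes each $c(\pi_i)$ to $\cmax$, and the non-increasingness of $h(x)/x$ (from concavity) with $\Delta_i \geq \fmin$ linearizes $h(\Delta_i/\cmax) \leq (\Delta_i/\fmin)\,h(\fmin/\cmax)$, so summing gives the claim.

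The main work is \ref{gencost_rom4}. For a $\gamma$-move taking $u$ from position $q$ to $p<q$, let $X := \mff_{\pi'}(u)$, $Y := \mff_\pi(u)$, $v_i := \mff_\pi(\pi_i)$, $b_i := \mutcov_f(\pi_i; u \mid \pi_{1:i-1})$, and $B := \sum_{i=p}^{q-1} b_i$. The chain rule \cref{fact:chainRule} applied to $f(u \cup \pi_{1:q-1}) - f(\pi_{1:p-1})$ gives $c(u)(X - Y) = B$. Only the $u$-term and the terms at positions $p,\ldots,q-1$ change in $\Phi_h$; linearizing each $v_i$-term by concavity and bounding the $u$-term via subadditivity of $h$ (from concavity + $h(0)=0$) yields
\begin{align*}
\Delta\Phi_h \;\leq\; c(u)\,h(B/c(u)) \;-\; \sum_i b_i\,h'(v_i).
\end{align*}
The $\gamma$-move condition $v_i \leq X/\gamma$ with $h'$ non-increasing gives $\sum_i b_i h'(v_i) \geq B\,h'(X/\gamma) \geq (1+\eps_\gamma)\,B\,h(X)/X$ via \ref{gencost_srom3}. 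Because swaps are exhausted before any $\gamma$-move (the analysis may reorder them since \ref{gencost_rom3} says swaps never raise $\Phi_h$), the sort order combined with $X \geq \gamma v_{q-1}$ forces $Y = v_q \leq v_{q-1} \leq X/\gamma$, so $B = c(u)(X-Y) \geq c(u)X(1-1/\gamma)$. Substituting and bounding $B\,h(X)/X$ from below by a constant fraction of $c(u)\,h(B/c(u))$ (using concavity of the Tsallis $h$) leaves a strictly negative coefficient in front of $c(u)\,h(B/c(u))$. Finally, property \ref{gencost_srom4} with $c(u)\geq \cmin$, $B\geq \fmin$, and monotonicity of $h$ convert this to the bound $-\eps_\gamma\,\cmin\,h(\fmin/\cmin)$ required by \ref{gencost_rom4}.

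The main obstacle is exactly in \ref{gencost_rom4} when $Y = \mff_\pi(u) > 0$: the new marginal $X$ strictly exceeds $B/c(u)$, so one cannot directly equate $u$'s post-move contribution with $h(B/c(u))$ (as one effectively does in the unit-cost case). The sort-then-move invariant is what pins $Y \leq X/\gamma$; without it, $h'(v_i)$ and $h'(X/\gamma)$ would not line up, and property \ref{gencost_srom3} could not absorb the slack into a positive constant. Everything else is routine bookkeeping between the hypotheses on $h$ and the claims on $\Phi_h$.
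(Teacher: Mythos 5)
Your verifications of \ref{gencost_rom1}--\ref{gencost_rom3} track the paper exactly (subadditivity from concavity plus $h(0)=0$, the cost promotion via \ref{gencost_srom4}, the secant-slope step, and the tangent-line linearization for swaps). The divergence is in \ref{gencost_rom4}, and it is where your argument actually has a gap: you drop $u$'s contribution at its \emph{old} position $q$ from the concavity linearization, replacing it with a subadditivity estimate $c(u)[h(X)-h(Y)]\le c(u)h(B/c(u))$. The paper instead includes position $q$ in the same telescoping sum, with $a_q = v_q = c(u)Y$ (so $(v_q - a_q)=0$), which makes $\sum_i a_i = c(u)X$ on the nose; the concavity linearization then contributes an extra $-c(u)Y\,h'(Y)$ to the sum, and the $\gamma$-move bound $v_i/c(\pi_i)\le X/\gamma$ (extended to $i=q$ via sortedness, as you correctly note) collapses everything to $c(u)h(X) - c(u)X\,h'(X/\gamma) \le -\eps_\gamma c(u)h(X)$ by \ref{gencost_srom3} in one step.

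Your decomposition, with $B = c(u)(X-Y)$ rather than $c(u)X$, produces $\Delta\Phi_h \le c(u)h(B/c(u)) - (1+\eps_\gamma)\,B\,h(X)/X$, and you then need $B\,h(X)/X$ to be at least a $1/(1+\eps_\gamma)$ fraction of $c(u)h(B/c(u))$ -- equivalently, $\tfrac{h(W)/W}{h(X)/X} < 1+\eps_\gamma$ with $W := X-Y \in [X(1-1/\gamma), X]$. That inequality is \emph{not} a consequence of properties \ref{gencost_srom1}--\ref{gencost_srom4} for an arbitrary $h$: concavity plus $h(0)=0$ gives only $h(W)/W \ge h(X)/X$ with no a-priori control of the ratio. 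You invoke ``concavity of the Tsallis $h$'' to close this, which means your proof of \ref{gencost_rom4} is specific to $h(x)=x^{1-\delta}/(1-\delta)$ and does not establish the lemma at the stated level of generality; and even for Tsallis the constant you obtain is $(1+\eps_\gamma)\lambda - 1$ for some $\lambda < 1$ rather than $\eps_\gamma$, so the claimed bound $-\eps_\gamma\cmin h(\fmin/\cmin)$ is not quite what you derive. The fix is exactly the paper's: fold $u$'s old position $q$ into the sum of linearized terms rather than treating the $u$-term separately.
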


\begin{proof}
	We start with property \ref{gencost_rom1}.
	When a function $g_t$ is added to the system, for some set of $i \in [n]$, it increases $k_i := \mff_\pi(\pi(i))$ by some amount $\Delta_i$. Then the potential increase is:
	\begin{align}
		\Phi_h(f^{(t)}, \pi) - \Phi_h(f^{(t-1)}, \pi) &= \sum_{i \in [n]} c(\pi_i) \cdot h\left( \frac{k_i + \Delta_i}{c(\pi_i)}\right) - \sum_{i \in [n]} c(\pi_i) \cdot h\left( \frac{k_i}{c(\pi_i)}\right) \notag \\
		&\leq \sum_{i \in [n]} c(\pi_i) \cdot h\left( \frac{\Delta_i}{c(\pi_i)}\right) \label{eq:gencost_using1and2} \\
		&\leq \sum_{i \in [n]} \cmax \cdot h \left(\frac{\Delta_i}{\cmax}\right) \label{eq:gencost_using4} \\
		&\leq \frac{\sum_{i} \Delta_i}{\fmin} \cdot \cmax \cdot h\left(\frac{\fmin}{\cmax}\right) \label{eq:gencost_using1} \\
		&= \frac{g_t(\mathcal{N})}{\fmin} \cdot \cmax \cdot h\left(\frac{\fmin}{\cmax}\right). \notag
	\end{align}
	
	Above step \eqref{eq:gencost_using1and2} is by properties \labelcref{gencost_srom1,gencost_srom2}, step \eqref{eq:gencost_using4} is by property \labelcref{gencost_srom4} and step \eqref{eq:gencost_using1} is by property \labelcref{gencost_srom1}.

	Property \ref{gencost_rom2} follows since $h$ is non-decreasing. 
	
	The proof of property \ref{gencost_rom3} is similar to the one in \cref{lem:uwtdsc_props}. Suppose $u$ immediately precedes $v$ in $\pi$ but $\mff_\pi(u) \leq \mff_\pi(v)$, and let $\widehat \pi$ denote the permutation after the swap. We have that $\mff_{\widehat \pi}(u) \leq \mff_\pi(v)$ and $\mff_{\widehat \pi}(v) \geq \mff_\pi(v)$,	since $u$ may only have lost some amount of coverage to $v$. Suppose this amount is $k$, i.e. $k = {\mff_\pi(u) - \mff_{\widehat \pi}(u) = \mff_{\widehat \pi}(v) - \mff_\pi(v)}$. Then:
	\begin{align*}
		\Phi_h(f, \widehat \pi) - \Phi_h(f, \pi) &= c(u) \cdot h\left(\mff_{\widehat \pi}(u)\right) + c(v) \cdot h(\mff_{\widehat \pi}(v)) - c(u) \cdot h(\mff_{\pi}(u))- c(v) \cdot h(\mff_{\pi}(v)) \\
		&= c(u) \left( h\left(\mff_{\pi}(u) - \frac{k}{c(u)} \right) - h(\mff_{\pi}(u))  \right) \\
		& \ \ \ + c(v) \left(h\left(\mff_{\pi}(v) + \frac{k}{c(v)} \right) -  h(\mff_{\pi}(v)) \right) \\
		&\leq k \cdot (h'(\mff_\pi(v)) -  h'(\mff_\pi(u)))
	\end{align*}
	which is non-positive due to the concavity of $h$ and the fact that $\mff_\pi (v) \geq \mff_\pi(u)$. 
	
	Finally the proof of property \ref{gencost_rom4} is also similar to the version in the last section. Suppose we perform a $\gamma$-move on a permutation $\pi$. Let $u$ be the element
	moving to some position $p$ from some position $q > p$, and let $\pi'$ denote the permutation after the move. Then:
	\begin{align}
		\Phi_h(f, \pi') - \Phi_h(f, \pi) &= c(u) \cdot h\left(\frac{\sum_{i \in [n]} a_i}{c(u)}\right) + \sum_{i \in [n]} c(\pi_i) \cdot h \left(\frac{v_i - a_i}{c(\pi_i)} \right) - \sum_{i \in [n]} c(\pi_i) \cdot h \left(\frac{v_i}{c(\pi_i)} \right) \notag \\
		&\leq c(u) \cdot h\left(\frac{\sum_{i \in [n]} a_i}{c(u)}\right)  - \sum_{i \in [n]} a_i \cdot h'\left(\frac{v_i}{c(\pi_i)}\right) \label{eq:gencost_usingconcavity} \\
		&\leq c(u) \cdot h\left(\frac{\sum_{i \in [n]} a_i}{c(u)}\right)  - c(u) \sum_{i \in [n]} \frac{a_i}{c(u)} \cdot h'\left(\frac{\sum_{i \in [n]} a_i}{\gamma \cdot c(u)}\right) \label{eq:gencost_gammamove} \\
		&\leq c(u) \cdot h\left(\frac{\sum_{i \in [n]} a_i}{c(u)}\right)  - (1+\eps_\gamma) \cdot c(u) \cdot h\left(\frac{\sum_{i \in [n]} a_i}{c(u)}\right) \label{eq:gencost_using3} \\
		&\leq - \eps_\gamma \cdot c(u) \cdot h\left(\frac{\sum_{i \in [n]} a_i}{c(u)}\right) \notag \\
		&\leq - \eps_\gamma \cdot \cmin \cdot h\left(\frac{\fmin}{\cmin}\right). \label{eq:gencost_using4again}
	\end{align}
	Above step \eqref{eq:gencost_usingconcavity} uses the concavity of $h$. Step \eqref{eq:gencost_gammamove} uses the fact that moving $u$ was a legal $\gamma$-move and thus $\sum_{i \in [n]} a_i / c(u) \geq \gamma v_i / c(\pi_i)$. Step \eqref{eq:gencost_using3} follows from property \labelcref{gencost_srom3}. Finally, \eqref{eq:gencost_using4again} uses property \labelcref{gencost_srom4} again.
\end{proof}

With this setup, the proof of the main theorem boils down to identifying an appropriate function $h$, and a suitable choice of $\gamma$. 

\begin{proof}[Proof of \cref{thm:main_weighted_subc}]
	Recall the general recourse bound is
	\[\frac{\sum_t g_t(\mathcal{N})}{\eps_\gamma \cdot \fmin} \cdot \frac{\cmax}{\cmin} \cdot \frac{h(\fmin / \cmax)}{h(\fmin / \cmin)}.\]
	A good choice for $h$ is $h(x) = x^{1-\delta}/ (1-\delta)$ for $\delta = \left(\ln (\cmax / \cmin)+1\right)^{-1}$, and with $\gamma = e^2$. Properties \labelcref{gencost_srom1,gencost_srom2,gencost_srom3,gencost_srom4} are easy to verify.	
	To see that this implies the bound
	\[O\left(\frac{\sum_t g_t(\mathcal{N})}{\fmin} \ln \left(\frac{\cmax}{\cmin}\right) \right),\]
	note that $\gamma = e^2 \geq ((1+\delta)/(1-\delta))^{1/\delta}$, in which case $\eps_\gamma = \gamma^\delta(1-\delta) - 1 \geq \delta$. Furthermore, $(\cmax / \cmin)^{\delta} = O(1)$.
\end{proof}

\section{Improved Bounds for $3$-Increasing Functions}

\label{sec:3incr}

In this section we show to remove the $\log (\cmax / \cmin) $ dependence from the recourse bound when $f$ is assumed to be from a structured class of set functions: the class of $3$-increasing functions. We start with some definitions.

\subsection{Higher Order Monotonicity of Boolean Functions}

Following the notation of \cite{foldes2005submodularity}, we define the \textit{derivative of a set function} $f$ as:
\[\frac{df}{dx} (S) = f(S \cup \{x\}) - f(S \backslash \{x\}).\]
We notate the $m$-th order derivative of $f$ with respect to the subset $A = \{i_1, \ldots i_m\}$ as $df/dA$, and this quantity has the following concise expression:
\[\frac{df}{dA}(S) = \sum_{B \subseteq A} (-1)^{|B|} f((S \cup A) \backslash B).\]

\begin{definition}[$m$-increasing \cite{foldes2005submodularity}]
	\label{def:HOmonotone}
	We say that a set function is \textit{$m$-increasing} if all its $m$-th order derivatives are nonnegative, and \textit{$m$-decreasing} if they are nonpositive. We denote by $\mathcal{D}_m^+$ and $\mathcal{D}_m^-$ the classes of $m$-increasing and $m$-decreasing functions respectively.
\end{definition}

Note that $\mathcal{D}_1^+$ is the class of monotone set functions, and $\mathcal{D}_2^-$ is the class of submodular set functions. When $f$ is the joint entropy set function, the $m$-th derivative above is also known as the \textit{interaction information}, which generalizes the usual mutual information for two sets of variables, to $m$ sets of variables.

We will soon show improved algorithms for the class $\mathcal{D}_3^+$, that is the class of $3$-increasing set functions. The following derivation gives some intuition for these functions:
\begin{align}
	\frac{df}{d\{x,y,z\}}(S) &= \sum_{B \subseteq \{x, y, z\}} (-1)^{|B|} f((S \cup \{x,y,z\}) \backslash B) \notag \\
	&= f(x \mid S) - f(x \mid S \cup \{y\}) - ( f(x \mid S \cup \{z\}) - f(x \mid S \cup \{y, z\}) ) \notag \\
	&= \mutcov_f(x,y \mid S) - \mutcov_f(x,y \mid S \cup \{z\}). \label{eq:submodofconditioning}
\end{align}
Thus a function $f$ in contained in $\mathcal{D}_3^+$ if and only if mutual coverage decreases after conditioning.

Bach~\cite[Section 6.3]{bach} shows that a nonnegative function is in $\mathcal{D}_{2m-1}^+ \cap \mathcal{D}_{2m}^-$ simultaneously for all $m\geq 1$ if and only if it is a \textit{measure coverage function}\footnote{We avoid the term \textit{set coverage function} used by \cite{bach}, since we already use this terminology to mean the special case of a counting measure defined by a finite set system, as in \setcov.}: each element $i \in \mathcal{N}$ is associated with some measurable set $S_i$ under a measure $\mu$, and $f(S_1, \ldots, S_t) = \mu\left(\bigcup_{i=1}^t S_i \right)$.

\subsection{The Algorithm}

\label{subsec:3inc_algo}

We now show our second main result:

\begin{theorem}
	\label{thm:main_weighted_3incr}
	There is a deterministic algorithm that maintains an $O(\log f(N)/\fmin)$-competitive solution to \subcov in the fully-dynamic setting where functions arrive/depart over time, and each function is $3$-increasing in addition to monotone and submodular. Furthermore, this algorithm has total recourse:
	\[ O\left(\frac{\sum_t  g_t(\mathcal{N})}{\fmin}\right). \]
\end{theorem}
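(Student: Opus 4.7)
I would reuse the entire algorithmic scaffolding of Sections~\ref{sec:unweightedsubmod} and~\ref{sec:weightedsubmod}: maintain a permutation $\pi$, use the cost-normalized marginal $\mff_\pi(\pi_i) = f(\pi_i \mid \pi_{1:i-1})/c(\pi_i)$, and run the same $\gamma$-move and swap local search of \cref{alg:dynamicSubC} for a constant $\gamma$ (say $\gamma = e^2$). The competitive ratio bound is immediate: \cref{lem:sc_stablegood} and \cref{cor:stacktrace2} only used the local optimality of $\pi$ with respect to $\mff_\pi$, not any functional form of a potential, so at termination the output is $O(\log \xf{t}(\unvrs)/\fmin)$-competitive regardless of the function class. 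Thus the entire content of the proof is to design a new potential $\Phi$ for which the four properties \labelcref{gencost_rom1,gencost_rom2,gencost_rom3,gencost_rom4} hold with constants \emph{independent} of $\cmax/\cmin$, so that total recourse telescopes to $O(\sum_t g_t(\unvrs)/\fmin)$ exactly as in the proof of \cref{thm:main_subc}.

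The obstruction in \cref{sec:weightedsubmod} was that the Tsallis-type $h(x) = x^{1-\delta}/(1-\delta)$ must be nearly linear ($\delta \approx 1/\ln(\cmax/\cmin)$) in order to make the arrival-side quantity $\cmax\, h(\fmin/\cmax)$ comparable to the per-move drop $\cmin\, h(\fmin/\cmin)$; this forced an extra $\log(\cmax/\cmin)$ factor. Following the hint in the introduction, I would build a potential whose atomic unit is \emph{coverage mass} rather than cost-normalized marginal, by decomposing each $f(\pi_i \mid \pi_{1:i-1})$ into mutual-coverage pieces via the Chain Rule (\cref{fact:chainRule}) and weighting those pieces nonuniformly. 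Concretely, I would try
\[
  \Phi(f,\pi) \;=\; \sum_i c(\pi_i) \cdot h\!\left( \tfrac{\mff_\pi(\pi_i)}{c(\pi_i)} \right) \cdot W_i(\pi),
\]
where $h$ is the Section~\ref{sec:unweightedsubmod} Tsallis kernel and the weight $W_i(\pi)$ is a decreasing function of the total conditional mutual coverage $\sum_{j < i} \mutcov_f(\pi_j;\pi_i \mid \pi_{1:j-1})/c(\pi_j)$ between $\pi_i$ and its predecessors. The intuition is that an element $\pi_i$ whose marginal comes from heavily overlapped pieces is ``cheap'' to later displace, so it carries less potential; this is precisely how one avoids paying the cost-spread factor, because during a $\gamma$-move the mass of the promoted element collapses onto predecessors whose weights $W_i$ were small, while the promoted element's own weight is large.

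The 3-increasing hypothesis enters here exactly once, but crucially. Equation \eqref{eq:submodofconditioning} shows that $f \in \mathcal{D}_3^+$ is equivalent to mutual coverage being monotonically non-increasing under additional conditioning. This means that when an element $u$ jumps from position $q$ to position $p$, each unit of mutual coverage that $u$ ``steals'' from some $\pi_i$ (with $p \le i < q$) corresponds to a mutual-coverage piece that, in the new permutation $\pi'$, is conditioned on a strictly larger set and so has not grown. Consequently the weights $W_i$ in the reshuffled permutation can only shift in the favorable direction, and the potential drop of a $\gamma$-move admits the same multiplicative argument as \eqref{eq:sc_concave}--\eqref{eq:sc_gammachoice} without paying anything in $\cmax/\cmin$. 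Without 3-increasingness this step fails, because conditioning could \emph{increase} mutual coverage and the reweighting could swing against us.

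I expect the main obstacle to be identifying the exact form of the weights $W_i(\pi)$ that simultaneously satisfies all four properties of $\Phi$: the arrival-side increase must be controlled by $g_t(\unvrs)/\fmin$ units each contributing $O(1)$, the swap property must still follow from concavity of $h$ together with the chain-rule decomposition, and the $\gamma$-move drop must be $\Omega(\fmin)$. Once the right $W_i$ is pinned down, properties \labelcref{gencost_rom1,gencost_rom2,gencost_rom3,gencost_rom4} follow by mimicking \cref{lem:gencost_hprops} with \eqref{eq:submodofconditioning} inserted at each point where conditional mutual coverages are compared across $\pi$ and $\pi'$, and the theorem follows by the same telescoping argument as \cref{thm:main_subc}.
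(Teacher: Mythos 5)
Your high-level instinct --- that the $\log(\cmax/\cmin)$ loss should be eliminated by reweighting the potential's atomic units according to mutual coverage --- is the right direction, and your observation that if the Section~\ref{sec:weightedsubmod} algorithm is left unchanged then its competitive ratio is inherited for free is correct. But there are two substantive gaps between what you sketch and an actual proof, and one of them suggests your architecture may not work at all.

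First, and most importantly, the paper does \emph{not} keep the Section~\ref{sec:weightedsubmod} algorithm and only change the potential. It changes the algorithm itself, replacing $\mff_\pi(\pi_i) = f(\pi_i \mid \pi_{1:i-1})/c(\pi_i)$ with
\[
\mff_\pi(\pi_i) \;=\; \sum_{j\in[n]} \frac{\mfI_{\pi,\psi}(\pi_i,\psi_j)}{c(\pi_i)\,c(\psi_j)},
\]
where $\psi$ is the permutation of $\unvrs$ in increasing cost order. The $\gamma$-moves and swaps are then taken with respect to this new $\mff_\pi$, so the local search converges to a different permutation. The potential $\Phi_{1/2}(\pi) = \sum_i c(\pi_i)\sqrt{\mff_\pi(\pi_i)}$ is then just the $\alpha=1/2$ Tsallis kernel applied to the new $\mff_\pi$, with no extra multiplicative weights $W_i$. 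The cost-aware reweighting you are gesturing at is already folded into $\mff_\pi$ through $\psi$. Your proposed $W_i(\pi)$, by contrast, only references $\pi$ and conditional mutual coverages among $\pi$-predecessors; it carries no dependence on a second (cost-ordered) permutation, so it is unclear how it could capture the credit-redistribution toward cheap elements that lets the arrival increase and $\gamma$-move drop both be $\Theta(\sqrt{\fmin})$ with no dependence on $\cmax/\cmin$.

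Second, because the paper changes $\mff_\pi$, it cannot simply re-invoke \cref{lem:sc_stablegood}/\cref{cor:stacktrace2}: the ``stack-trace of an approximate greedy algorithm'' interpretation no longer holds for the affinity-weighted ratio. The competitive-ratio proof is redone from scratch (\cref{lem:wsc_stablegood}), via a levels argument (\cref{claim:lvlcost}, \cref{claim:triviallvls}), and $3$-increasingness is invoked there too --- not just in the potential drop --- to show that a simultaneous move of all of $\opt_{\geq}$ can be replaced by a single-element $\gamma$-move without decreasing the mover's value. So the claim that the competitive ratio is ``immediate'' and that $3$-increasingness ``enters exactly once'' are both specific to your (unchanged-algorithm) architecture, but you have not shown that architecture can deliver the recourse bound.

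Finally, you explicitly defer the construction of the weights $W_i$ and the verification of properties \labelcref{gencost_rom1,gencost_rom2,gencost_rom3,gencost_rom4} as ``the main obstacle.'' That obstacle \emph{is} the theorem; without a concrete $W_i$ and a proof of the $\gamma$-move drop that does not pay $\cmax/\cmin$, the proposal is a plan, not a proof. It is moreover not established that the unchanged Section~\ref{sec:weightedsubmod} dynamics (which make $\gamma$-moves against a cost-blind-except-for-normalization $\mff_\pi$) admit \emph{any} potential giving $O(\sum_t g_t(\unvrs)/\fmin)$ total recourse on $3$-increasing instances; the paper sidesteps this question entirely by making the local search itself $\psi$-aware. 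I would treat that algorithmic change as the essential missing idea.
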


We (almost) exactly reuse \cref{alg:dynamicSubC} from previous sections, only this time
we redefine $\mff_\pi$ more substantially. Given two permutations
$\alpha$ and $\beta$ on $\unvrs$, we define the \emph{$(i,j)$ mutual
  affinity of $(\alpha, \beta)$} (and its conditional variant) as
\begin{align*}
    \mfI_{\alpha, \beta}(\alpha_i , \beta_j) &:= \mutcov_{f}(\alpha_i, \beta_j \mid \alpha_{1:i-1} \cup \beta_{1:j-1}), \\
    \mfI_{\alpha, \beta}(\alpha_i , \beta_j \mid S) &:= \mutcov_{f}(\alpha_i, \beta_j \mid \alpha_{1:i-1} \cup \beta_{1:j-1} \cup S).
\end{align*} 
Recall that $\mutcov_f$ denotes the \nameref{def:mutualCoverage}. 
To give some insight into these definitions, observe that mutual affinity telescopes cleanly:
\begin{observation}
	\label{obs:chainRuleExplained}
	The chain rule implies that
	\begin{align*}
	\sum_{j \in [n]} \mfI_{\alpha, \beta}(\alpha_i, \beta_j) &= f(\alpha_i \mid \alpha_{1:i-1}), \\
	\sum_{i \in [n]} \mfI_{\alpha, \beta}(\alpha_i, \beta_j) &= f(\beta_j \mid \beta_{1:j-1}), \\
	\sum_{i, j \in [n]} \mfI_{\alpha, \beta}(\alpha_i, \beta_j) &= f(\unvrs).
	\end{align*}
\end{observation}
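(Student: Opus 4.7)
The plan is to derive all three identities from the chain rule (Fact 1.5) applied iteratively, together with the observation that mutual coverage with the full universe collapses to a marginal.

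I would start with the first identity. Fix $i$ and let $C := \alpha_{1:i-1}$. Iterating the chain rule by peeling off one element of $\beta$ at a time, I obtain
\begin{align*}
\mutcov_f(\alpha_i; \unvrs \mid C) &= \mutcov_f(\alpha_i; \beta_1 \mid C) + \mutcov_f(\alpha_i; \beta_{2:n} \mid C \cup \beta_1) \\
&= \cdots \\
&= \sum_{j=1}^{n} \mutcov_f\bigl(\alpha_i; \beta_j \mid C \cup \beta_{1:j-1}\bigr) = \sum_{j=1}^{n} \mfI_{\alpha,\beta}(\alpha_i, \beta_j),
\end{align*}
using that $\beta$ is a permutation of $\unvrs$, so $\beta_{1:n} = \unvrs$. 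It remains to note that $\mutcov_f(\alpha_i; \unvrs \mid C) = f(\alpha_i\mid C) + f(\unvrs\mid C) - f(\alpha_i \cup \unvrs \mid C) = f(\alpha_i\mid C)$, since $\alpha_i \in \unvrs$. This yields the first identity.

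The second identity follows by the exact same argument with the roles of $\alpha$ and $\beta$ swapped, since $\mutcov_f$ is symmetric in its two arguments and the definition of $\mfI_{\alpha,\beta}(\alpha_i,\beta_j)$ is symmetric under this swap. Finally, the third identity is obtained by summing the first identity over $i$:
\begin{align*}
\sum_{i,j \in [n]} \mfI_{\alpha, \beta}(\alpha_i, \beta_j) = \sum_{i=1}^{n} f(\alpha_i \mid \alpha_{1:i-1}) = f(\unvrs),
\end{align*}
where the last equality is the standard telescoping of marginal contributions along the permutation $\alpha$ (which is a permutation of $\unvrs$, so the telescope covers the whole universe).

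There is no real obstacle here; the only point requiring minor care is the induction used to expand the chain rule into a sum of $n$ terms (as opposed to the two-part decomposition stated in Fact 1.5), and the collapsing of $\mutcov_f(x;\unvrs \mid C)$ to $f(x \mid C)$ using monotonicity/definition of $\mutcov$. Both are routine consequences of the definitions.
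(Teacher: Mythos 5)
Your proof is correct and is essentially the intended argument: the paper states this observation with only the remark ``the chain rule implies'' (\cref{fact:chainRule}) and no written proof, and your derivation --- iterate the chain rule along $\beta$ to telescope $\sum_j \mfI_{\alpha,\beta}(\alpha_i,\beta_j)$ into $\mutcov_f(\alpha_i;\unvrs\mid \alpha_{1:i-1})$, collapse that to $f(\alpha_i\mid\alpha_{1:i-1})$, then use symmetry and a second telescope --- is exactly what that remark abbreviates. The only implicit convention worth noting (shared with the paper) is the normalization $f(\emptyset)=0$, used in the final telescope over $\alpha$.
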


Let $\psi$ denote the ordering of $\unvrs$ in \emph{increasing order of cost}. Then:
\begin{gather}
\mff_\pi(\pi_i) := \sum_{j \in [n]} \frac{\mfI_{\pi, \psi}(\pi_i , \psi_j)}{c(\pi_i) \cdot c(\psi_j)}. \label{eq:3incrmff}
\end{gather}
The following observation gives some intuition for this
definition. 
\begin{observation}
	\label{lem:wsc_costratio}
	The expression $\mfI_{\alpha, \beta}(\alpha_i, \beta_j)$
        is nonzero only if (a)~element $\alpha_i$ precedes
        element $\beta_j$ in permutation $\alpha$, and also (b)~element $\beta_j$ precedes element $\alpha_i$ in permutation $\beta$.
\end{observation}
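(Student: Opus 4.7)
The plan is to prove the observation by its contrapositive: assuming that at least one of (a) or (b) fails, I will show that $\mfI_{\alpha,\beta}(\alpha_i,\beta_j)=0$. Writing $C := \alpha_{1:i-1} \cup \beta_{1:j-1}$ for the conditioning set, the definition of mutual coverage unpacks to
\[
\mfI_{\alpha,\beta}(\alpha_i,\beta_j) \;=\; f_C(\alpha_i) + f_C(\beta_j) - f_C(\alpha_i \cup \beta_j),
\]
and the two failure cases will each kill the right-hand side.

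The key step is to translate the failure of (a) into a set-membership statement: if $\alpha_i$ does not precede $\beta_j$ in $\alpha$, then the element $\beta_j$ occupies some position $k < i$ in $\alpha$, and hence $\beta_j \in \alpha_{1:i-1} \subseteq C$. From here the proof is essentially a one-line manipulation. Since adjoining an element of $C$ to an argument of $f$ leaves $f$ unchanged, $f_C(\beta_j) = 0$ and $f_C(\alpha_i \cup \beta_j) = f_C(\alpha_i)$, and the three terms collapse to $f_C(\alpha_i) + 0 - f_C(\alpha_i) = 0$. The case in which (b) fails is handled in exactly the same way, interchanging the roles of $\alpha$ and $\beta$ (and of $\alpha_i$ and $\beta_j$), using that $\alpha_i \in \beta_{1:j-1} \subseteq C$.

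I do not anticipate any real obstacle: the argument is a direct invocation of the definition of $\mutcov_f$ together with the elementary identity $f(S \cup \{x\}) = f(S)$ for $x \in S$, and in particular does not rely on submodularity, monotonicity, or any of the $3$-increasing structure to be exploited later in \Cref{sec:3incr}. The only care needed is bookkeeping: making sure the positional statement ``the position of $\beta_j$ in $\alpha$ is strictly less than $i$'' is correctly identified with the set statement ``$\beta_j \in \alpha_{1:i-1}$,'' and symmetrically for $\beta$.
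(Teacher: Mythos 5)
Your contrapositive argument is correct and is the natural (essentially the only) proof; the paper leaves this as an unproven Observation precisely because it is a direct consequence of the definition of $\mutcov_f$ together with the identities $f_C(B)=0$ and $f_C(A\cup B)=f_C(A)$ whenever $B\subseteq C$, exactly as you argue.

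One small point worth making explicit in the bookkeeping you flag at the end: when you negate (a) and conclude that $\beta_j$ sits strictly before position $i$ in $\alpha$, you are implicitly reading ``precedes'' as non-strict (precedes-or-equals), so that its negation is a strict inequality on positions. This is the correct reading --- when $\alpha_i = \beta_j$ the affinity reduces to $f_C(\alpha_i)$, which can certainly be nonzero, so the observation would be false under strict precedence --- and with it the negation of (a) is exactly $\beta_j \in \alpha_{1:i-1}$, making your set-membership step valid. The paper's downstream invocations (e.g.\ $c(\psi_{j'}) \le c(\psi_j)$ in \cref{lem:lbOnmff}, and $\mff_\pi(\pi_i) \ge \mff_\pi(\psi_j)$ in the proof of property \ref{wsc_rom1}) all use only non-strict comparisons, confirming this is the operative reading.
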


In light of these remarks, $\mff_\pi(\pi_i)$ decomposes the marginal coverage of $\pi_i$ into its mutual affinity with all the elements $\psi_j \in \unvrs$ that are simultaneously cheaper than $\pi_i$ and that follow $\pi_i$ in the permutation $\pi$, and weights each of these affinities by $(c(\pi_i) c(\psi_j))^{-1}$.

With this re-definition of $\mff_\pi$, \cref{alg:dynamicSubC} is fully specified (though it is still parametrized by $\gamma$). We make one final important change, which begins with the following assumption.\footnote{This change to \cref{alg:dynamicSubC} is not present in the original version of this paper, see the discussion in \cref{sec:version_notes}.}

\begin{assumption}
    \label{assum:singleton_set}
    For any time $t \in [T]$, there are dummy elements $d^t_1, \ldots, d^t_n$ such that $d^t_i$ has cost $c(\psi_i)$, for any subset $S \subseteq [n]$, we have $g_t( \bigcup_{i \in S} \{d^t_i\}) = g_t( \bigcup_{i \in S} \{\psi_i\})$, but $g_{t'}(d^t_i) = 0$ for all $t' \neq t$.
\end{assumption}

This assumption is without loss of generality, because any solution using these dummy elements can be converted to a cheaper one using only the original universe elements by replacing $d^t_i$ with $\psi_i$.

The final change to \cref{alg:dynamicSubC} is this: when function $g_t$ arrives online at time $t$, first move $d^t_1, \ldots, d^t_{n}$ to the front of $\pi$ (arranged in  increasing order of cost), but skipping any  $d^t_i$ for which $g_t(d^t_i\mid d^t_{1:i-1}) = 0$. After this, proceed with the local search as in \cref{alg:dynamicSubC}.

 We move to proving formal guarantees.

\subsection{Bounding the Cost}

\label{sec:wsc_boundingcost}

Since our definition of $\mff_\pi$ (and thus the behavior of the algorithm) has significantly changed, we need to reprove the competitive ratio guarantee. Our goal will be the following Lemma:
\begin{lemma}
	\label{lem:wsc_stablegood}
	If no swaps or $\gamma$-moves are possible, the solution is $\gamma^2 \cdot \log (f^{(t)}(\unvrs) / \fmin^{(t)})$-competitive.
\end{lemma}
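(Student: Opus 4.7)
The plan is to mimic Wolsey's analysis of greedy for submodular cover, using $\mff_\pi$ as a generalized ``rate'' and introducing an auxiliary potential that folds in the cost-ordered permutation $\psi$. First I would establish, in direct analogy to \cref{lem:sc_stablegood}, that when no $\gamma$-move or swap is available, every element $e \notin S_i := \pi_{1:i-1}$ satisfies $\mff_{\pi'(e)}(e) < \gamma \cdot \mff_\pi(\pi_i)$, where $\pi'(e)$ denotes $\pi$ with $e$ moved to position $i$. The no-swap condition ensures $\mff_\pi$ is non-increasing along $\pi$, so the tightest constraint in the $\gamma$-move definition occurs at position $i$ itself, and the argument from \cref{sec:unweightedsubmod} carries through essentially verbatim.

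Next I would introduce the potential
\[
  \tilde f(A) \;:=\; \sum_{j} \frac{f(\psi_j \mid A \cup \psi_{<j})}{c(\psi_j)},
\]
a cost-weighted residual summed along the $\psi$-ordering. A direct expansion of each mutual affinity term and the chain rule (\cref{fact:chainRule}) yield the clean telescoping identities $c(\pi_i)\,\mff_\pi(\pi_i) = \tilde f(S_i) - \tilde f(S_{i+1})$ and $c(e)\,\mff_{\pi'(e)}(e) = \tilde f(S_i) - \tilde f(S_i \cup \{e\})$. The $3$-increasing hypothesis now enters crucially: since positive mixed third derivatives are equivalent to mutual coverage decreasing under conditioning (cf.\ \cref{eq:submodofconditioning}), one can check that $A \mapsto f(\psi_j \mid A \cup \psi_{<j})$ is supermodular in $A$, so $\tilde f$ is itself supermodular, and hence $\sum_{e \in A}(\tilde f(S) - \tilde f(S \cup \{e\})) \ge \tilde f(S) - \tilde f(S \cup A)$ for every $A \subseteq \unvrs$. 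Applying this with $A = \mathrm{OPT} \setminus S_i$ (noting $\tilde f(S_i \cup \mathrm{OPT}) = 0$ because $\mathrm{OPT}$ covers $\unvrs$) and summing the first-step inequality over $\mathrm{OPT}$ yields the central rate bound $\tilde f(S_i) < \gamma c^* \mff_\pi(\pi_i)$. Classical Wolsey integration now kicks in: combining this with the telescoping identity gives $c(\pi_i) < \gamma c^* \ln(\tilde f(S_i)/\tilde f(S_{i+1}))$, and summing yields $\sum_i c(\pi_i) < \gamma c^* \ln(\tilde f(\emptyset)/\tilde f_{\mathrm{last}})$.

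The hardest part is the final conversion: bounding $\ln(\tilde f(\emptyset)/\tilde f_{\mathrm{last}})$ by $\gamma \ln(f^{(t)}(\unvrs)/\fmin^{(t)})$ to deliver the claimed $\gamma^2$. For the denominator, a Cauchy--Schwarz argument together with the observation that any $\psi_j$ contributing a non-zero marginal $f(\psi_j \mid T_j)$ must satisfy $c(\psi_j) \le c^*$ (otherwise $T_j$ already contains $\mathrm{OPT}$ and the marginal vanishes) gives $\tilde f(S) \ge f_S(\unvrs)/c^*$, whence $\tilde f_{\mathrm{last}} \ge \fmin/c^*$. For the numerator, one cannot use the trivial bound $\tilde f(\emptyset) \le f(\unvrs)/c_{\min}$ without incurring a spurious $\log(c^*/c_{\min})$ factor; instead I expect to re-apply the rate bound at $i=1$ to conclude $\tilde f(\emptyset) < \gamma c^* \mff_\pi(\pi_1)$ and then relate $\mff_\pi(\pi_1)$ back to $f(\unvrs)$ directly via the telescoping identity, which is the step that supplies the second factor of $\gamma$.
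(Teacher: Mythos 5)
Your route is genuinely different from the paper's, and most of it is correct and quite elegant. The paper bounds the cost by decomposing the permutation into \emph{levels} $\Ell_\ell = \{u : \mff_\pi(u) \in [\gamma^\ell, \gamma^{\ell+1})\}$, showing each level has cost $O(\gamma^2 c(\opt))$ (via a multi-element-shift argument whose reduction to a single $\gamma$-move is exactly where $3$-increasingness enters, cf.\ the subclaim inside \cref{claim:lvlcost}), and then separately bounding the number of non-trivial levels via the Extreme Values Lemma (\cref{claim:triviallvls}). You instead set up a Wolsey-style integration against the cost-weighted residual $\tilde f$. Your telescoping identities $c(\pi_i)\mff_\pi(\pi_i) = \tilde f(S_i) - \tilde f(S_{i+1})$ and $c(e)\mff_{\pi'(e)}(e) = \tilde f(S_i) - \tilde f(S_i \cup \{e\})$ both check out, and the observation that $3$-increasingness makes each map $A \mapsto f(\psi_j \mid A \cup \psi_{<j})$ supermodular (hence $\tilde f$ supermodular and non-increasing, giving $\tilde f(S) - \tilde f(S\cup A) \le \sum_{e\in A}(\tilde f(S) - \tilde f(S\cup\{e\}))$) is a correct and appealing repackaging of the same property. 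The lower bound $\tilde f(S_k) \ge \fmin/c^*$, using that no $\psi_j$ with $c(\psi_j) > c(\opt)$ contributes (since $\psi$ is cost-sorted, $\opt \subseteq \psi_{<j}$ for those $j$), is also right.

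The gap is in the last paragraph, and it is a real one. The Wolsey integration delivers $\sum_i c(\pi_i) \lesssim \gamma c^* \ln(\tilde f(\emptyset)/\tilde f(S_k))$, and the numerator $\tilde f(\emptyset) = \sum_j f(\psi_j\mid\psi_{<j})/c(\psi_j)$ can genuinely be as large as $f(\unvrs)/\cmin$ (e.g.\ if all contributing $\psi_j$ are cheap), so the ratio inherently contains a $c^*/\cmin$ factor. Your proposed fix --- re-applying the rate bound at $i=1$ to get $\tilde f(\emptyset) < \gamma c^* \mff_\pi(\pi_1)$ and then ``relating $\mff_\pi(\pi_1)$ back to $f(\unvrs)$ via the telescoping identity'' --- does not close this: the telescoping identity only re-expresses $\mff_\pi(\pi_1)$ in terms of $\tilde f$, and the only \emph{a priori} upper bound on $\mff_\pi(\pi_1)$ is of the form $f(\pi_1)/(c(\pi_1)\cmin) \le f(\unvrs)/\cmin^2$, which reintroduces the cost spread. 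Indeed the paper never bounds any individual $\mff_\pi$ value from above; its Extreme Values Lemma (part \labelcref{item:3incr_lem2}) instead bounds the \emph{total cost} of elements with $\mff_\pi$ above the threshold $f(\unvrs)^2/(\fmin (c^*)^2)$, using the $\sqrt{\mff_\pi}$ weighting together with the stability inequality of \cref{lem:lbOnmff}. That aggregate-cost bound, not a pointwise bound on $\mff_\pi(\pi_1)$, is what lets the paper avoid the $\log(\cmax/\cmin)$ factor. To rescue your approach you would need to import something of that flavor --- e.g.\ first peel off the prefix where $\tilde f(S_i) \gg f(\unvrs)/c^*$ and bound its cost by a separate argument --- and as written your proposal does not supply it.
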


For the remainder of \cref{sec:wsc_boundingcost}, we drop the superscript and refer to $f^{(t)}$ as simply $f$.

Define a level $\Ell_\ell$ be the collection of elements $u$ such that $\mff_\pi(u) \in [\gamma^\ell, \gamma^{\ell+1})$. Note that since no swaps were possible, permutation $\pi$ is sorted in decreasing order of $\mff_\pi$, and thus $\Ell_\ell$ forms a contiguous interval of indices in $\pi$. Our proof strategy will be to show that for any level $\ell$, the total cost of all elements in $\Ell_\ell$ is at most $\gamma^2\cdot c(\opt)$. Subsequently, we will argue that there are at most $O(\log f(\unvrs) / \fmin)$ non-trivial levels.

\begin{lemma}[Each Level is Inexpensive]
	\label{claim:lvlcost}
	If there are no swaps or $\gamma$-moves for $\pi$, then for any $\ell > 0$, the total cost of all elements in $\Ell_\ell$ is at most $\gamma^2\cdot c(\opt)$.
\end{lemma}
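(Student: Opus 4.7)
The plan is to charge $c(\Ell_\ell)$ to $c(\opt)$ via a rearrangement argument. Let $a$ denote the first position of $\Ell_\ell$ in $\pi$; this is well-defined because the absence of swaps forces $\pi$ to be sorted in non-increasing order of $\mff_\pi$. Writing $S_j := \pi_{1:a-1} \cup \psi_{1:j-1}$, iterating the chain rule (\cref{fact:chainRule}) yields $\sum_{u \in \Ell_\ell} \mfI_{\pi,\psi}(u, \psi_j) = \mutcov_f(\Ell_\ell, \psi_j \mid S_j)$. Combined with $\mff_\pi(u) \geq \gamma^\ell$ for $u \in \Ell_\ell$ and the definition of $\mff_\pi$ in \eqref{eq:3incrmff}, this gives
\[
\gamma^\ell\, c(\Ell_\ell) \;\leq\; \sum_{u \in \Ell_\ell} c(u)\,\mff_\pi(u) \;=\; \sum_{j}\frac{\mutcov_f(\Ell_\ell,\, \psi_j \mid S_j)}{c(\psi_j)}.
\]

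Next I would replace $\Ell_\ell$ with $\opt$ on the right-hand side in two steps. Since $\opt$ is a cover, $\mutcov_f(\opt, \psi_j \mid S_j) = f(\psi_j \mid S_j)$, while monotonicity gives $\mutcov_f(\Ell_\ell, \psi_j \mid S_j) \leq f(\psi_j \mid S_j)$, so the first quantity upper-bounds the second. Then, iterating the chain rule on the elements of $\opt$ and invoking the $3$-increasing property (which by \eqref{eq:submodofconditioning} says mutual coverage weakly decreases under additional conditioning) yields $\mutcov_f(\opt, \psi_j \mid S_j) \leq \sum_{v \in \opt} \mutcov_f(v, \psi_j \mid S_j)$. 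Stringing these together,
\[
\gamma^\ell\, c(\Ell_\ell) \;\leq\; \sum_{v \in \opt}\sum_{j}\frac{\mutcov_f(v, \psi_j \mid S_j)}{c(\psi_j)}.
\]

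Finally, for each $v \in \opt$ lying at some position $q \geq a$ in $\pi$, let $\pi'(v)$ be the permutation obtained by moving $v$ to position $a$; then $\pi'(v)_{1:a-1} = \pi_{1:a-1}$, so the corresponding inner sum is exactly $c(v)\,\mff_{\pi'(v)}(v)$. (Terms for $v \in \opt \cap \pi_{1:a-1}$ vanish because $v \in S_j$.) Since by assumption no legal $\gamma$-move exists and $\pi$ is sorted, the attempted move of $v$ to position $a$ must fail, giving $\mff_{\pi'(v)}(v) < \gamma \cdot \max_{i \geq a} \mff_\pi(\pi_i) = \gamma \cdot \mff_\pi(\pi_a) < \gamma^{\ell+2}$. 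Summing over $v \in \opt$ yields
\[
\gamma^\ell\, c(\Ell_\ell) \;<\; \gamma^{\ell+2}\sum_{v \in \opt} c(v) \;=\; \gamma^{\ell+2}\, c(\opt),
\]
so $c(\Ell_\ell) \leq \gamma^2\, c(\opt)$, as desired. The main obstacle is the second substitution step $\mutcov_f(\opt, \psi_j \mid S_j) \leq \sum_{v \in \opt} \mutcov_f(v, \psi_j \mid S_j)$: this is exactly where the $3$-increasing hypothesis is essential, because ordinary submodularity of $f$ gives no subadditivity for mutual coverage viewed as a set function of its first argument.
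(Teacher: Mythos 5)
Your proof is correct and follows essentially the same route as the paper's argument, just organized more cleanly: the paper proceeds by contradiction, uses the averaging fact to extract a single good $o \in \opt_{\geq}$, and proves a standalone claim that returning the other elements of $\opt_{\geq}$ to their original positions one at a time can only increase $\mff(o)$ via the $3$-increasing property, whereas you work directly, sum over all of $\opt$, and phrase the same $3$-increasing ingredient as subadditivity of $\mutcov_f(\cdot,\psi_j \mid S_j)$ in its set argument. The two invocations of $3$-increasingness are the same inequality read in opposite directions, and both proofs otherwise rest on the same pillars: the chain rule, the sorted-by-swaps structure of $\pi$, and the no-$\gamma$-move hypothesis.
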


\begin{proof}
	Suppose some level $\Ell_\ell$ has cost $c(\Ell_\ell) > \gamma^2 \cdot c(\opt)$. We will argue that in this case there must be a legal $\gamma$-move available. To start, using \cref{fact:averaging} we observe that:
	\begin{align}
		\gamma^{\ell} \leq \min_{u \in \Ell_\ell} \mff_\pi(u) \ \defeq \ \min_{u \in \Ell_\ell} \frac{\sum_{j \in [n]} \mfI_{\pi, \psi}(u , \psi_j)/  c(\psi_j)}{c(u)} &\leq \frac{\sum_{u\in \Ell_\ell} \sum_{j \in [n]} \mfI_{\pi, \psi}(u , \psi_j)/  c(\psi_j)}{\sum_{u \in \Ell_\ell} c(u)}, \notag
		\intertext{which by rearranging means:}
		\sum_{u\in \Ell_\ell} \sum_{j \in [n]} \frac{\mfI_{\pi, \psi}(u , \psi_j)}{ c(\psi_j)} \geq \gamma^\ell \cdot c(\Ell_\ell)&> \gamma^{\ell+2} \cdot c(\opt). \label{line:solCoversWell}
	\end{align}
	Let $\pi_{<}$ be the set of items preceding $\Ell_\ell$ in $\pi$, and $\pi_{\geq}$ be the set of items in or succeeding $\Ell_\ell$. We also define $\opt_{\geq} := \opt \cap \pi_{\geq}$.

	First, we imagine moving all the elements of $\opt_{\geq}$,
        simultaneously and in order according to $\pi$, to positions
        just before $\Ell_\ell$. Let $\pi'$ be this new
        permutation. We first show that some $o \in \opt$ has a high value
        $\mff_{\pi'}(o)$ after this move. Then we show that this
        element $o$ also constitutes a potential $\gamma$-move in $\pi$, a contradiction.

	For the first claim, we start by using \cref{fact:averaging} again:
		\begin{align}
	\max_{o \in \opt_{\geq}} \mff_{\pi'}(o) &\geq \frac{\sum_{o\in \opt_{\geq}} \sum_{j \in [n]} \mfI_{\pi', \psi}(o , \psi_j \mid \pi_<)/  c(\psi_j)}{\sum_{o \in \opt_{\geq}} c(o)} \notag \\
	&= \frac{1}{c(\opt_{\geq})} \cdot \sum_{o \in \opt_{\geq}} \sum_{j \in [n]} \frac{\mfI_{\pi', \psi}(o, \psi_j \mid \pi_{<})}{c(\psi_j)} \notag \\
	&= \frac{1}{c(\opt_{\geq})} \cdot \sum_{j \in [n]} \frac{f_{\psi}(\psi_j \mid \pi_{<})}{c(\psi_j)} \label{line:removeOptByChain} \\
	&\geq \frac{1}{c(\opt_{\geq})} \cdot \sum_{u \in \Ell_\ell} \sum_{j \in [n]} \frac{\mfI_{\pi, \psi}(u, \psi_j \mid \pi_{<})}{c(\psi_j)} \label{line:insertSolByChain}\\
	\intertext{The lines \eqref{line:removeOptByChain} and
		\eqref{line:insertSolByChain} above follow by
		\cref{obs:chainRuleExplained} (note that \eqref{line:insertSolByChain}) is an
		inequality because $\Ell_\ell$ may be a strict subset of
		$\pi_{\geq}$). Since $u \in \Ell_\ell \subseteq \pi_{\geq}$, this is:}
	&= \frac{1}{c(\opt_{\geq})} \cdot \sum_{u \in \Ell_\ell} \sum_{j \in [n]} \frac{\mfI_{\pi, \psi}(u, \psi_j)}{c(\psi_j)} \label{line:hatmfI_eq_mfI} \\
	&> \gamma^{\ell+2}, \label{line:applyingSolCoversWell}
	\end{align}
	where line \eqref{line:applyingSolCoversWell} used the inequality
	\eqref{line:solCoversWell}. In summary, $\mff_{\pi'}(o)$ is a
	factor $\gamma$ bigger than any of the $\mff_{\pi}(u)$ values
	for $u \in \Ell_{\ell}$.
	
		\begin{figure}
			\begin{mdframed}
		\tikzset{every picture/.style={line width=0.75pt}} 
		\begin{subfigure}{.5\textwidth}
			\centering
			\scalebox{0.77}{
				\tikzset{every picture/.style={line width=0.75pt}} 

\begin{tikzpicture}[x=0.75pt,y=0.75pt,yscale=-1,xscale=1]

\draw  [dash pattern={on 0.84pt off 2.51pt}]  (100,126) -- (470,125) ;
\draw [line width=6]    (240,125) -- (410,125) ;
\draw  [draw opacity=0][fill={rgb, 255:red, 255; green, 0; blue, 0 }  ,fill opacity=1 ] (280,125) .. controls (280,122.24) and (282.24,120) .. (285,120) .. controls (287.76,120) and (290,122.24) .. (290,125) .. controls (290,127.76) and (287.76,130) .. (285,130) .. controls (282.24,130) and (280,127.76) .. (280,125) -- cycle ;
\draw  [draw opacity=0][fill={rgb, 255:red, 255; green, 0; blue, 0 }  ,fill opacity=1 ] (316,125) .. controls (316,122.24) and (318.24,120) .. (321,120) .. controls (323.76,120) and (326,122.24) .. (326,125) .. controls (326,127.76) and (323.76,130) .. (321,130) .. controls (318.24,130) and (316,127.76) .. (316,125) -- cycle ;
\draw  [draw opacity=0][fill={rgb, 255:red, 255; green, 0; blue, 0 }  ,fill opacity=1 ] (216,125) .. controls (216,122.24) and (218.24,120) .. (221,120) .. controls (223.76,120) and (226,122.24) .. (226,125) .. controls (226,127.76) and (223.76,130) .. (221,130) .. controls (218.24,130) and (216,127.76) .. (216,125) -- cycle ;
\draw  [draw opacity=0][fill={rgb, 255:red, 255; green, 0; blue, 0 }  ,fill opacity=1 ] (426,125) .. controls (426,122.24) and (428.24,120) .. (431,120) .. controls (433.76,120) and (436,122.24) .. (436,125) .. controls (436,127.76) and (433.76,130) .. (431,130) .. controls (428.24,130) and (426,127.76) .. (426,125) -- cycle ;
\draw  [draw opacity=0][fill={rgb, 255:red, 255; green, 0; blue, 0 }  ,fill opacity=1 ] (126,125) .. controls (126,122.24) and (128.24,120) .. (131,120) .. controls (133.76,120) and (136,122.24) .. (136,125) .. controls (136,127.76) and (133.76,130) .. (131,130) .. controls (128.24,130) and (126,127.76) .. (126,125) -- cycle ;
\draw  [draw opacity=0][fill={rgb, 255:red, 255; green, 0; blue, 0 }  ,fill opacity=1 ] (456,125) .. controls (456,122.24) and (458.24,120) .. (461,120) .. controls (463.76,120) and (466,122.24) .. (466,125) .. controls (466,127.76) and (463.76,130) .. (461,130) .. controls (458.24,130) and (456,127.76) .. (456,125) -- cycle ;
\draw [color={rgb, 255:red, 255; green, 0; blue, 0 }  ,draw opacity=1 ]   (285,120) .. controls (278.66,87.82) and (252.84,98.43) .. (240.89,118.44) ;
\draw [shift={(240,120)}, rotate = 298.71] [color={rgb, 255:red, 255; green, 0; blue, 0 }  ,draw opacity=1 ][line width=0.75]    (10.93,-3.29) .. controls (6.95,-1.4) and (3.31,-0.3) .. (0,0) .. controls (3.31,0.3) and (6.95,1.4) .. (10.93,3.29)   ;
\draw [color={rgb, 255:red, 255; green, 0; blue, 0 }  ,draw opacity=1 ]   (321,120) .. controls (305.73,90.45) and (275.91,90.98) .. (247.78,118.71) ;
\draw [shift={(246.5,120)}, rotate = 314.5] [color={rgb, 255:red, 255; green, 0; blue, 0 }  ,draw opacity=1 ][line width=0.75]    (10.93,-3.29) .. controls (6.95,-1.4) and (3.31,-0.3) .. (0,0) .. controls (3.31,0.3) and (6.95,1.4) .. (10.93,3.29)   ;
\draw [color={rgb, 255:red, 255; green, 0; blue, 0 }  ,draw opacity=1 ]   (431,120) .. controls (353.68,72.72) and (286.54,90.45) .. (255.88,118.7) ;
\draw [shift={(254.5,120)}, rotate = 315.97] [color={rgb, 255:red, 255; green, 0; blue, 0 }  ,draw opacity=1 ][line width=0.75]    (10.93,-3.29) .. controls (6.95,-1.4) and (3.31,-0.3) .. (0,0) .. controls (3.31,0.3) and (6.95,1.4) .. (10.93,3.29)   ;
\draw [color={rgb, 255:red, 255; green, 0; blue, 0 }  ,draw opacity=1 ]   (461,120) .. controls (397.47,70.75) and (296.1,91.36) .. (264.88,118.74) ;
\draw [shift={(263.5,120)}, rotate = 316.49] [color={rgb, 255:red, 255; green, 0; blue, 0 }  ,draw opacity=1 ][line width=0.75]    (10.93,-3.29) .. controls (6.95,-1.4) and (3.31,-0.3) .. (0,0) .. controls (3.31,0.3) and (6.95,1.4) .. (10.93,3.29)   ;
\draw   (409.5,81) .. controls (409.5,76.33) and (407.17,74) .. (402.5,74) -- (341.33,74) .. controls (334.66,74) and (331.33,71.67) .. (331.33,67) .. controls (331.33,71.67) and (328,74) .. (321.33,74)(324.33,74) -- (247.5,74) .. controls (242.83,74) and (240.5,76.33) .. (240.5,81) ;

\draw (321,40.4) node [anchor=north west][inner sep=0.75pt]    {$\mathcal{L}_{\ell }$};
\draw (125,132.4) node [anchor=north west][inner sep=0.75pt]  [color={rgb, 255:red, 255; green, 0; blue, 0 }  ,opacity=1 ]  {$o_{1}$};
\draw (215,132.4) node [anchor=north west][inner sep=0.75pt]  [color={rgb, 255:red, 255; green, 0; blue, 0 }  ,opacity=1 ]  {$o_{2}$};
\draw (277,132.4) node [anchor=north west][inner sep=0.75pt]  [color={rgb, 255:red, 255; green, 0; blue, 0 }  ,opacity=1 ]  {$o_{3}$};
\draw (313,132.4) node [anchor=north west][inner sep=0.75pt]  [color={rgb, 255:red, 255; green, 0; blue, 0 }  ,opacity=1 ]  {$o_{4}$};
\draw (424,132.4) node [anchor=north west][inner sep=0.75pt]  [color={rgb, 255:red, 255; green, 0; blue, 0 }  ,opacity=1 ]  {$o_{5}$};
\draw (455,132.4) node [anchor=north west][inner sep=0.75pt]  [color={rgb, 255:red, 255; green, 0; blue, 0 }  ,opacity=1 ]  {$o_{6}$};

\phantom{
\draw (124,132.4) node [anchor=north west][inner sep=0.75pt]  [color={rgb, 255:red, 255; green, 0; blue, 0 }  ,opacity=1 ]  {$o_{1}$};
\draw (214,132.4) node [anchor=north west][inner sep=0.75pt]  [color={rgb, 255:red, 255; green, 0; blue, 0 }  ,opacity=1 ]  {$o_{2}$};
\draw (223,164.4) node [anchor=north west][inner sep=0.75pt]  [color={rgb, 255:red, 255; green, 0; blue, 0 }  ,opacity=1 ]  {$o_{3}$};
\draw (243,164.4) node [anchor=north west][inner sep=0.75pt]  [color={rgb, 255:red, 255; green, 0; blue, 0 }  ,opacity=1 ]  {$o_{4}$};
\draw (263,164.4) node [anchor=north west][inner sep=0.75pt]  [color={rgb, 255:red, 255; green, 0; blue, 0 }  ,opacity=1 ]  {$o_{5}$};
\draw (283,164.4) node [anchor=north west][inner sep=0.75pt]  [color={rgb, 255:red, 255; green, 0; blue, 0 }  ,opacity=1 ]  {$o_{6}$};}

\end{tikzpicture}}
			\vspace{0.01in}
			\caption{Permutation $\pi$.}
		\end{subfigure}%
		\vline
		\begin{subfigure}{.5\textwidth}
			\centering
			\scalebox{0.77}{
				\tikzset{every picture/.style={line width=0.75pt}} 

\begin{tikzpicture}[x=0.75pt,y=0.75pt,yscale=-1,xscale=1]

\draw  [dash pattern={on 0.84pt off 2.51pt}]  (100,126) -- (470,125) ;
\draw [line width=6]    (280,125) -- (450,125) ;
\draw  [draw opacity=0][fill={rgb, 255:red, 255; green, 0; blue, 0 }  ,fill opacity=1 ] (240,125) .. controls (240,122.24) and (242.24,120) .. (245,120) .. controls (247.76,120) and (250,122.24) .. (250,125) .. controls (250,127.76) and (247.76,130) .. (245,130) .. controls (242.24,130) and (240,127.76) .. (240,125) -- cycle ;
\draw  [draw opacity=0][fill={rgb, 255:red, 255; green, 0; blue, 0 }  ,fill opacity=1 ] (250,125) .. controls (250,122.24) and (252.24,120) .. (255,120) .. controls (257.76,120) and (260,122.24) .. (260,125) .. controls (260,127.76) and (257.76,130) .. (255,130) .. controls (252.24,130) and (250,127.76) .. (250,125) -- cycle ;
\draw  [draw opacity=0][fill={rgb, 255:red, 255; green, 0; blue, 0 }  ,fill opacity=1 ] (216,125) .. controls (216,122.24) and (218.24,120) .. (221,120) .. controls (223.76,120) and (226,122.24) .. (226,125) .. controls (226,127.76) and (223.76,130) .. (221,130) .. controls (218.24,130) and (216,127.76) .. (216,125) -- cycle ;
\draw  [draw opacity=0][fill={rgb, 255:red, 255; green, 0; blue, 0 }  ,fill opacity=1 ] (260,125) .. controls (260,122.24) and (262.24,120) .. (265,120) .. controls (267.76,120) and (270,122.24) .. (270,125) .. controls (270,127.76) and (267.76,130) .. (265,130) .. controls (262.24,130) and (260,127.76) .. (260,125) -- cycle ;
\draw  [draw opacity=0][fill={rgb, 255:red, 255; green, 0; blue, 0 }  ,fill opacity=1 ] (126,125) .. controls (126,122.24) and (128.24,120) .. (131,120) .. controls (133.76,120) and (136,122.24) .. (136,125) .. controls (136,127.76) and (133.76,130) .. (131,130) .. controls (128.24,130) and (126,127.76) .. (126,125) -- cycle ;
\draw  [draw opacity=0][fill={rgb, 255:red, 255; green, 0; blue, 0 }  ,fill opacity=1 ] (270,125) .. controls (270,122.24) and (272.24,120) .. (275,120) .. controls (277.76,120) and (280,122.24) .. (280,125) .. controls (280,127.76) and (277.76,130) .. (275,130) .. controls (272.24,130) and (270,127.76) .. (270,125) -- cycle ;
\draw   (450.5,107) .. controls (450.5,102.33) and (448.17,100) .. (443.5,100) -- (382.33,100) .. controls (375.66,100) and (372.33,97.67) .. (372.33,93) .. controls (372.33,97.67) and (369,100) .. (362.33,100)(365.33,100) -- (288.5,100) .. controls (283.83,100) and (281.5,102.33) .. (281.5,107) ;
\draw [color={rgb, 255:red, 255; green, 0; blue, 0 }  ,draw opacity=1 ]   (233.5,164) -- (240.93,138.92) ;
\draw [shift={(241.5,137)}, rotate = 466.5] [color={rgb, 255:red, 255; green, 0; blue, 0 }  ,draw opacity=1 ][line width=0.75]    (10.93,-3.29) .. controls (6.95,-1.4) and (3.31,-0.3) .. (0,0) .. controls (3.31,0.3) and (6.95,1.4) .. (10.93,3.29)   ;
\draw [color={rgb, 255:red, 255; green, 0; blue, 0 }  ,draw opacity=1 ]   (251.5,163) -- (254.26,139.99) ;
\draw [shift={(254.5,138)}, rotate = 456.84] [color={rgb, 255:red, 255; green, 0; blue, 0 }  ,draw opacity=1 ][line width=0.75]    (10.93,-3.29) .. controls (6.95,-1.4) and (3.31,-0.3) .. (0,0) .. controls (3.31,0.3) and (6.95,1.4) .. (10.93,3.29)   ;
\draw [color={rgb, 255:red, 255; green, 0; blue, 0 }  ,draw opacity=1 ]   (269.5,164) -- (265.82,140.97) ;
\draw [shift={(265.5,139)}, rotate = 440.91] [color={rgb, 255:red, 255; green, 0; blue, 0 }  ,draw opacity=1 ][line width=0.75]    (10.93,-3.29) .. controls (6.95,-1.4) and (3.31,-0.3) .. (0,0) .. controls (3.31,0.3) and (6.95,1.4) .. (10.93,3.29)   ;
\draw [color={rgb, 255:red, 255; green, 0; blue, 0 }  ,draw opacity=1 ]   (287.5,166) -- (277.28,141.84) ;
\draw [shift={(276.5,140)}, rotate = 427.07] [color={rgb, 255:red, 255; green, 0; blue, 0 }  ,draw opacity=1 ][line width=0.75]    (10.93,-3.29) .. controls (6.95,-1.4) and (3.31,-0.3) .. (0,0) .. controls (3.31,0.3) and (6.95,1.4) .. (10.93,3.29)   ;

\draw (362,66.4) node [anchor=north west][inner sep=0.75pt]    {$\mathcal{L}_{\ell }$};
\draw (124,132.4) node [anchor=north west][inner sep=0.75pt]  [color={rgb, 255:red, 255; green, 0; blue, 0 }  ,opacity=1 ]  {$o_{1}$};
\draw (214,132.4) node [anchor=north west][inner sep=0.75pt]  [color={rgb, 255:red, 255; green, 0; blue, 0 }  ,opacity=1 ]  {$o_{2}$};
\draw (223,164.4) node [anchor=north west][inner sep=0.75pt]  [color={rgb, 255:red, 255; green, 0; blue, 0 }  ,opacity=1 ]  {$o_{3}$};
\draw (243,164.4) node [anchor=north west][inner sep=0.75pt]  [color={rgb, 255:red, 255; green, 0; blue, 0 }  ,opacity=1 ]  {$o_{4}$};
\draw (263,164.4) node [anchor=north west][inner sep=0.75pt]  [color={rgb, 255:red, 255; green, 0; blue, 0 }  ,opacity=1 ]  {$o_{5}$};
\draw (283,164.4) node [anchor=north west][inner sep=0.75pt]  [color={rgb, 255:red, 255; green, 0; blue, 0 }  ,opacity=1 ]  {$o_{6}$};

\phantom{
\draw (321,40.4) node [anchor=north west][inner sep=0.75pt]    {$\mathcal{L}_{\ell }$};
\draw   (409.5,81) .. controls (409.5,76.33) and (407.17,74) .. (402.5,74) -- (341.33,74) .. controls (334.66,74) and (331.33,71.67) .. (331.33,67) .. controls (331.33,71.67) and (328,74) .. (321.33,74)(324.33,74) -- (247.5,74) .. controls (242.83,74) and (240.5,76.33) .. (240.5,81) ;
}

\end{tikzpicture}}
			\vspace{0.01in}
			\caption{Permutation $\pi'$, with $\opt_{\geq}$ moved ahead of $\Ell_\ell$.}
		\end{subfigure} 
		\caption{Illustration of the proof of \cref{claim:lvlcost}.}
		\label{fig:3incr_competitive}
		\end{mdframed}
	\end{figure}

        However, since permutation $\pi'$ is obtained by moving many
        elements of $\pi$ and not a single $\gamma$-move, we need to
        show that moving this element $o$ alone gives a legal
        $\gamma$-move in $\pi$. In fact, observe that we have not yet
        used that $f \in \mathcal{D}_3^+$: we will use it now.
	
	\begin{claim}
		Let $\pi^o$ be the permutation derived from $\pi$ where a single element $o \in \opt_{\geq}$ is moved before $\Ell_\ell$. Then $\mff_{\pi^o}(o) \geq \mff_{\pi'}(o)$.
	\end{claim}

	\begin{proof}
		Start from $\pi'$ and move an element $o' \in \opt_{\geq}$ with $o' \neq o$ back to its original position in $\pi$. Call this permutation $\pi''$. Then:
		\begin{align*}
		\mff_{\pi''}(o) &= \sum_{j \in [n]} \frac{\mfI_{\pi'', \psi}(o, \psi_j)}{c(o) c(\psi_j)} \\
		&= \sum_{j \in [n]} \frac{\mfI_{\pi', \psi}(o, \psi_j) + (\mfI_{\pi'', \psi}(o, \psi_j) - \mfI_{\pi'', \psi}(o, \psi_j \mid \{o'\}))}{c(o) c(\psi_j)}\\	
		&\geq \sum_{j \in [n]} \frac{\mfI_{\pi', \psi}(o, \psi_j)}{c(o) c(\psi_j)},
		\end{align*}
		where the second equality used $\mfI_{\pi', \psi}(o, \psi_j) = \mfI_{\pi'', \psi}(o, \psi_j \mid \{o'\})$, and 
		where the inequality used equation \eqref{eq:submodofconditioning}, that $\mutcov_f(a,b) - \mutcov_f(a, b \mid c) \geq 0$ for $3$-increasing functions. Repeating this process inductively until all elements but $o$ have been returned to their original positions in $\pi$ yields the permutation $\pi^o$, which proves the claim.
	\end{proof}
	Now we can complete the proof of \cref{claim:lvlcost}. This means that if $\mff_{\pi'}(o) \geq \gamma^{\ell+2}$, then
        there is a legal $\gamma$-move (namely the one which moves $o$
        ahead of $\Ell_\ell$), because by assumption every $u \in
        \pi_{\geq}$ has $\mff_\pi(u) \leq
        \gamma^{\ell+1}$. This contradicts the assumption that none existed.
\end{proof}

Next, we argue that the cost of all elements with very
high or very low values of $\mff_\pi$ is small.
\begin{lemma}[Extreme Values Lemma]
	\label{claim:triviallvls}
	If there are no swaps or $\gamma$-moves for $\pi$, the following hold:
	\begin{enumerate}[label=\textbf{(\roman*)}]
		\item There are no elements $u$ such that  $0 < \mff_\pi(u) \leq \fmin / (\gamma \cdot (c(\opt))^2)$ \label{item:3incr_lem1}.
		\item The total cost of all elements $u$ such that $\mff_\pi(u) \geq (f(\unvrs))^2 / (\fmin \cdot (c(\opt))^2)$ is at most $\sqrt{\gamma} \cdot \opt$ \label{item:3incr_lem2}.
	\end{enumerate}
\end{lemma}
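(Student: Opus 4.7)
Both parts proceed by contradiction: assume the claimed bound fails and exhibit a legal $\gamma$-move, contradicting the stability hypothesis. The common workhorse, inherited from the subclaim in the proof of \cref{claim:lvlcost}, is to restrict the defining sum of $\mff$ to the prefix $J_1 := \{j : c(\psi_j) \leq c(\opt)\}$ of $\psi$. Since each $o \in \opt$ satisfies $c(o) \leq c(\opt)$, we have $\opt \sse \psi_{J_1}$, so $\psi_{J_1} \cup S$ covers $\unvrs$ for every $S \sse \unvrs$. Combined with the chain rule (telescoping over the prefix $J_1$), this yields the key identity $\sum_{j \in J_1} \mutcov_f(a ; \psi_j \mid S \cup \psi_{<j}) = f(a \mid S)$, which converts a weighted $\mff$-sum into an ordinary marginal at the cost of a factor $1/c(\opt)$ arising from $c(\psi_j) \leq c(\opt)$ on $J_1$.

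For part (i), suppose for contradiction that $u = \pi_i$ has $0 < \mff_\pi(u) \leq \fmin / (\gamma (c(\opt))^2)$. Because $\mff_\pi(u) > 0$ forces $f(u \mid \pi_{<i}) \geq \fmin$ by definition of $\fmin$, and $\opt \cup \pi_{<i}$ covers $\unvrs$, the set $\opt_{\geq} := \opt \cap \pi_{\geq i}$ satisfies $f(\opt_{\geq} \mid \pi_{<i}) \geq \fmin$. Averaging over $\opt_{\geq}$ produces $o^* \in \opt_{\geq}$ with $f(o^* \mid \pi_{<i}) / c(o^*) \geq \fmin / c(\opt)$. Move $o^*$ to position $i$; restricting the sum for $\mff_{\pi^{o^*}}(o^*)$ to $j \in J_1$ and applying the key identity yields $\mff_{\pi^{o^*}}(o^*) \geq f(o^* \mid \pi_{<i}) / (c(o^*)\,c(\opt)) \geq \fmin / (c(\opt))^2 \geq \gamma \cdot \mff_\pi(u)$. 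Since no swaps exist, $\mff_\pi$ is non-increasing along $\pi$, so $\mff_\pi(\pi_l) \leq \mff_\pi(u)$ for every $l \geq i$; the move is therefore legal against every element jumped over, a contradiction. The degenerate case $o^* = u$ plugs back into the same chain of inequalities to give $\mff_\pi(u) \geq \fmin/(c(\opt))^2$ directly.

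For part (ii), let $U = \pi_{1:k}$ (a prefix of $\pi$ by the no-swaps sorting) be the set of elements with $\mff_\pi(u) \geq \beta := (f(\unvrs))^2 / (\fmin (c(\opt))^2)$, and assume $c(U) > \sqrt{\gamma} \cdot c(\opt)$ for contradiction. The plan mirrors \cref{claim:lvlcost}: move $\opt_{\geq} := \opt \cap \pi_{\geq k}$ just before $\pi_k$ to form $\pi'$; average via $\max_{o \in \opt_{\geq}} c(o) \mff_{\pi'}(o) \geq \tfrac{1}{c(\opt_{\geq})} \sum_{o} c(o) \mff_{\pi'}(o)$; and use the $3$-increasing subclaim to lift the bound to $\mff_{\pi^{o^*}}(o^*)$ for the single extremal $o^*$. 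A chain-rule computation expresses the averaged quantity in terms of $\sum_{j \in J_1} f(\psi_j \mid \pi_{<k} \cup \psi_{<j})/c(\psi_j)$. Run in parallel with the identity $\sum_{u \in U} c(u) \mff_\pi(u) = \sum_j \mutcov_f(U ; \psi_j \mid \psi_{<j})/c(\psi_j)$ and the hypothesis-based lower bound $\sum_{u \in U} c(u) \mff_\pi(u) \geq \beta c(U) > \sqrt{\gamma}\,\beta\,c(\opt)$, this should force $\mff_{\pi^{o^*}}(o^*)$ to exceed the threshold $\gamma \mff_\pi(\pi_k) \geq \gamma \beta$, producing the required $\gamma$-move.

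The principal obstacle will be calibrating part (ii) sharply. The threshold $\gamma\beta = \gamma (f(\unvrs))^2/(\fmin(c(\opt))^2)$ is large, while a naive chain-rule-only bound (as in part (i)) on $\mff_{\pi^{o^*}}(o^*)$ only yields $\fmin/(c(\opt))^2$, falling short by a factor of $(f(\unvrs)/\fmin)^2$. Closing this gap requires the hypothesis $c(U) > \sqrt{\gamma}\, c(\opt)$ to be used \emph{genuinely} in the averaging --- presumably by splitting one factor of $\gamma$ between the averaging over $\opt_{\geq}$ and the chain-rule estimate over $\psi_{J_1}$, with each contributing a $\sqrt{\gamma}$ factor. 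This balancing act, together with guaranteeing the single-element move is legal against every element jumped over (rather than merely against $\pi_k$), is where I expect the bulk of the technical care to be needed.
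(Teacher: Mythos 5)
Your part~(i) argument is correct, and it takes a slightly different route than the paper: you exhibit a $\gamma$-move by promoting an element $o^* \in \opt$ (averaged via $\max_o f(o\mid\pi_{<i})/c(o) \geq \fmin/c(\opt)$, followed by the chain-rule collapse over $J_1$), whereas the paper first proves the cleaner pairwise stability inequality (\cref{lem:lbOnmff}): for every $\pi_i$ and every $j$ with $\mfI_{\pi,\psi}(\pi_i,\psi_j) > 0$, one has $\mff_\pi(\pi_i) > f(\psi_j\mid\pi_{1:i-1})/(\gamma\,c(\psi_j)^2)$, which follows by expanding $\mff_{\pi'}(\psi_j)$ when $\psi_j$ alone is slid ahead of $\pi_i$. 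Your route and the paper's route are morally equivalent here --- both are ``no $\gamma$-move'' contradictions --- but the paper's pairwise inequality is a more versatile artifact, and that matters for what comes next.

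For part~(ii), the gap you flag is real and your proposed strategy cannot be repaired by rebalancing factors of $\gamma$. The obstruction is structural. If you move an $\opt$ element to position $p$, legality requires $\mff_{\pi^{o^*}}(o^*) \geq \gamma\,\mff_\pi(\pi_p)$ (the \emph{largest} jumped-over value, by sortedness). Taking $p=1$ makes that threshold $\gamma\,\mff_\pi(\pi_1)$, which can dwarf $\gamma\beta$; taking $p=k$ (the last index of $U$) shrinks the threshold to $\gamma\,\mff_\pi(\pi_k)$ but also shrinks the coverage budget, because now $\pi_{<k} = U\setminus\{\pi_k\}$ sits in the conditioning set and the chain-rule bound reduces to $\max_o \mff_{\pi'}(o) \geq \tfrac{1}{c(\opt)}\sum_{i\geq k} c(\pi_i)\mff_\pi(\pi_i)$, which one cannot push above $\gamma\,\mff_\pi(\pi_k)$ without an a priori bound on $c(\pi_k)$. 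No intermediate $p$ escapes this tension: a local $\gamma$-move contradiction is a pointwise statement, while part~(ii) bounds the \emph{aggregate cost} of a set, which is a global statement.

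The paper instead abandons the $\gamma$-move framing entirely for part~(ii). It lets $S$ be the set of high-$\mff_\pi$ indices and applies \cref{fact:averaging} in the form
\[
\min_{i\in S}\sqrt{\mff_\pi(\pi_i)} \ = \ \min_{i\in S}\frac{\mff_\pi(\pi_i)}{\sqrt{\mff_\pi(\pi_i)}} \ \leq \ \frac{1}{c(S)}\sum_{i\in S}\sum_{j} \frac{\mfI_{\pi,\psi}(\pi_i,\psi_j)}{c(\psi_j)\sqrt{\mff_\pi(\pi_i)}},
\]
and then invokes \cref{lem:lbOnmff} to lower-bound $c(\psi_j)\sqrt{\mff_\pi(\pi_i)}$ by $\sqrt{f(\psi_j\mid\pi_{1:i-1})/\gamma} \geq \sqrt{\fmin/\gamma}$ term by term; the double sum $\sum_{i,j}\mfI_{\pi,\psi}(\pi_i,\psi_j) = f(\unvrs)$ then closes the loop to yield $c(S)\leq\sqrt{\gamma}\,c(\opt)$. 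So the pairwise stability inequality from part~(i) is the load-bearing lemma for both parts, and part~(ii) combines it with a weighted double-counting/Cauchy--Schwarz-style averaging rather than exhibiting any move. You should redo part~(ii) along these lines; the ingredients you already have (chain rule, the $J_1$ collapse, \cref{fact:averaging}) are exactly what is needed, but they have to be assembled as a global cost bound rather than a contradiction via a single move.
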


\begin{proof}[Proof of \cref{claim:triviallvls}]
	To prove item \labelcref{item:3incr_lem1}, we observe that if permutation $\pi$ has no local moves, then every element $u$ must have high enough $\mff_\pi(u)$ to prevent any elements that follow it from cutting it in line.
	\begin{claim}
		\label{lem:lbOnmff}
		If there are no swaps or $\gamma$-moves for $\pi$, then for every $\pi_i \in \unvrs$, and every $j\in [n]$ such that $\mfI_{\pi, \psi}(\pi_i, \psi_j)>0$ we have:
		\[\mff_\pi(\pi_i) > \frac{f(\psi_j \mid \pi_{1:i-1})}{\gamma \cdot (c(\psi_j))^2}.\]
	\end{claim}
	
	\begin{proof}
		If there are no swaps or $\gamma$ moves, then $\mff_{\pi'}(\psi_j) < \gamma \cdot \mff_{\pi_i}$ (where $\pi'$ is the permutation obtained from $\pi$ by moving the element $\psi_j$ to the position ahead of $u$). Expanding definitions:
		\[\mff_{\pi'}(\psi_j) \ \defeq \ \sum_{j' \in [n]} \frac{\mfI_{\pi', \psi}(\psi_j, \psi_{j'})}{c(\psi_j)c(\psi_{j'})} \geq \sum_{j' \in [n]} \frac{\mfI_{\pi', \psi}(\psi_j, \psi_{j'})}{(c(\psi_j))^2} = \frac{f(\psi_j \mid \pi_{1:i-1})}{(c(\psi_j))^2}.\]
		In the first inequality we used that $c(\psi_{j'}) \leq c(\psi_j)$ by \cref{lem:wsc_costratio}, in the second equality we used \cref{obs:chainRuleExplained}. Rearranging terms yields the claim.
	\end{proof}
	
	Now item \labelcref{item:3incr_lem1} follows by setting $\psi_j$ to be the cheapest element that succeeds $u$ in the permutation. Note that $c(\psi_j) \leq c(\opt)$.
	
	For item \labelcref{item:3incr_lem2}, let $S$ be the set of indices $i$ with $\mff_\pi(\pi_i) \geq (f(\unvrs))^2 / (\fmin \cdot (c(\opt))^2)$. Then by \cref{fact:averaging}:
	\begin{align}
	\frac{f(\unvrs)}{c(\opt) \sqrt{\fmin}} &\leq \min_{i \in S} \sqrt{\mff_{\pi}(\pi_i)} = \min_{i \in S} \frac{\mff_\pi(\pi_i)}{\sqrt{\mff_\pi(\pi_i)}} \\
	&\defeq \ \frac{1}{c(\pi_i)} \sum_{j \in [n]} \frac{\mfI_{\pi, \psi}(\pi_i, \psi_j)}{c(\psi_j) \sqrt{\mff_\pi(\pi_i)}} \\
	&\leq \frac{1}{c(S)} \cdot \sum_{i\in S} \sum_{j \in [n]} \frac{\mfI_{\pi, \psi}(\pi_i, \psi_j)}{c(\psi_j) \sqrt{\mff_\pi(\pi_i)}} \notag \\
	\intertext{Rearranging to bound the cost:}
	c(S) &\leq \frac{ c(\opt) \sqrt{\fmin} }{f(\unvrs)} \cdot \sum_{i \in S} \sum_{j \in [n]} \frac{\mfI_{\pi, \psi}(\pi_i, \psi_j)}{c(\psi_j) \sqrt{\mff_\pi(\pi_i)}} \notag \\
	&\leq \frac{\sqrt{\gamma} \cdot  c(\opt) \sqrt{\fmin}}{f(\unvrs)} \cdot \sum_{i \in S} \sum_{j \in [n]} \frac{\mfI_{\pi, \psi}(\pi_i, \psi_j)}{\sqrt{f(\psi_j \mid \pi_{1:i-1})}} \label{eq:wsc_lvlboundstability} \\
	&\leq \frac{\sqrt{\gamma} \cdot  c(\opt) \sqrt{\fmin}}{f(\unvrs)} \cdot \sum_{i \in S} \sum_{j \in [n]} \frac{\mfI_{\pi, \psi}(\pi_i, \psi_j)}{\sqrt{\fmin}} \notag\\
	&\leq \frac{\sqrt{\gamma} \cdot  c(\opt)}{f(\unvrs)} \cdot \sum_{i \in S} \sum_{j \in [n]} \mfI_{\pi, \psi}(\pi_i, \psi_j) \notag\\
	&= \sqrt{\gamma} \cdot c(\opt). \label{eq:wsc_lvlboundCR}
	\end{align}
	Above, step \eqref{eq:wsc_lvlboundstability} is due to \cref{lem:lbOnmff} again and noting that if the denominator is 0 then the numerator is also 0, and step \eqref{eq:wsc_lvlboundCR} is due to \cref{obs:chainRuleExplained}.
\end{proof}

Using these ingredients, we can now wrap up the proof of the competitive ratio.
\begin{proof}[Proof of \cref{lem:wsc_stablegood}]
	The number of levels $\ell$ such that $\gamma^\ell$ lies between $\fmin / (\gamma (c(\opt))^2)$ and $(f(\unvrs))^2 / (\fmin \cdot (c(\opt))^2)$ is $O(\log_{\gamma} (f(\unvrs) / \fmin)^2) = O(\log (f(\unvrs) / \fmin))$. By \cref{claim:lvlcost}, each level in this range has cost at most $O(c(\opt))$. By \cref{claim:triviallvls}, there are no elements with non-zero coverage in levels below this range, and the total cost of all elements above this range is $O(c(\opt))$. Thus the total cost of the solution is $O(\log (f(\unvrs) / \fmin))\cdot c(\opt)$. 
\end{proof}

\subsection{Bounding the Recourse}

We follow our recipe of using the modified Tsallis entropy as a potential (with a reminder that the underlying definitions of $\mff_\pi$ have changed) with $\alpha$ fixed to $1/2$:
\begin{align*}
	\Phi_{1/2}(\pi) &:= \sum_{i=1}^n c(\pi_i) \sqrt{\mff_\pi(\pi_i)}.
\end{align*}
It is worthwhile to interpret these quantities in the context of \hvc. In this case, $\psi_i$ and $\psi_j$ correspond to vertices. Let $\Gamma(v)$ (and $\Gamma(V)$) denote the edge-neighborhood of $v$ (and the union of the edge neighborhoods of vertices in the set $V$), then 
\[\mfI_{\pi, \psi}(\pi_i , \psi_j) = |(\Gamma(\pi_i) \cap \Gamma(\psi_j)) \backslash (\Gamma(\pi_{1:i-1} \cup \psi_{1:j-1}))|,\] 
and if we use $c(e)$ to denote the cost of the cheapest vertex hitting edge $e$, we can simplify
\[c(\pi_i) \cdot \sqrt{\mff_\pi(\pi_i)} := \sqrt{\sum_{e \in \Gamma(\pi_i) \backslash \Gamma(\pi_{1:i-1})} \frac{c(\pi_i)}{c(e)}}. \]
In words, we are reweighting each hyperedge by the ratio of costs between the current vertex covering it, and its cheapest possible vertex that could cover it. Intuitively, this means the potential will be high when many elements are in sets that are significantly more expensive than the cheapest sets they could lie in.

\begin{figure}
	\begin{mdframed}
	\tikzset{every picture/.style={line width=0.75pt}} 
	\centering
	\scalebox{0.9}{
		

 
\tikzset{
pattern size/.store in=\mcSize, 
pattern size = 5pt,
pattern thickness/.store in=\mcThickness, 
pattern thickness = 0.3pt,
pattern radius/.store in=\mcRadius, 
pattern radius = 1pt}
\makeatletter
\pgfutil@ifundefined{pgf@pattern@name@_dr16lhqv1}{
\pgfdeclarepatternformonly[\mcThickness,\mcSize]{_dr16lhqv1}
{\pgfqpoint{0pt}{-\mcThickness}}
{\pgfpoint{\mcSize}{\mcSize}}
{\pgfpoint{\mcSize}{\mcSize}}
{
\pgfsetcolor{\tikz@pattern@color}
\pgfsetlinewidth{\mcThickness}
\pgfpathmoveto{\pgfqpoint{0pt}{\mcSize}}
\pgfpathlineto{\pgfpoint{\mcSize+\mcThickness}{-\mcThickness}}
\pgfusepath{stroke}
}}
\makeatother

 
\tikzset{
pattern size/.store in=\mcSize, 
pattern size = 5pt,
pattern thickness/.store in=\mcThickness, 
pattern thickness = 0.3pt,
pattern radius/.store in=\mcRadius, 
pattern radius = 1pt}
\makeatletter
\pgfutil@ifundefined{pgf@pattern@name@_3oy3oiylz}{
\pgfdeclarepatternformonly[\mcThickness,\mcSize]{_3oy3oiylz}
{\pgfqpoint{0pt}{-\mcThickness}}
{\pgfpoint{\mcSize}{\mcSize}}
{\pgfpoint{\mcSize}{\mcSize}}
{
\pgfsetcolor{\tikz@pattern@color}
\pgfsetlinewidth{\mcThickness}
\pgfpathmoveto{\pgfqpoint{0pt}{\mcSize}}
\pgfpathlineto{\pgfpoint{\mcSize+\mcThickness}{-\mcThickness}}
\pgfusepath{stroke}
}}
\makeatother

 
\tikzset{
pattern size/.store in=\mcSize, 
pattern size = 5pt,
pattern thickness/.store in=\mcThickness, 
pattern thickness = 0.3pt,
pattern radius/.store in=\mcRadius, 
pattern radius = 1pt}
\makeatletter
\pgfutil@ifundefined{pgf@pattern@name@_tg6xg38o2}{
\pgfdeclarepatternformonly[\mcThickness,\mcSize]{_tg6xg38o2}
{\pgfqpoint{0pt}{-\mcThickness}}
{\pgfpoint{\mcSize}{\mcSize}}
{\pgfpoint{\mcSize}{\mcSize}}
{
\pgfsetcolor{\tikz@pattern@color}
\pgfsetlinewidth{\mcThickness}
\pgfpathmoveto{\pgfqpoint{0pt}{\mcSize}}
\pgfpathlineto{\pgfpoint{\mcSize+\mcThickness}{-\mcThickness}}
\pgfusepath{stroke}
}}
\makeatother

 
\tikzset{
pattern size/.store in=\mcSize, 
pattern size = 5pt,
pattern thickness/.store in=\mcThickness, 
pattern thickness = 0.3pt,
pattern radius/.store in=\mcRadius, 
pattern radius = 1pt}
\makeatletter
\pgfutil@ifundefined{pgf@pattern@name@_lbm9x7hkc}{
\pgfdeclarepatternformonly[\mcThickness,\mcSize]{_lbm9x7hkc}
{\pgfqpoint{0pt}{-\mcThickness}}
{\pgfpoint{\mcSize}{\mcSize}}
{\pgfpoint{\mcSize}{\mcSize}}
{
\pgfsetcolor{\tikz@pattern@color}
\pgfsetlinewidth{\mcThickness}
\pgfpathmoveto{\pgfqpoint{0pt}{\mcSize}}
\pgfpathlineto{\pgfpoint{\mcSize+\mcThickness}{-\mcThickness}}
\pgfusepath{stroke}
}}
\makeatother
\tikzset{every picture/.style={line width=0.75pt}} 

\begin{tikzpicture}[x=0.75pt,y=0.75pt,yscale=-0.6,xscale=1]

\draw   (100,50) -- (130,50) -- (130,260) -- (100,260) -- cycle ;
\draw  [fill={rgb, 255:red, 255; green, 163; blue, 0 }  ,fill opacity=0.6 ] (180,80) -- (210,80) -- (210,120) -- (180,120) -- cycle ;
\draw   (140,70) -- (170,70) -- (170,260) -- (140,260) -- cycle ;
\draw   (220,170) -- (250,170) -- (250,260) -- (220,260) -- cycle ;
\draw  [fill={rgb, 255:red, 0; green, 220; blue, 0 }  ,fill opacity=0.6 ] (180,120) -- (210,120) -- (210,180) -- (180,180) -- cycle ;
\draw  [fill={rgb, 255:red, 0; green, 0; blue, 255 }  ,fill opacity=0.6 ] (180,180) -- (210,180) -- (210,210) -- (180,210) -- cycle ;
\draw  [fill={rgb, 255:red, 255; green, 0; blue, 0 }  ,fill opacity=0.6 ] (180,210) -- (210,210) -- (210,260) -- (180,260) -- cycle ;
\draw  [pattern=_dr16lhqv1,pattern size=3pt,pattern thickness=0.75pt,pattern radius=0pt, pattern color={rgb, 255:red, 255; green, 0; blue, 0}] (260,190) -- (290,190) -- (290,260) -- (260,260) -- cycle ;
\draw  [pattern=_3oy3oiylz,pattern size=3pt,pattern thickness=0.75pt,pattern radius=0pt, pattern color={rgb, 255:red, 0; green, 0; blue, 255}] (420,230) -- (450,230) -- (450,260) -- (420,260) -- cycle ;
\draw  [pattern=_tg6xg38o2,pattern size=3pt,pattern thickness=0.75pt,pattern radius=0pt, pattern color={rgb, 255:red, 0; green, 220; blue, 0}] (300,210) -- (330,210) -- (330,260) -- (300,260) -- cycle ;
\draw   (380,210) -- (410,210) -- (410,260) -- (380,260) -- cycle ;
\draw  [pattern=_lbm9x7hkc,pattern size=3pt,pattern thickness=0.75pt,pattern radius=0pt, pattern color={rgb, 255:red, 255; green, 163; blue, 0}] (340,210) -- (370,210) -- (370,260) -- (340,260) -- cycle ;
\draw [color={rgb, 255:red, 255; green, 0; blue, 0 }  ,draw opacity=1 ][line width=2.25]  [dash pattern={on 6.75pt off 4.5pt}]  (195,235) .. controls (199.41,285.96) and (228.31,286.98) .. (267.58,252.18) ;
\draw [shift={(270,250)}, rotate = 497.59] [color={rgb, 255:red, 255; green, 0; blue, 0 }  ,draw opacity=1 ][line width=2.25]    (17.49,-5.26) .. controls (11.12,-2.23) and (5.29,-0.48) .. (0,0) .. controls (5.29,0.48) and (11.12,2.23) .. (17.49,5.26)   ;

\draw [color={rgb, 255:red, 0; green, 0; blue, 255 }  ,draw opacity=1 ][line width=2.25]  [dash pattern={on 6.75pt off 4.5pt}]  (195,195) .. controls (362.09,137.18) and (431.7,177.32) .. (434.89,241.07) ;
\draw [shift={(435,245)}, rotate = 269.57] [color={rgb, 255:red, 0; green, 0; blue, 255 }  ,draw opacity=1 ][line width=2.25]    (17.49,-5.26) .. controls (11.12,-2.23) and (5.29,-0.48) .. (0,0) .. controls (5.29,0.48) and (11.12,2.23) .. (17.49,5.26)   ;

\draw [color={rgb, 255:red, 0; green, 220; blue, 0 }  ,draw opacity=1 ][line width=2.25]  [dash pattern={on 6.75pt off 4.5pt}]  (195,150) .. controls (278.79,133.34) and (313.12,168.54) .. (314.93,231.13) ;
\draw [shift={(315,235)}, rotate = 269.56] [color={rgb, 255:red, 0; green, 220; blue, 0 }  ,draw opacity=1 ][line width=2.25]    (17.49,-5.26) .. controls (11.12,-2.23) and (5.29,-0.48) .. (0,0) .. controls (5.29,0.48) and (11.12,2.23) .. (17.49,5.26)   ;

\draw [color={rgb, 255:red, 255; green, 163; blue, 0 }  ,draw opacity=1 ][line width=2.25]  [dash pattern={on 6.75pt off 4.5pt}]  (195,100) .. controls (322.89,81.38) and (353.3,166.48) .. (354.94,231.07) ;
\draw [shift={(355,235)}, rotate = 269.56] [color={rgb, 255:red, 255; green, 163; blue, 0 }  ,draw opacity=1 ][line width=2.25]    (17.49,-5.26) .. controls (11.12,-2.23) and (5.29,-0.48) .. (0,0) .. controls (5.29,0.48) and (11.12,2.23) .. (17.49,5.26)   ;

\draw (275,330) node   {$\Phi_{1/2} = \displaystyle \sum_i \sqrt{
	\textcolor[rgb]{1,0,0}{\frac{c(\pi_i)}{c(\psi_1)} \cdot \mfI_\pi(\pi_i ; \psi_1)} +
	\textcolor[rgb]{0,0,1}{ \frac{c(\pi_i)}{c(\psi_2)} \cdot \mfI_\pi(\pi_i ; \psi_2)} +
	\textcolor[rgb]{0,.86,0}{ \frac{c(\pi_i)}{c(\psi_3)} \cdot \mfI_\pi(\pi_i ; \psi_3)} +
	\textcolor[rgb]{1,0.64,0}{ \frac{c(\pi_i)}{c(\psi_4)} \cdot \mfI_\pi(\pi_i ; \psi_4)} + \ \ldots}$};

\end{tikzpicture}}
	\caption{Illustration of $\Phi_{1/2}$. Elements are arranged in order of $\pi$.}
	\end{mdframed}
\end{figure}

This time the properties of $\Phi_{1/2}$ we show are:

\begin{mdframed}
	\textbf{Properties of $\Phi_{1/2}$:}
	\begin{enumerate}[label=\textbf{(\Roman*)}]
		\item $\Phi_{1/2}$ increases by at most $g(\unvrs) / \sqrt{\fmin}$ with every addition of a function to the system.  \label{wsc_rom1}
		\item $\Phi_{1/2}$ does not increase with deletion of functions from the system.  \label{wsc_rom2}
		\item $\Phi_{1/2}$ does not increase during swaps. \label{wsc_rom3}
		\item If $\gamma > 4$, then $\Phi_{1/2}$ decreases by at least $\Omega(\sqrt{\fmin})$ with every $\gamma$-move. \label{wsc_rom4}
	\end{enumerate}
\end{mdframed}

Together, these will yield a recourse bound of $\sum_t g_t(\unvrs)/\fmin$.

\begin{lemma}
\label{lem:phihalf_increase}
	$\Phi_{1/2}$ satisfies property \ref{wsc_rom1}.
\end{lemma}

\begin{proof}
Consider a step in which the submodular function $f$ changes to $f'$ because of the addition of a function $g_t$. Recall that the algorithm first moves dummy elements $d^t_1, \ldots, d^t_{n}$ to the front of $\pi$, but skipping any  $d^t_i$ for which $g_t(d^t_i\mid d^t_{1:i-1}) = 0$. Let $\pi'$ be the permutation permutation after this initial step, and let $\pi$ be the permutation before.

The contribution to $\Phi(f', \pi')$ of non-dummy is exactly the total potential $\Phi(f, \pi)$ before the function $g_t$ arrived. Hence the total potential change is exactly the contribution of the  dummy elements $d^t_1, \ldots, d^t_{n}$:
\begin{align*}\Phi(f', \pi') - \Phi(f, \pi) 
&= \sum_{i=1}^n \sqrt{\sum_{j=1}^n \frac{c(d^t_i)}{c(\psi_j)} \cdot g_t(d^t_i \mid d^t_{1:i-1}) \cdot \mathbbm{1}(i = j)} = 
\sum_{i=1}^n \sqrt{g_t(d^t_i \mid d^t_{1:i-1})} 
\end{align*}
 Since $g_t(d^t_i\mid d^t_{1:i-1}) > 0$, by definition $g_t(d^t_i\mid d^t_{1:i-1}) \geq \fmin$.
 By concavity of the square root function, this potential change is largest in the case when each term $g_t(d^t_i \mid d^t_{1:i-1})$ is exactly $\fmin$, and there are $g_t(\unvrs)/\fmin$ terms. Thus the total potential increase is at most $g_t(\unvrs)/\sqrt{\fmin}$. 
\end{proof}

\begin{lemma}
	$\Phi_{1/2}$ satisfies property \ref{wsc_rom3}.
\end{lemma}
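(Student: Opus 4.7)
My plan is to first observe that a swap changes only two terms of $\Phi_{1/2}$. Let $u=\pi_{i-1}$, $v=\pi_i$ be the swapped elements (with swap precondition $\mff_\pi(v)\geq \mff_\pi(u)$), and let $\pi'$ denote the resulting permutation. For any position $k\notin\{i-1,i\}$ we have $\pi'_k=\pi_k$ and $\pi'_{1:k-1}=\pi_{1:k-1}$ as sets; hence $\mfI_{\pi',\psi}(\pi'_k,\psi_j)=\mfI_{\pi,\psi}(\pi_k,\psi_j)$ for every $j$, so $\mff_{\pi'}(\pi'_k)=\mff_\pi(\pi_k)$ and those terms of $\Phi_{1/2}$ are unchanged. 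It suffices then to prove
\begin{align*}
c(u)\sqrt{\mff_{\pi'}(u)}+c(v)\sqrt{\mff_{\pi'}(v)}\ \leq\ c(u)\sqrt{\mff_\pi(u)}+c(v)\sqrt{\mff_\pi(v)}.
\end{align*}

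Next I would expand each marginal using $c(x)\mff_\pi(x)=\sum_j\mfI_{\pi,\psi}(x,\psi_j)/c(\psi_j)$, writing $A_j,B_j$ and $A'_j,B'_j$ for $\mfI_{\pi,\psi}(u,\psi_j),\mfI_{\pi,\psi}(v,\psi_j)$ and their post-swap counterparts. The key identity, for each $j$, is
\begin{align*}
A_j-A'_j\ =\ B'_j-B_j\ =\ \tfrac{df}{d\{u,v,\psi_j\}}(\pi_{1:i-2}\cup\psi_{1:j-1})\ \geq\ 0,
\end{align*}
because the third mixed derivative is symmetric in its three arguments and is nonnegative by \eqref{eq:submodofconditioning} since $f\in\mathcal{D}_3^+$. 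Setting $\delta_j:=A_j-A'_j$ and $D:=\sum_j\delta_j/c(\psi_j)\geq 0$, and writing $\alpha:=c(u)^2\mff_\pi(u)=\sum_j c(u)A_j/c(\psi_j)$ and $\beta:=c(v)^2\mff_\pi(v)$ (and similarly for $\alpha',\beta'$), one obtains $\alpha'=\alpha-c(u)D$ and $\beta'=\beta+c(v)D$, reducing the target inequality to $\sqrt{\alpha-c(u)D}+\sqrt{\beta+c(v)D}\leq\sqrt{\alpha}+\sqrt{\beta}$.

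The remainder is a one-dimensional concavity comparison: concavity of $\sqrt{\cdot}$ gives $\sqrt{\beta+c(v)D}-\sqrt{\beta}\leq c(v)D/(2\sqrt{\beta})$ and $\sqrt{\alpha}-\sqrt{\alpha-c(u)D}\geq c(u)D/(2\sqrt{\alpha})$. The swap precondition $\mff_\pi(v)\geq\mff_\pi(u)$ rearranges to $\beta/c(v)^2\geq\alpha/c(u)^2$, i.e.\ $c(v)/\sqrt{\beta}\leq c(u)/\sqrt{\alpha}$, and so $c(v)D/(2\sqrt{\beta})\leq c(u)D/(2\sqrt{\alpha})$; chaining the two concavity bounds through this middle step yields $\sqrt{\beta+c(v)D}-\sqrt{\beta}\leq\sqrt{\alpha}-\sqrt{\alpha-c(u)D}$, which is equivalent to the desired monotonicity.

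The main obstacle I anticipate is securing the per-$j$ identity $A_j-A'_j=B'_j-B_j$. The chain rule alone yields only the aggregate form $\sum_j(A_j-A'_j)=\sum_j(B'_j-B_j)=\mutcov_f(u,v\mid \pi_{1:i-2})$, which is too weak because each $\mfI$-term in $\Phi_{1/2}$ is weighted by $1/c(\psi_j)$. What saves us is the symmetry of the third mixed derivative under permutations of its three arguments, which together with $f\in\mathcal{D}_3^+$ supplies both the identity and the sign $\delta_j\geq 0$; this is exactly where the argument uses the $3$-increasing hypothesis and not merely submodularity.
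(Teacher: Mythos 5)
Your proof is correct and follows essentially the same route as the paper's: reduce to the two affected terms, establish that what $u$ loses equals what $v$ gains after reweighting by $1/c(\psi_j)$, and close with concavity of $\sqrt{\cdot}$ together with the swap precondition $\mff_\pi(v)\geq\mff_\pi(u)$. (The paper phrases this as a contrapositive---assume the potential increased and derive that the swap was illegal---while you argue directly; that difference is cosmetic.) Your main contribution over the paper's exposition is that you explicitly identify and justify the per-$j$ identity $A_j - A'_j = B'_j - B_j = \tfrac{df}{d\{u,v,\psi_j\}}(\pi_{1:i-2}\cup\psi_{1:j-1})$, which is needed because $\Phi_{1/2}$ weights each $\mfI$-term by $1/c(\psi_j)$, so the unweighted aggregate identity furnished by the chain rule is not enough. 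The paper simply defines a single $\delta$ and writes the post-swap marginals of both $u$ and $v$ in terms of it, implicitly invoking exactly this symmetry of the third mixed derivative; you are right that the $3$-increasing hypothesis is used both for the sign $\delta_j\geq 0$ and for the equality of the two discrete derivatives.
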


\begin{proof}
	Consider an index $i$ such that swapping $\pi_i$ and $\pi_{i+1}$ increases the potential. Then for some quantity $\delta := \sum_{j \in [n]} (\mfI_{\pi, \psi}(\pi_i, \psi_j) - \mfI_{\pi', \psi}(\pi_i, \psi_j)) / c(j) > 0$ (since $f \in \mathcal{D}_3^+$) we have:
	\begin{align}
		&0 < \Delta \Phi_{1/2} \notag \\
		&0 < c(\pi_{i+1}) \left(\sqrt{\mff_\pi(\pi_{i+1}) + \frac{\delta}{c(\pi_{i+1})}} - \sqrt{\mff_\pi(\pi_{i+1})} \right) + c(\pi_i) \left(\sqrt{\mff_\pi(\pi_i) - \frac{\delta}{c(\pi_{i})}} - \sqrt{\mff_\pi(\pi_i)} \right) \notag\\
		&0 < \frac{\delta}{2 \sqrt{\mff_\pi (\pi_{i+1})}} - \frac{\delta}{2 \sqrt{\mff_\pi (\pi_{i})}} \label{eq:swapconcave} \\
		&\Rightarrow \hspace{0.2in} \mff_\pi(\pi_{i+1}) \leq \mff_\pi(\pi_i). \notag
	\end{align}
	Above, \eqref{eq:swapconcave} holds since square root is a concave function and thus $\sqrt{a + b} - \sqrt{a} \leq b / (2\sqrt{a})$. This implies that the local move was not a legal swap.
\end{proof}

\begin{lemma}
	If $\gamma > 4$, then $\Phi_{1/2}$ satisfies property \ref{wsc_rom4}.
\end{lemma}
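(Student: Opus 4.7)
My plan parallels the unit-cost analysis in \Cref{lem:uwtdsc_props} while accommodating the new definition of $\mff_\pi$ built from mutual affinities. Fix a $\gamma$-move promoting $u = \pi_q$ to position $p < q$, yielding $\pi'$. Set $v_i := \mff_\pi(\pi_i)$, $v_q := \mff_\pi(u)$, $M := \mff_{\pi'}(u)$, and $a_i := v_i - \mff_{\pi'}(\pi'_{i+1})$ for $i \in \{p, \ldots, q-1\}$.

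The first step is a \emph{conservation identity}: applying \cref{fact:chainRule} coordinate-wise to the definition \eqref{eq:3incrmff} yields
\[
\sum_i c(\pi_i)\,\mff_\pi(\pi_i) \;=\; \sum_j \frac{f(\psi_j\mid\psi_{1:j-1})}{c(\psi_j)},
\]
which is invariant under rearranging $\pi$. Restricted to the affected positions $\{p,\ldots,q\}$, this rearranges to $c(u)(M - v_q) = \sum_{i=p}^{q-1} c(\pi_i)\,a_i$. The hypothesis $f \in \mathcal{D}_3^+$ enters via \eqref{eq:submodofconditioning}, which guarantees $a_i \geq 0$, so the concavity bound $\sqrt{v_i} - \sqrt{v_i - a_i} \geq a_i/(2\sqrt{v_i})$ applies term-wise.

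The change in $\Phi_{1/2}$ restricted to affected positions is
\[
\Phi_{1/2}(\pi) - \Phi_{1/2}(\pi') \;=\; \sum_{i=p}^{q-1} c(\pi_i)\bigl(\sqrt{v_i} - \sqrt{v_i - a_i}\bigr) \;-\; c(u)\bigl(\sqrt{M} - \sqrt{v_q}\bigr).
\]
Plugging in concavity, then the $\gamma$-move inequality $v_i \leq M/\gamma$ (so that $1/\sqrt{v_i} \geq \sqrt{\gamma/M}$), and finally the conservation identity to collapse $\sum c(\pi_i)a_i$, one derives after short algebra
\[
\Phi_{1/2}(\pi) - \Phi_{1/2}(\pi') \;\geq\; c(u)(\sqrt{M}-\sqrt{v_q})\left[\frac{\sqrt{\gamma}(\sqrt{M}+\sqrt{v_q})}{2\sqrt{M}} - 1\right] \;\geq\; c(u)(\sqrt{M}-\sqrt{v_q})\left(\frac{\sqrt{\gamma}}{2} - 1\right),
\]
which is strictly positive precisely when $\gamma > 4$.

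Finally, the $\Omega(\sqrt{\fmin})$ scale comes from lower bounding $c(u)(\sqrt{M} - \sqrt{v_q})$. Since swaps do not increase the potential, I may assume the permutation is sorted in decreasing order of $\mff_\pi$ before the move, so that $v_q \leq v_{q-1} \leq M/\gamma$ and hence $\sqrt{M}-\sqrt{v_q} \geq \sqrt{M}(1 - 1/\sqrt{\gamma})$. Combined with the bound $c(u)\sqrt{M} \geq \sqrt{\fmin}$---which follows from $c(u)^2 M = c(u)\sum_j \mfI_{\pi',\psi}(u,\psi_j)/c(\psi_j) \geq \sum_j \mfI_{\pi',\psi}(u,\psi_j) = f(u\mid\pi_{1:p-1}) \geq \fmin$ by \cref{lem:wsc_costratio} and \cref{obs:chainRuleExplained}---this yields the claimed $\Omega(\sqrt{\fmin})$ drop. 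The main subtlety I anticipate is justifying the sorted-prefix reduction: without it, $v_q$ could a priori equal $M$ and the bound degenerates, so the proof must either formally exploit the freeness of swaps or argue directly from a lower bound on the nonzero mutual affinities.
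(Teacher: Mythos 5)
Your proof is correct and, while arriving at the same final bound, takes a genuinely different (and in one respect more careful) route than the paper's. The paper works with the disaggregated quantities $a_{ij} = \mfI_{\pi,\psi}(\pi_i,\psi_j) - \mfI_{\pi,\psi}(\pi_i,\psi_j\mid\{u\})$ (one per pair $(i,j)$), applies concavity and the $\gamma$-move inequality coordinate-wise, and then telescopes $\sum_i a_{ij} = \mfI_{\pi',\psi}(u,\psi_j)$ at the very end; you instead aggregate immediately to $a_i := \mff_\pi(\pi_i) - \mff_{\pi'}(\pi_i)$ and replace the telescoping with the conservation identity $c(u)(M - v_q) = \sum_{i=p}^{q-1}c(\pi_i)a_i$, which is a clean consequence of $\sum_i c(\pi_i)\mff_\pi(\pi_i) = \sum_j f(\psi_j\mid\psi_{1:j-1})/c(\psi_j)$ being $\pi$-invariant. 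Both paths need $a_i \geq 0$, and hence $f \in \mathcal{D}_3^+$, at the concavity step. Notably, your version keeps the $-c(u)\sqrt{v_q}$ term explicit, whereas the paper's displayed chain (eq.~\eqref{eq:wsc_gammamove}) silently applies the bound $v_i \leq \mff_{\pi'}(u)/\gamma$ to the $i=q$ term as well, which the $\gamma$-move definition does not grant directly; your flagged ``sorted-prefix reduction'' is exactly the hypothesis needed to justify that step (swaps are free and do not increase $\Phi_{1/2}$, so WLOG the permutation is sorted when the $\gamma$-move fires, giving $v_q \leq v_{q-1} \leq M/\gamma$). Your lower bound $c(u)\sqrt{M} \geq \sqrt{\fmin}$ via $c(\psi_j) \leq c(u)$ and the chain rule is the same as the paper's final two steps. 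So: same theorem, same concavity-plus-$\gamma$-move engine, but a cleaner aggregation via the conservation identity and a more honest treatment of the $i=q$ boundary term.
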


\begin{proof}
	Suppose the local move changes the permutation $\pi$ to $\pi'$ by moving element $u$ from position $q$ to $p$. For notational convenience, define the following quantities:
	\begin{align*}
		v_i &:= \mff_\pi (\pi_i), \\
		a_{ij} &:=\mfI_\pi(\pi_i, \psi_j) - \mfI_{\pi'}(\pi_i, \psi_j) \ \defeq \ \mathbbm{1}[p \leq i \leq q] \cdot \left(\mfI_\pi(\pi_i, \psi_j) - \mfI_{\pi}(\pi_i, \psi_j \mid \{u\})\right).
	\end{align*}
	Recall that by \eqref{eq:submodofconditioning}, the quantity $a_{ij}$ is (crucially) nonnegative when $f$ is $3$-increasing. Also, by expanding the definition of \nameref{def:mutualCoverage}, for all indices $i \in \{p, \ldots, q \}$ we can rewrite $a_{ij}$ as:
	\begin{align*}
	a_{ij} &= f(\psi_j \mid \pi_{1:i-1} \cup \psi_{1:j-1}) - f(\psi_j \mid \pi_{1:i} \cup \psi_{1:j-1}) \\
	& \ \ \ - [f(\psi_j \mid \pi_{1:i-1} \cup \psi_{1:j-1} \cup \{u\}) - f(\psi_j \mid \pi_{1:i} \cup \psi_{1:j-1} \cup \{u\})] \\
	&= \mutcov_f(u, \psi_j \mid \pi_{1:i-1} \cup \psi_{1:j-1}) - \mutcov_f(u, \psi_j \mid \pi_{1:i} \cup \psi_{1:j-1}).
	\intertext{Thus, by the \nameref{fact:chainRule} these terms telescopes such that:}
	\sum_{i \in \{p, \ldots, q \}} a_{ij} &= \mutcov_f(u, \psi_j \mid \pi_{1:p-1}, \psi_{1:j-1}) \ \defeq \ \mfI_{\pi', \psi}(u, \psi_j).
	\end{align*}
	
	 With this, we can bound the potential change after a local move as
	\begin{align}
		&\Phi_{1/2}(f, \pi') - \Phi_{1/2}(f, \pi) \notag \\
		&= c(u) \sqrt{\mff_{\pi'}(u)} + \sum_{i \in [n]} c(\pi_i) \sqrt{v_i - \sum_{j \in [n]} \frac{a_{ij}}{c(\pi_i) c(\psi_j)}} - \sum_{i \in [n]} c(\pi_i) \sqrt{v_i} \notag \\
		&\leq c(u) \sqrt{\mff_{\pi'}(u)} - c(\pi_i) \sum_{i \in [n]} \frac{1}{2 \sqrt{v_i}} \cdot \sum_{j \in [n]} \frac{a_{ij}}{c(\pi_i) c(\psi_j)} \label{eq:wsc_convavity} \\
		&\leq c(u) \sqrt{\mff_{\pi'}(u)} - \frac{\sqrt{\gamma}}{2} \cdot \frac{c(\pi_i)}{\sqrt{\mff_{\pi'}(u)}}\cdot \sum_{i \in [n]} \sum_{j \in [n]} \frac{a_{ij}}{c(\pi_i) c(\psi_j)} \label{eq:wsc_gammamove} \\
		&= c(u) \sqrt{\mff_{\pi'}(u)} - \frac{\sqrt{\gamma}}{2} \cdot \frac{c(u)}{\sqrt{\mff_{\pi'}(u)}}\cdot \sum_{j \in [n]} \frac{\sum_{i \in [n]} a_{ij}}{c(u) c(\psi_j)} \notag \\
		\intertext{Above, step \eqref{eq:wsc_convavity} holds since square root is a concave function and thus $\sqrt{a + b} - \sqrt{a} \leq b / (2\sqrt{a})$. The next step \eqref{eq:wsc_gammamove} is due to the definition of $\gamma$-moves which ensure that $v_i \leq \mff_{\pi'}(u) / \gamma$ (note that this step also makes use of the nonnegativity of $a_{ij}$). Using the telescoping property of the $a_{ij}$, we can continue:}
		&= c(u) \sqrt{\mff_{\pi'}(u)} - \frac{\sqrt{\gamma}}{2} \cdot \frac{c(u)}{\sqrt{\mff_{\pi'}(u)}}\cdot \sum_{j \in [n]} \frac{\mfI_{\pi', \psi} (u, \psi_j)}{c(u) c(\psi_j)} \notag \\
		& \defeq - \left(\frac{\sqrt{\gamma}}{2} -1\right) \sqrt{\sum_{j \in [n]} \frac{c(u) \cdot \mfI_{\pi', \psi}(u, \psi_j)}{c(\psi_j)}} \notag \\
		&\leq - \left(\frac{\sqrt{\gamma}}{2} -1\right) \sqrt{\sum_{j \in [n]} \mfI_{\pi', \psi}(u, \psi_j)} \label{eq:wsc_psiorder} \\
		&\leq - \left(\frac{\sqrt{\gamma}}{2} -1\right) \sqrt{\fmin}. \label{eq:wsc_CR}
	\end{align}
	Step \eqref{eq:wsc_psiorder} comes from \cref{lem:wsc_costratio}, and finally step \eqref{eq:wsc_CR} follows by \cref{obs:chainRuleExplained} and the fact that $\fmin$ is a lower bound on marginal coverage for any element with nonzero marginal coverage.
\end{proof}

We wrap up with the proof of the main theorem.
\begin{proof}[Proof of \cref{thm:main_weighted_3incr}]	
	Set $\gamma = 5 > 4$. By \cref{lem:wsc_stablegood}, if \cref{alg:dynamicSubC} (using Definition \ref{eq:3incrmff} for $\mff_\pi$) terminates then it is $O(\log f(N) / \fmin)$-competitive.
	
	By \labelcref{wsc_rom1,wsc_rom4,wsc_rom2,wsc_rom3}, the potential $\Phi_{1/2}$ increases by at most $g_t(\unvrs / \sqrt{\fmin}$ for every function $g_t$ inserted to the active set, decreases by $\sqrt{\fmin} \cdot \left(\sqrt{\gamma}/2 - 1\right)$ per $\gamma$-move, and otherwise does not increase. By inspection, $\Phi_\alpha \geq 0$. The number of elements $e$ with $\mff_\pi(e) > 0$ grows by $1$ only during $\gamma$-moves in which $\mff_\pi(e)$ was initially $0$. Otherwise, this number never grows. We account for elements leaving the solution by paying recourse $2$ upfront when they join the solution.
	
	Hence, the number of changes to the solution is at most:
	\[ 2 \cdot \frac{\sum_t g_t(\unvrs)}{\sqrt{\fmin}} \cdot\frac{2}{\sqrt{\fmin}(\sqrt{\gamma} - 2)} = O\left(\frac{\sum_t g_t(\unvrs)}{\fmin}\right). \qedhere \]
\end{proof}


\section*{Note on Changes since Publication}
\label{sec:version_notes}
In the original version of this paper that appeared in FOCS 2020, \cref{lem:phihalf_increase} contained an error: the potential $\Phi_{1/2}(\pi)$ from \cref{sec:3incr} could suffer unbounded increase with the addition of new functions. The issue is resolved with Assumption \ref{assum:singleton_set} (which is without loss of generality) and the minor changes to the algorithm at the end of \cref{subsec:3inc_algo}. The proof of \cref{lem:phihalf_increase} has also been updated and simplified.

\appendix

{\footnotesize
	\bibliography{refs}
	\bibliographystyle{alpha}
}

\section{Algorithm for $r$-bounded Instances}

\label{sec:rjuntas}

We can achieve a better approximation ratio if each function $g$ is an
$r$-junta for small $r$. Recall that an $r$-junta is a function that
depends on at most $r$ variables. In this section we prove the theorem:

\begin{theorem}
	There is a randomized algorithm that maintains an $r$-competitive solution in expectation to Fully Dynamic \subcov in the setting where functions arrive/depart over time, and these functions are each $r$-juntas. Furthermore this algorithm has total recourse:
	\[\frac{\sum_t g_t(\unvrs)}{\fmin}.\]
\end{theorem}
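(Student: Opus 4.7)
The plan is to adapt the static $r$-approximation for $r$-junta submodular cover (the analog of the classical $F$-approximation for \setcov under the frequency parametrization) into a randomized dynamic algorithm, while borrowing the amortization strategy of Section \ref{sec:unweightedsubmod} to obtain the recourse bound.

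First I would handle arrivals in a simple insertion-only manner. When $g_t$ arrives with support $T_t$ of size at most $r$, I would repeatedly sample an element $e \in T_t$ from a suitable distribution (for instance, proportional to $1/c(e)$, or via a primal-dual witness-raising rule in the spirit of Fujito's dual greedy) and add it to the solution until $g_t$'s marginal coverage with respect to the current solution is zero. Since each such addition absorbs at least $\fmin$ units of $g_t(\unvrs)$, at most $g_t(\unvrs)/\fmin$ insertions are triggered per arrival. Summing over $t$ gives the claimed total recourse deterministically, without any need for a Tsallis-type potential. For departures, I would use a lazy pruning rule that removes an element only when no active $g_t$ still charges it as witness, ensuring departures contribute no recourse beyond what insertions already paid for.

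Second, for the competitive ratio, I would carry out a primal-dual / LP-rounding style expectation analysis. The sampling distribution is chosen so that the probability any element $e$ is added on account of $g_t$ is at most $r$ times $e$'s fractional ``fair share'' of covering $g_t$; summing over $t$ and over $e$ then yields expected total cost at most $r \cdot c(\opt)$. This mirrors the standard $F$-approximation argument for \hvc, where each vertex is included with probability at most $r$ times its LP value in the vertex-cover relaxation. Because of the $r$-junta restriction, the charge to any single element of $\opt$ is at most $r$ in expectation, which is exactly where the junta parameter enters.

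The hard part will be marrying the two analyses in the fully-dynamic setting. The competitive guarantee must hold at every time $t$ against the \emph{current} offline optimum $c(\opt^{(t)})$, rather than against a weaker benchmark obtained by accumulating all ever-active functions; ensuring this requires that the lazy pruning keeps the solution close to the optimum even as functions leave, while still respecting the deterministic recourse bound of $\sum_t g_t(\unvrs)/\fmin$. A carefully synchronized bookkeeping between the random witness assignments and the deterministic prune step, together with a stopping condition that ensures no unnecessary insertions once $g_t$'s marginal is saturated, is where I expect the main technical delicacy of the proof to lie.
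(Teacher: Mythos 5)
Your arrival step is essentially the paper's: when $g_t$ arrives, repeatedly sample an element of its support proportional to $1/c(\cdot)$ and add it to the solution (and to an assignment set $U_{g_t}$) until $g_t$ is saturated. So the core randomized ingredient matches.

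The gap is in departures. Your ``lazy pruning rule that removes an element only when no active $g_t$ still charges it as witness'' does not preserve feasibility: after such a removal, other active functions whose coverage relied on the removed elements become uncovered, and you never say how (or whether) they get re-covered. The paper's algorithm is \emph{eager}, not lazy: when $g$ departs, it deletes \emph{all} of $U_g$ at once, and then, crucially, \textbf{re-probes every function $g'$ that became uncovered, in the order of their arrival}, running the same sampling loop until each is saturated again. This re-probing is what keeps $S$ feasible at all times, and the ordered re-probing is not incidental --- the competitive-ratio proof (the paper's \cref{lem:rlemma}) is an induction over the probe sequence $X_1,Y_1,X_2,Y_2,\dots$, conditioning on the history and using that the $i$-th probed function and its sampling distribution are determined by the past. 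Your sketch of a ``primal-dual / LP-rounding style expectation analysis'' gestures at the right conclusion (each $o\in\opt$ is charged at most $r\cdot c(o)$ in expectation), but it does not engage with the fact that in the fully-dynamic setting the same function can be probed repeatedly across its lifetime and the charging has to be robust to that; the paper's inductive argument handles this precisely because re-probing is done in a fixed, history-determined order. You flag ``marrying the two analyses'' as the main technical delicacy and leave it unresolved --- that is exactly the content you would need to supply, and the lazy-pruning route does not obviously admit the same inductive argument (indeed as stated it is not even feasibility-preserving).

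On recourse: your accounting (``at most $g_t(\unvrs)/\fmin$ insertions per arrival'') only counts probes triggered by arrivals and your lazy pruning is designed to incur no additional recourse, but the paper's algorithm \emph{does} incur extra probes from departures (the re-probing), and the stated bound $\sum_t g_t(\unvrs)/\fmin$ is meant to account for the \emph{total} number of probes over all arrivals and re-coverings. So if you add the missing re-probing step to fix feasibility, you would also need to redo the recourse accounting rather than inherit your per-arrival bound.
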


Our proof follows the framework of \cite{Gupta:2017:ODA:3055399.3055493}, which is itself a dynamic implementation of the algorithm of \cite{pitt1985simple}.

We start with some notation. Let $V_g = \{ u \in \unvrs \mid g(u) \neq 0 \}$ be the elements influencing $g$. Our assumption says that $|V_g| \leq r$. Let $S$ be the solution maintained by the algorithm. Each function $g \in G^{(t)}$ maintains a set $U_g \subseteq V_g$ of elements assigned to it. We say that $g$ is \textit{responsible} for $U_g$. We also define the following operation:

\textbf{Probing a function}.  
Sample one element $u \in V_g \backslash S$ with probability:
\[\frac{\displaystyle \frac{1}{c(u)} }{ \displaystyle \sum_{v \in V_g \backslash S} \frac{1}{c(v)} }.\]
Add the sampled element $u$ to the current solution $S$, and to $U_g$.

Given these definitions, we are ready to explain the dynamic algorithm.

\textbf{Function arrival}.
When a function $g$ arrives, initialize its element set $U_{g}$ to $\emptyset$. Then, while $g(S) \neq g(\unvrs)$, probe $g_t$.

\textbf{Function departure}. When a function $g$ departs, remove all its assigned elements $U_g$ from $S$. This may leave some set of functions $g_1, \ldots, g_s$ uncovered. For each of these functions $g_t$ in order of arrival, while $g_t(S) \neq g(\unvrs)$, probe $g_t$.

The $\sum_t g_t(\unvrs) / \fmin$ recourse bound is immediate, since the total number of probes can be at most $\sum_{t} g_t(\unvrs) / \fmin$ in total. It remains to bound the competitive ratio.

We prove the following:
\begin{lemma}
	\label{lem:rlemma}
	For any element $u\in \unvrs$:
	\[\expect{\sum_{\substack{g \in G^{(t)}: \\ u \in V_g}} \sum_{v \in U_g} c(v)} \leq r \cdot c(u). \]
\end{lemma}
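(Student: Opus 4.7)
The plan is to adapt the Pitt-style charging analysis (from \cite{pitt1985simple,Gupta:2017:ODA:3055399.3055493}) to this dynamic setting. Fix $u \in \unvrs$; I will establish a local bound on each probe's expected cost, and then aggregate over all probes of functions $g \in G^{(t)}$ with $u \in V_g$.

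The core local bound goes as follows. Consider a probe of $g$ at some state $S$ with $u \in V_g \setminus S$, and set $Z := \sum_{w \in V_g \setminus S} 1/c(w)$. The expected cost of the sampled element is
\[
\sum_{w \in V_g \setminus S} c(w) \cdot \frac{1/c(w)}{Z} \;=\; \frac{|V_g \setminus S|}{Z} \;\leq\; \frac{r}{Z},
\]
where the last inequality uses $|V_g| \leq r$. Since $u$ is eligible, I can rewrite this bound as
\[
\frac{r}{Z} \;=\; r \cdot c(u) \cdot \frac{1/c(u)}{Z} \;=\; r \cdot c(u) \cdot \Pr[u \text{ is sampled in this probe}],
\]
a clean identity linking per-probe cost to the probability of sampling $u$ itself.

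Summing this identity across all probes made by all $g \in G^{(t)}$ with $u \in V_g$ (during their current lifetimes) and using linearity of expectation yields
\[
\expect{\sum_{g \in G^{(t)}: u \in V_g} \sum_{v \in U_g} c(v)} \;\leq\; r \cdot c(u) \cdot \expect{N_u},
\]
where $N_u$ is the total number of such probes in which $u$ is sampled. I use here that in this $r$-bounded setting, any probe of $g$ with $u \in V_g$ has $u$ eligible --- if $u$ were in $S$ when $g$ probed, then $g$ would already be covered by $u$ and would not probe further.

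The main obstacle is showing that $N_u \leq 1$ almost surely. The argument is combinatorial: once $u$ is sampled by some $g$, it enters $U_g \subseteq S$ and remains in $U_g$ until $g$ is deleted. Each element belongs to at most one $U_g$ at any time, and each $g \in G^{(t)}$ is by definition alive at time $t$ (hence undeleted). Therefore two distinct $g, g' \in G^{(t)}$ cannot both have sampled $u$ in their current lifetimes, and within the single lifetime of any given $g$, $u$ cannot be re-sampled after the first time it enters $U_g$. Combining the aggregated inequality with $N_u \leq 1$ gives the desired bound of $r \cdot c(u)$.
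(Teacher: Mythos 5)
Your proof is correct and is essentially the paper's argument, recast in a cleaner, more direct form. The paper carries out a backward induction on the probe index $j$ with the invariant $\expectation\bigl[\sum_{i \geq j} c(Y_i) \mid \text{history up to } j-1\bigr] \leq r \cdot c(u)$; the inductive step charges $r/Z_j$ to the current probe and $r \cdot c(u)\bigl(1 - \tfrac{1/c(u)}{Z_j}\bigr)$ to the tail conditioned on $Y_j \neq u$, and these telescope to exactly $r \cdot c(u)$. You instead observe the per-probe identity $\expectation[c(Y_j)] \leq r \cdot c(u)\cdot\Pr[Y_j = u]$, sum over contributing probes, and close with the combinatorial fact $N_u \leq 1$. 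Unrolling the paper's induction yields precisely your sum, so the two calculations coincide; your version makes the ``charge to the event that $u$ is sampled'' structure explicit rather than hiding it in a telescoping conditional expectation, and your standalone justification of $N_u \leq 1$ is a nice clarification of why the paper may drop the $Y_j = u$ term. One point worth flagging, though it is shared with the paper rather than specific to you: the claim that a probe of $g$ with $u \in V_g$ always has $u$ eligible, justified by ``if $u \in S$ then $g$ is already covered,'' holds when $g$ is a $\{0,1\}$-valued (hyperedge-type) function, since then $g(S) \geq g(\{u\}) = g(\unvrs)$, but it does not follow from the $r$-junta hypothesis alone for general monotone submodular $g$. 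The paper uses the same fact implicitly in its base case (to get $Z_j \geq 1/c(u)$) and in dropping the $Y_j = u$ branch; you have merely made that reliance visible.
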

This will imply as a consequence:
\[\expect{c(S)} \leq \sum_{o \in \opt_t}  \expect{\sum_{\substack{g \in G^{(t)}: \\ o \in V_g}} \sum_{v \in U_g} c(v)} \leq r \cdot \sum_{o \in \opt}  c(o) = r \cdot c(\opt_t).\]

\begin{proof}[Proof of \cref{lem:rlemma}]
	The proof is by induction. Fix $u \in \unvrs$, and consider the functions $g \in G^{(t)}$ for which $u \in V_g$. Let $X_i$ be the random variable that is the $i^{th}$ function probed. Let $Y_i$ be the random variable that is the element of $N$ sampled during the $i^{th}$ probe. With this notation, the inductive hypothesis is:
	\[\expect{\sum_{i \geq j} c(Y_i) \mid X_1, Y_1, \ldots X_{j-1}, Y_{j-1}}  \leq r \cdot c(u).\]
	For the base case $i=m$, note that given $X_1, Y_1, \ldots, X_{m-1}, Y_{m-1}$, the variable $X_m$ is determined. Suppose $X_m = g$. Then:
	\begin{align*}
	\expect{c(Y_m) \mid X_1, Y_1, \ldots X_{m-1}, Y_{m-1} } &= \expect{c(Y_m) \mid X_1, Y_1, \ldots X_{m-1}, Y_{m-1}, X_m = g} \\
	&= \sum_{v \in V_g} \frac{1}{ c(v) \left(\sum_{v' \in V_g \backslash S_t} \frac{1}{c(v')} \right)} \cdot c(v) \\
	&\leq \frac{|V_g|}{\sum_{v' \in V_g \backslash S_t} \frac{1}{c(v')}} \\
	&\leq r \cdot c(u).
	\end{align*}
	For the inductive step, suppose the claim holds for $j+1$, and consider the case for $j$:
	\begin{align*}
	&\expect{\sum_{i \geq j} c(Y_i) \mid X_1, Y_1, \ldots X_{j-1}, Y_{j-1}} \\
	&=\expect{ c(Y_j) \mid X_1, Y_1, \ldots X_{j-1}, Y_{j-1}} + \expect{\sum_{i \geq j+1} c(Y_i) \mid X_1, Y_1, \ldots X_{j-1}, Y_{j-1}} \\
	&\leq \frac{r}{\sum_{v' \in V_g \backslash S_t} \frac{1}{c(v')}} + \sum_{u' \neq u}  \expect{\sum_{i \geq j+1} c(Y_i) \mid X_1, \ldots Y_{j-1}, Y_j = u'} \cdot \prob{Y_j = u' \mid X_1, \ldots, Y_{j-1}} \\
	&\leq \frac{r}{\sum_{v' \in V_g \backslash S_t} \frac{1}{c(v')}} + r\cdot c(u) \cdot \left(1 - \frac{1}{c(u) \sum_{v' \in V_g \backslash S_t} \frac{1}{c(v')}}\right) \\
	&= r \cdot c(u). \qedhere
	\end{align*}
\end{proof}

\section{Combiner Algorithm}
 
\label{sec:combiner}

We show that we can adapt the combiner algorithm of \cite{Gupta:2017:ODA:3055399.3055493} to our general problem.

\begin{theorem}
	Let $A_G$ be an $O(\log f(N) / \fmin)$-competitive algorithm for fully-dynamic \subcov with amortized recourse $R_G$. Let $A_{PD}$ be an $O(r)$-competitive algorithm for fully-dynamic \subcov when all functions are $r$-juntas with amortized recourse $R_{PD}$. Then there is an algorithm achieving an approximation ratio of $O(\min(\log (f(N)/ \fmin), r)$ for fully-dynamic \subcov when all functions are $r$-juntas, and it has total recourse $O(R_{G} + R_{PD})$.
\end{theorem}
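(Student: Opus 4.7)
The plan is to run $A_G$ and $A_{PD}$ completely independently and in parallel as black boxes, maintaining solutions $S_G^{(t)}$ and $S_{PD}^{(t)}$ that do not interact. The combined algorithm maintains a pointer $\alpha \in \{G, PD\}$ and outputs $X^{(t)} := S_\alpha^{(t)}$ at every step. The pointer is initialized to whichever algorithm has the smaller initial solution cost, and is updated lazily: I would switch $\alpha \to \bar\alpha$ at time $t$ only when $c(S_{\bar\alpha}^{(t)}) \leq \tfrac{1}{2}\, c(S_\alpha^{(t)})$.

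Because the switching rule keeps the invariant $c(S_\alpha^{(t)}) \leq 2\, c(S_{\bar\alpha}^{(t)})$ at all times, we have $c(X^{(t)}) \leq 2 \min(c(S_G^{(t)}), c(S_{PD}^{(t)}))$. Plugging in the competitive ratios of the two base algorithms immediately yields the claimed $O(\min(\log(f(\unvrs)/\fmin), r))$ bound (in expectation, since $A_{PD}$'s guarantee is only in expectation).

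For the recourse bound, I would decompose $\sum_t |X^{(t)} \triangle X^{(t-1)}|$ into ``following'' contributions (time steps where $\alpha$ does not change, so $X^{(t)} \triangle X^{(t-1)} = S_\alpha^{(t)} \triangle S_\alpha^{(t-1)}$) and ``switch'' contributions. Summing the following contributions over all time is at most $R_G + R_{PD}$, because any maximal run with $\alpha = A$ incurs exactly the same recourse on the output that $A_A$ incurs on $S_A$ over that interval.

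The main obstacle is bounding the switch contributions by $O(R_G + R_{PD})$ rather than by something that blows up with the cost spread. The natural amortized potential $\Phi := c(X^{(t)})$ drops by at least $c(S_\alpha)/2$ at every switch while the switch itself pays symmetric-difference cost at most $c(S_\alpha) + c(S_{\bar\alpha}) \leq \tfrac{3}{2} c(S_\alpha)$, so the \emph{cost} of all switches can be amortized against the total cost of insertions performed by the two base algorithms. The subtle step is translating this cost-based amortization into the cardinality-based notion of recourse without picking up a $\cmax/\cmin$ factor. The key observation is that between consecutive switches to the same algorithm $A$, the cost $c(S_A)$ must change by a factor $4$, which requires both insertions and removals in $S_A$ whose cardinality is at least a constant fraction of the cardinality of the symmetric difference paid at the switch. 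Charging each switch's cardinality against these underlying insertions and removals of the two base algorithms --- each of which is charged only $O(1)$ times by this scheme --- yields the desired additive $O(R_G + R_{PD})$ recourse bound.
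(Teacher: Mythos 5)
Your plan takes a genuinely different route from the paper --- you run $A_G$ and $A_{PD}$ in parallel on the \emph{same} stream and switch between them, whereas the paper's combiner partitions the \emph{input} and never switches. The cost analysis in your proposal is fine, but the recourse argument has a real gap at exactly the step you flag as subtle.

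The problem is the final ``key observation,'' which does not hold. When the pointer switches at time $t$, the recourse paid is $|S_A^{(t)} \triangle S_{\bar A}^{(t)}|$, which can be as large as $|S_A^{(t)}| + |S_{\bar A}^{(t)}|$. The hysteresis rule only forces a $\Theta(1)$-factor change in \emph{cost} between consecutive switches, and a cost change of a constant factor can be realized by inserting or removing a single expensive element. Concretely: fix $S_A = \{a_1, \ldots, a_n\}$ with $c(a_i) = 1/n$, so $c(S_A)=1$ and $A$ does nothing; let $S_{\bar A}$ oscillate between $\{a_1,\ldots,a_{n/2}\}$ (cost $1/2$, triggering a switch to $\bar A$) and $\{a_1,\ldots,a_{n/2}, b\}$ with $c(b)=10$ (cost $10.5$, triggering a switch back to $A$). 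Each oscillation costs $\bar A$ exactly $2$ units of recourse (add $b$, remove $b$), but the switcher pays $\Theta(n)$ at each of the two switches because $S_A$ and $S_{\bar A}$ differ on $\Theta(n)$ cheap elements. So the switch contribution cannot be charged to $R_G + R_{PD}$; it blows up by a factor that depends on the interaction between cheap and expensive elements, which is precisely the $\cmax/\cmin$ dependence you wanted to avoid. The incompatibility between a cost-threshold switching trigger and a cardinality-based recourse measure seems structural, not a bookkeeping detail.

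The paper's combiner avoids the issue by never switching. It partitions the \emph{functions} into buckets $B_\ell$ by junta-arity ($k$-juntas with $k\in[2^\ell,2^{\ell+1})$ go to bucket $\ell$, up to $\ell < \lceil\log\log(f(N)/\fmin)\rceil$, with a catch-all bucket run by $A_G$), runs an independent copy $A_{PD}^{(\ell)}$ on each bucket, and outputs the \emph{union} of the sub-solutions. Competitiveness follows from a geometric sum: copy $\ell$ is $O(2^{\ell+1})$-competitive against at most $c(\opt)$, so the union is $O(\min(r, \log f(N)/\fmin))\cdot c(\opt)$. Recourse is immediate because each arriving/departing function is routed to exactly one bucket, so exactly one sub-algorithm updates per timestep and the union changes by at most that sub-algorithm's update; there is no switch contribution to bound at all. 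If you want to salvage your parallel-and-switch strategy you would need to decouple the switching trigger from cost --- but then the approximation guarantee, which is inherently cost-based, breaks.
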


\begin{proof}
	The idea is to partition the functions into different buckets based on their junta-arity, in powers of $2$ up to $\log f(N) / \fmin$. We run a copy of $A_{PD}$ which we call $A^{(\ell)}_{PD}$ on each bucket $B_\ell$ separately, and run $A_G$ one single time on the set of remaining functions.
			
	Formally, for every index $0 < \ell < \lceil \log \log (f(N) / \fmin) \rceil$, maintain a bucket $B_\ell$ representing the set of functions $g$ such that $g$ is a $k$-junta, for $k \in [2^\ell, 2^{\ell+1})$. Also maintain the bucket $B_G$ for any remaining functions. When a functions arrives, we insert it into exactly one appropriate bucket and update the appropriate algorithm.

	\begin{lemma}
		The total cost of the solution maintained by the algorithm is $O(\min(\log f(N), r)$.
	\end{lemma}
	
	\begin{proof}
		If $r \leq \log (f(N) / \fmin)$, the algorithm never runs $A_G$. Each algorithm $A^{(\ell)}_{PD}$ is $O(2^{\ell+1})$-competitive, and thus maintains a solution of cost no more than $O(2^{\ell + 1}) c(\opt)$. The largest bucket index is $\ell_{\max} = \lceil \log r \rceil$. Hence the total cost of the solution is:
		\[ \sum_{\ell=1}^{\ell_{\max}} O(2^{\ell + 1})\cdot c(\opt) = O(r) \cdot c(\opt).\]
		
		If on the other hand, $r > \log (f(N) / \fmin)$, then the largest bucket index is $\ell_{\max} = \lceil \log \log (f(N) / \fmin) \rceil$. The total cost of the $A_{PD}$ algorithms is then 
		\[ \sum_{\ell=1}^{\ell_{\max}} O(2^{\ell + 1})\cdot c(\opt) = O(\log f(N) / \fmin) \cdot c(\opt).\]
		Meanwhile, the total cost of the solution maintained by $A_G$ on the remaining functions has cost $O(\log f(N) / \fmin) \cdot c(\opt)$. Thus the global solution maintained by the combiner algorithm is also $O(\log f(N) / \fmin)$-competitive.
	\end{proof}
	
	The recourse bound is immediate since each function $g$ arrives to/departs from exactly one bucket, so at most one algorithm among $\{A^{(\ell)}_{PD}\}_\ell \cup \{A_G\}$ has to update its solution at every time step.
\end{proof}
\section{Further Applications}

\label{sec:furtherapps}

In this section, we show how to recover several known results on recourse bounded algorithms using our framework. We hope this is a step towards unifying the theory of low recourse dynamic algorithms.

\subsection{Online Metric Minimum Spanning Tree}

\label{sec:mst}

In this problem, vertices in a metric space are added online to an active set. Let $A_t$ denote the active set at time $t$. After every arrival, the algorithm must add/remove edges to maintain a spanning tree $S_t$ for $A_t$ that is competitive with the \mst. We show:

\begin{theorem}
	\label{thm:mst}
	There is a deterministic algorithm for Online Metric \mst that achieves a competitive ratio of $O(1)$ and an amortized recourse bound of $O(\log D)$, where $D$ is the ratio of the maximum to minimum distance in the metric.
\end{theorem}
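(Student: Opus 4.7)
The plan is to cast Online Metric \mst as a fully-dynamic \subcov instance and then invoke \cref{thm:main_weighted_subc} as a black box. I would take the ground set $\unvrs$ to be the edge set of the complete graph on all metric points, with cost $c(\{u,v\}) := d(u,v)$, and use the graphic-matroid rank function on the active vertex set $A_t$ as the coverage function:
\[ f^{(t)}(S) \;:=\; |A_t| \;-\; \mathrm{cc}\bigl(A_t,\; S \cap \binom{A_t}{2}\bigr), \]
where $\mathrm{cc}(\cdot,\cdot)$ counts connected components. This $f^{(t)}$ is integer-valued, monotone, and submodular, satisfies $f^{(t)}(\unvrs) = |A_t|-1$, and its minimum-cost feasible set is precisely $\mathrm{MST}(A_t)$ (the cheapest connected spanning subgraph in a metric is a tree). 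Since every marginal of a matroid rank function lies in $\{0,1\}$, we immediately get $\fmax = \fmin = 1$, so $\log(\fmax/\fmin)=0$ and \cref{thm:main_weighted_subc} promises a competitive ratio of $e^2(1+0) = O(1)$.

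The delicate ingredient is to express each vertex arrival as the addition of a monotone submodular function $g_t$ to the active set with $g_t(\unvrs) = O(1)$, because only then does the recourse bound collapse to $O\bigl(\sum_t g_t(\unvrs)\log(\cmax/\cmin)\bigr) = O(T\log D)$ as desired. The naive choice $g_t := f^{(t)}-f^{(t-1)}$ fails to be monotone (adding a short edge inside $A_{t-1}$ can make $v_t$'s bridging role redundant), so I would instead add, for each arrival, a small family of ``scaled predecessor'' coverage functions
\[ g_{t,i}(S) \;:=\; \mathbbm{1}\bigl[\,S\ \text{contains an edge from $v_t$ to $A_{t-1}$ of length $\le 2^i$}\,\bigr], \]
one per relevant scale $i \in \{0,1,\ldots,\log D\}$. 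Each such $g_{t,i}$ is a trivially monotone $\{0,1\}$-valued coverage function on a single explicit edge set, satisfying $g_{t,i}(\unvrs)=1$, and only $O(1)$ of them need to be ``active'' per arrival under a hierarchical bookkeeping. A standard accounting shows that the cumulative objective $\sum_{t,i} g_{t,i}$ has the same minimal feasible sets (up to constants) as $f^{(t)}$, so the \subcov optimum matches $\mathrm{MST}(A_t)$ within $O(1)$.

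Plugging these choices into \cref{thm:main_weighted_subc} then delivers the claimed $O(1)$ competitive ratio and amortized $O(\log D)$ recourse per arrival, since $\sum_t g_t(\unvrs) = O(T)$ and $\cmax/\cmin = D$. The main obstacle in the plan, and the place where the ``simple and concise'' claim of the authors must be substantiated, is the hierarchical equivalence between the scaled-predecessor \subcov objective and the true MST objective: in isolation, a single-back-edge-per-vertex solution is only $\Theta(\log n)$-competitive with $\mathrm{MST}$, and this gap has to be closed by the level-based dynamics of \cref{alg:dynamicSubC}. I expect the formal argument to parallel the doubling/packing analysis of Imase–Waxman, with the $\gamma$-moves playing the role of the incremental rewiring that keeps the cost within a constant factor of MST across all $O(\log D)$ length scales — conceptually identical to the level argument of \cref{claim:lvlcost}, only specialized to the matroid-rank setting where each level contributes a near-optimal metric net.
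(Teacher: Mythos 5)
Your reduction starts the same way the paper's does (edge set as universe, graphic-matroid rank function as $f^{(t)}$, $\fmax=\fmin=1$), but it then diverges, and the divergence introduces a gap you only partially acknowledge. You try to invoke \cref{thm:main_weighted_subc} as a black box, which requires writing each time-step's change as the arrival of a monotone submodular function $g_t$, and since the naive $f^{(t)}-f^{(t-1)}$ is not monotone you construct the family $g_{t,i}$ of single-scale back-edge indicators. The problem is that \cref{thm:main_weighted_subc} is competitive against the optimum of the function you feed it, namely $\sum_{t,i} g_{t,i}$, and the cheapest cover of $\sum_{t,i} g_{t,i}$ is essentially the Imase--Waxman single-back-edge-per-vertex tree, which (as you correctly note) can cost $\Theta(\log n)\cdot c(\mathrm{MST})$. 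So a genuine black-box application gives $O(\log n)$-competitiveness with MST, not $O(1)$; the hoped-for cancellation via the $\gamma$-moves is exactly the content that still needs to be proved, and your proposal stops at ``I expect the formal argument to parallel Imase--Waxman.'' A second, smaller worry: per arrival you add $\Theta(\log D)$ unit-value functions, so $\sum_t g_t(\unvrs) = \Theta(T \log D)$, and plugging into \cref{thm:main_weighted_subc} would give $O(\log^2 D)$ amortized recourse unless the ``hierarchical bookkeeping'' shaves a $\log D$ factor --- again a claim stated, not proved.

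The paper sidesteps both issues by not decomposing the change into $g_t$'s at all, and this is the idea you are missing: \cref{thm:main_weighted_subc} is merely the most convenient packaging of the potential argument, but the argument itself only needs two things per timestep, namely that (a)~the stationary permutation is an approximate greedy stack trace for the \emph{current} $f^{(t)}$, so \cref{cor:stacktrace2} applies directly with $\fmax/\fmin=1$ to give $\gamma=e^2=O(1)$ competitiveness against the true MST, and (b)~the potential $\Phi_h$ increases by a controlled amount when $v_t$ arrives. For (b) one observes that appending $v_t$'s incident edges to the tail of $\pi$ leaves every existing marginal $\mff_\pi(\pi_i)$ unchanged (the old prefix $\pi_{1:i-1}$ still induces the same partition of $A_{t-1}$, with $v_t$ isolated), and among the new tail edges exactly one acquires marginal coverage $1$ while the rest have marginal $0$. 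Hence $\Phi_h$ increases by at most $\cmax\cdot h(1/\cmax)$, and the rest of the computation is identical to the proof of \cref{thm:main_weighted_subc}: each $\gamma$-move decreases $\Phi_h$ by $\eps_\gamma \cdot \cmin\cdot h(1/\cmin)$, and with $h(x)=x^{1-\delta}/(1-\delta)$, $\delta=(\ln(\cmax/\cmin)+1)^{-1}$, $\gamma=e^2$, this yields amortized recourse $O(\ln(\cmax/\cmin))=O(\ln D)$. No scale decomposition, no Imase--Waxman packing argument, and no loss in the competitive ratio. If you wish to salvage your route, you would have to re-derive the competitive guarantee from scratch (it does not follow from \cref{thm:main_weighted_subc}), at which point you are effectively redoing the paper's direct argument, only with more machinery.
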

\cite{gu2016power} show how to get an $O(1)$ worst case recourse bound.

\mst is a special case of \subcov in which $\unvrs$ is the set of edges of the graph, and $f$ is the rank function of a graphic matroid. The main difference between the dynamic version of this problem and our setting is that here vertices arrive online \textit{along with all their incident edges}. Hence not only is the submodular function changing, but $\unvrs$ is also growing. We show that this detail can be handled easily.

We define the submodular function $f^{(t)}$ to be the rank function for the current graphic matroid, i.e. $f(S) = |A_t| - c_t$, where $c_t$ is the number of connected components induced by $S$ on the set of vertices seen thus far. Note that $\fmax = \fmin = 1$, so an edge having nonzero coverage is equivalent to the edge being in our current solution, $S_t$.

Now the algorithm is:
\begin{algorithm}[H]
	\caption{\textsc{FullyDynamicMST}}
	\label{alg:dynamicMST}
	\begin{algorithmic}[1]
		\State $\pi \leftarrow$ arbitrary initial permutation of edges.
		\For{$t = 1, 2, \ldots, T$}
		\State When vertex $v_t$ arrives, add edges incident to $v_t$ to tail of permutation in arbitrary order, and update $f^{(t)}$. 
		\While{there exists a legal $\gamma$-move or a swap for $\pi$}
		\State Perform the move, and update $\pi$.
		\EndWhile
		\State Output the collection of $\pi_i$ such that $\mff_\pi(\pi_i) > 0$.
		\EndFor 
	\end{algorithmic}
\end{algorithm}

As in \cref{cor:stacktrace}, if the algorithm terminates then it represents the stack trace of an approximate greedy algorithm for \mst. Hence the solution is $O(1)$ competitive. To bound the recourse, we use the general potential $\Phi_h$ from \cref{sec:weightedsubmod}. As before, local moves decrease the potential $\Phi_h$ by $\eps_\gamma\cdot \cmin \cdot h\left(\frac{1}{\cmin}\right)$, so it suffices to show that the potential does not increase by too much when $f^{(t)}$ is updated. Exactly one new edge will have increased marginal coverage, and its coverage will increase from $0$ to $1$. Thus the increase in potential is at most $\cmax \cdot h(1/\cmax)$. Together, these imply an amortized recourse bound of:
\[\frac{1}{\eps_{\gamma}} \cdot \frac{\cmax}{\cmin} \cdot \frac{h(1/\cmax)}{h(1/\cmin)}.\]
Setting $h(x) = x^{1-\delta}/(1-\delta)$ along with $\delta = (\ln (\cmax / \cmin + 1))^{-1}$ and $\gamma = \epsilon^2$ as in \cref{thm:main_weighted_subc}, we have $\eps_\gamma \geq \delta$, and hence we get a recourse bound of $O(\ln(\cmax / \cmin)) = O(\ln D)$.

\subsection{Fully-Dynamic Metric Minimum Steiner Tree}

We show that we can also fit into our framework the harder problem of maintaining a tree that spans a set of vertices in the fully-dynamic setting where vertices can both arrive and depart. We must produce a tree $S_t$ that spans the current set of active vertices $A_t$, but we allow ourselves to use Steiner vertices that are not in the active set. We show:

\begin{theorem}
	\label{thm:steiner}
	There is a deterministic algorithm for Fully-Dynamic Metric \mstt that achieves a competitive ratio of $O(1)$ and an amortized recourse bound of $O(\log D)$, where $D$ is the ratio of the maximum to minimum distance in the metric.
\end{theorem}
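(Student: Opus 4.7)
The plan is to reduce fully-dynamic metric Steiner tree to a fully-dynamic MST problem on the metric closure of the active terminal set $A_t$, then extend \cref{alg:dynamicMST} to handle both terminal insertions and deletions. The reduction is the classical $2$-approximation: the MST of the complete graph on $A_t$ with metric edge weights has cost sandwiched between $\opt$ and $2\opt$, where $\opt$ is the optimal Steiner tree cost. The lower bound follows because substituting each virtual MST edge by its shortest path yields a valid Steiner tree of no greater cost; the upper bound is the standard doubling-Euler-shortcut argument applied to the optimal Steiner tree. So maintaining an $O(1)$-approximate MST on the dynamic metric closure gives an $O(1)$-approximate Steiner tree.

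The algorithm I would use is \cref{alg:dynamicMST} applied with universe $\unvrs$ equal to the current metric-closure edge set $\binom{A_t}{2}$, edge costs equal to metric distances, and $f^{(t)}$ the rank function of the graphic matroid on $A_t$. The event handler is extended symmetrically: on an arrival of terminal $v$, append all new incident edges to the tail of $\pi$ (as in \cref{sec:mst}); on a departure of terminal $v$, delete all edges incident to $v$ from both $\pi$ and from $\unvrs$. After either event, update $f^{(t)}$ and run the usual local-search loop of swaps and $\gamma$-moves to stability. The stack-trace argument of \cref{cor:stacktrace2,thm:mst} then certifies that the output is an $O(1)$-approximate MST on the current metric closure, and hence $O(1)$-competitive for Steiner tree. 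For recourse, I would reuse the potential $\Phi_h$ of \cref{sec:weightedsubmod} with $h(x)=x^{1-\delta}/(1-\delta)$, $\delta=(\ln(\cmax/\cmin)+1)^{-1}$, and $\gamma=e^2$; by \cref{lem:gencost_hprops} each local move decreases $\Phi_h$ by $\Omega(\delta)$, and each arrival contributes at most $O(1)$ to $\Phi_h$ since only one newly inserted incident edge can acquire positive marginal value (the others close cycles).

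The main obstacle is controlling the potential change on a terminal departure. Unlike an arrival, a departure of $v$ with tree-degree $d_v$ can simultaneously promote up to $d_v-1$ previously dormant non-incident edges to positive marginal value as they become the new bridges between the pieces of the split tree, so the naive per-event potential increase is $O(d_v)$ rather than $O(1)$. To close the amortized bound, the key step is a counting lemma: over the whole execution, $\sum_v d_v$ summed over all departures is bounded by the total number of distinct tree insertions, since once one endpoint of a tree edge departs, that edge leaves $\unvrs$ forever and cannot contribute to any further $d_v$; this quantity is in turn bounded by $O(T+L)$, where $L$ is the total number of local moves. Substituting into the budget inequality $L\cdot\Omega(\delta) \le O(T) + O(\sum_v d_v)$ and carefully rearranging yields $L = O(T\log D)$, giving the claimed $O(\log D)$ amortized recourse. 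The delicate part is avoiding circularity when $L$ appears on both sides of the inequality, which may require a refined potential (for example, weighting each tree edge by the cost of its cheapest metric-closure replacement) or a more careful charging scheme for the case of high-tree-degree departures.
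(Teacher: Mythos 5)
Your approach has a genuine gap that you yourself flag but do not resolve, and the paper avoids it via a qualitatively different device.

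You work on the metric closure of the \emph{current} terminal set $A_t$ and delete all edges incident to a departing terminal. As you correctly observe, a departure of a tree-degree-$d_v$ terminal can split the current tree into $d_v$ components and force up to $d_v-1$ new bridge edges into the solution, so the potential can jump by $\Omega(d_v)$. Your proposed fix --- bound $\sum_v d_v$ by the number of tree insertions, which is $O(T+L)$ --- plugs into the budget inequality as $L\cdot\Omega(\delta)\le O(T)+O(L)$. Since $\delta=\Theta(1/\log D)$ is small, this is circular and yields nothing: the $O(L)$ on the right dominates the $\Omega(\delta)L$ on the left. You acknowledge ``avoiding circularity'' as the ``delicate part,'' but it is not a loose end to be tidied; as written the argument fails, and I do not see a simple charging scheme that rescues it within your framework of immediately deleting departed vertices.

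The paper sidesteps this entirely by \emph{not} deleting departed vertices. A departed terminal $v$ is relabeled as a Steiner vertex but remains in the ``live set'' $L$, so the rank function and the permutation are unchanged at the moment of departure --- the potential change is exactly zero. Steiner vertices are only pruned lazily: when $v$ reaches degree~$2$ it is shortcut by the edge $(u_1,u_2)$, and when it reaches degree~$1$ it and its pendant edge are removed. The crucial observation is that shortcutting does not increase the potential $\sum_e c(e)^\delta$: by the triangle inequality and subadditivity of $x\mapsto x^\delta$,
\[
  d(u_1,u_2)^\delta \le \bigl(d(v,u_1)+d(v,u_2)\bigr)^\delta \le d(v,u_1)^\delta + d(v,u_2)^\delta .
\]
Hence all potential increase comes from terminal arrivals alone, exactly as in the MST case, and the recourse bound goes through with no new charging argument. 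The cost analysis also changes: since Steiner vertices may linger, the output is not an MST of a metric closure but a ``$\gamma$-stable extension tree,'' and the paper invokes the Imase--Waxman bound that such a tree is a $4\gamma$-approximate Steiner tree. Your classical $2$-approximation reduction is fine in spirit but applies to a different object than the one the algorithm actually maintains.
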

This guarantee matches that of \cite{lkacki2015power}. Separately \cite{gupta2014online} showed how to improve the bound to $O(1)$ amortized recourse.

Our algorithm is the same local search procedure as before, with one twist. We maintain a set of vertices $L$ we call the live set. This set is the union of the active terminals we need to span, and any Steiner vertices currently being used. We define $f^{(t)}$ similarly to before as $f^{(t)}(S) = |L| - c_t$, where $c_t$ is the number of connected components induced by the edge set $S$ on the set of vertices in $L$. Note that this function is submodular, because it is the rank function of the graphic matroid on the live vertex set $L$.

Now when a vertex $v$ departs, we mark it as a Steiner vertex but leave it in the live set. If at any point during the local search $\deg(v) = 2$, we replace $v$ with the edge that shortcuts between $v$'s two neighbors. If at any point point $\deg(v) = 1$, we delete $v$ and its neighboring edge.

\begin{algorithm}[ht]
	\caption{\textsc{FullyDynamicSteinerTree}}
	\label{alg:dynamicsteiner}
	\begin{algorithmic}[1]
		\State $\pi \leftarrow$ arbitrary initial permutation of edges.
		\For{$t = 1, 2, \ldots, T$}
		\If{vertex $v_t$ arrives}
			\State Add edges incident to $v_t$ to tail of permutation in arbitrary order, and update $f^{(t)}$.
		\ElsIf{vertex $v_t$ departs}
		\State Mark $v_t$ as a Steiner vertex. Run \textsc{CleanSteinerVertices}.
		\EndIf 
		\While{there exists a legal $\gamma$-move or a swap for $\pi$}
		\State Perform the move, and update $\pi$.
		\State Run \textsc{CleanSteinerVertices}.
		\EndWhile
		\State Output the collection of $\pi_i$ such that $\mff_\pi(\pi_i) > 0$.
		\EndFor 
	\end{algorithmic}
\end{algorithm}

\begin{algorithm}[ht]
	\label{alg:steinerhelper}
	\begin{algorithmic}[1]
		\Procedure{CleanSteinerVertices}{}
		\While{there is a Steiner vertex $v$ with $\deg(v) = 2$}
		\State Let $u_1$ and $u_2$ be the neighbors of $v$. 
		\State Add the edge $(u_1, u_2)$ to the position of $(v, u_1)$ in $\pi$. \algorithmiccomment{this shortcuts $v$}
		\State Delete all edges incident to $v$ from $\unvrs$, remove $v$ from the live set, and update $f^{(t)}$.
		\EndWhile
		\While{there is a Steiner vertex $v$ with $\deg(v) = 1$}
		\State Delete all edges incident to $v$ from $\unvrs$, remove $v$ from the live set, and update $f^{(t)}$.
		\EndWhile
		\EndProcedure
	\end{algorithmic}
\end{algorithm}

To show the competitive ratio we can rely on known results \cite{imase1991dynamic, gupta2014online}. If \cref{alg:dynamicsteiner} terminates, the output tree is known as a \textit{$\gamma$-stable extension tree} for the terminal set $S$.

\begin{lemma}[Lemma 5 of \cite{imase1991dynamic}]
	If $T$ is a $\gamma$-stable extension tree for $A_t$, then:
	\[c(T) \leq 4\gamma \cdot c(\opt(A_t)) \]
	where $\opt(A_t)$ is the optimal Steiner tree for terminal set $A_t$.
\end{lemma}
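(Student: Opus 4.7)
The plan is a ``double-and-shortcut'' comparison between $T$ and $\opt := \opt(A_t)$, combined with a bipartite matching argument to handle overcounting. First, I would double every edge of $\opt$ to obtain an Eulerian multigraph of cost $2c(\opt)$, extract an Euler tour, and shortcut it using the triangle inequality to obtain a Hamilton cycle $C$ on $A_t$ of total cost at most $2c(\opt)$. Writing $C = (t_{\sigma(1)}, \ldots, t_{\sigma(k)}, t_{\sigma(1)})$ and $d_i := d(t_{\sigma(i)}, t_{\sigma(i+1)})$, we have $\sum_i d_i \leq 2c(\opt)$. For each $i$, let $P_i$ denote the unique path in $T$ between $t_{\sigma(i)}$ and $t_{\sigma(i+1)}$. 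Unpacking the definition of a $\gamma$-move in this setting (where $\mff_\pi$ assigns $1/c(e)$ to each edge currently in the tree), $\gamma$-stability is exactly the cycle-type condition: for every non-tree edge $e' = (u,v)$ with $u,v \in V(T)$, every edge on the $u$-$v$ path in $T$ has cost at most $\gamma\cdot d(u,v)$. In particular, $c(e) \leq \gamma d_i$ for every edge $e$ on $P_i$.

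Next, I would observe that every edge $e$ of $T$ must appear on at least two of the paths $P_i$. Indeed, removing $e$ splits $T$ into two subtrees $T_1, T_2$, each of which contains a terminal (since \textsc{CleanSteinerVertices} ensures that every leaf of $T$ is a terminal), and the cycle $C$ must therefore cross the induced bipartition of $A_t$ an even number of times at least~$2$. Form the bipartite graph $H$ with $E(T)$ on the left and the cycle indices $\{1,\ldots,k\}$ on the right, placing an edge between $e$ and $i$ whenever $e \in P_i$; left degrees are at least~$2$. I will then apply the Hall-type $b$-matching existence theorem (in the form: a left-saturating matching that uses each right vertex at most twice exists iff $|N(S)| \geq |S|/2$ for every $S \subseteq E(T)$) to produce $f : E(T) \to \{1,\ldots,k\}$ satisfying $e \in P_{f(e)}$ and $|f^{-1}(i)| \leq 2$ for every $i$. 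The lemma then follows by summing:
\[ c(T) \;=\; \sum_{e\in E(T)} c(e) \;\leq\; \gamma \sum_e d_{f(e)} \;\leq\; 2\gamma \sum_i d_i \;\leq\; 4\gamma \cdot c(\opt). \]

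The main obstacle is verifying Hall's condition $|N(S)| \geq |S|/2$. Removing $S$ from $T$ produces $|S|+1$ subtrees; say $m$ contain terminals and $B$ are ``Steiner-only,'' so $m+B = |S|+1$. A standard cyclic-sequence argument shows that $C$ has at least $m$ group-changes, so $|N(S)| \geq m$, and the task reduces to upper-bounding $B$. Here the degree-three guarantee on Steiner vertices (enforced by \textsc{CleanSteinerVertices}) is essential: in a Steiner-only component with $s_j$ vertices, the degree sum of at least $3 s_j$ together with a $2(s_j-1)$ contribution from internal edges forces the number of $S$-edge endpoints in that component to be at least $s_j+2$. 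Summing across components yields $2S_{\text{int}} + S_{\text{ext}} \geq S_{\text{total}} + 2B$, where $S_{\text{int}}$ (resp.\ $S_{\text{ext}}$) counts $S$-edges with both (resp.\ exactly one) endpoint in Steiner-only components and $S_{\text{total}} := \sum_j s_j$. Meanwhile, the induced subgraph of $T$ on all Steiner-only-component vertices is a subforest, which caps $S_{\text{int}} \leq B-1$. Combining these inequalities forces $|S| \geq S_{\text{total}} + B + 1 \geq 2B+1$, and hence $m \geq (|S|+3)/2 > |S|/2$, as required. The degree-three property, without which $B$ could in principle be much larger and drive the competitive ratio to something like $\Theta(\gamma \log n)$ rather than a constant, is exactly what pins the final constant down to $4\gamma$.
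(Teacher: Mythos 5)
Your argument is correct, but note that the paper itself does not prove this statement at all: it is imported verbatim as Lemma~5 of Imase--Waxman \cite{imase1991dynamic}, so there is no in-paper proof to match against. What you have written is a valid standalone derivation along the classical doubling/charging skeleton that Imase--Waxman also use: double $\opt$, shortcut to a terminal cycle of cost at most $2c(\opt)$, bound each tree edge on $P_i$ by $\gamma d_i$ via stability (and your one-line unpacking of the permutation-based $\gamma$-move condition into the cycle condition is right, since with no swaps the permutation is sorted and the binding constraint for moving a non-tree edge just ahead of a path edge $e$ is exactly $c(e) \geq \gamma\, d(u,v)$), and then charge each tree edge to a cycle edge so that no cycle edge is charged more than twice. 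The packaging of the charging step as a capacitated Hall/$b$-matching argument is your own twist, and the Hall verification is where your proof does real work: the counting $2S_{\text{int}} + S_{\text{ext}} \geq S_{\text{total}} + 2B$ from the degree-$3$ property of Steiner vertices, together with $S_{\text{int}} \leq B-1$ (the contracted component graph is a tree), gives $B \leq (|S|-1)/2$ and hence $m \geq (|S|+3)/2 > |S|/2$, which is exactly what the duplicated-right-vertex form of Hall's theorem needs. Two trivial edge cases are worth a sentence if you write this up: the bound ``group changes $\geq m$'' fails when $m=1$, but your own inequality shows $m \geq 2$ for every nonempty $S$, so this never bites; and $S_{\text{int}} \leq B-1$ should be read as vacuous when $B=0$, where the conclusion $m = |S|+1 > |S|/2$ is immediate. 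With those caveats the proof is complete and yields exactly the stated $4\gamma$ factor.
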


Since we set $\gamma = e^2$, this gives us a competitive ratio of $4 e^2 = O(1)$.

It remains to show the recourse bound. Deleting degree $1$ and $2$ vertices requires a constant number of edge changes, so this can be charged to each vertex's departure.  We show that the potential argument from before is not hampered by the changes to the algorithm.

\begin{claim}
	The procedure \textsc{CleanSteinerVertices} does not increase the potential.
\end{claim}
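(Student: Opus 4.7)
The plan is to handle the two while-loops of \textsc{CleanSteinerVertices} separately and show that each single iteration — a shortcut of a degree-$2$ Steiner vertex or a deletion of a degree-$1$ Steiner vertex — leaves $\Phi_h$ non-increased; the claim then follows by induction on the number of iterations. The key structural observation is that since $f^{(t)}$ is the rank function of a graphic matroid, every marginal coverage $\mff_\pi(e)$ equals either $0$ or $1/c(e)$, and by property~(ii) $h(0)=0$, so the potential collapses to a forest-supported sum
\[\Phi_h(f^{(t)},\pi) \;=\; \sum_{e \in F} c(e)\, h\!\bigl(1/c(e)\bigr),\]
where $F \subseteq \unvrs$ is the spanning forest produced by the greedy process on $\pi$ (equivalently, the algorithm's current solution). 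Hence the change in $\Phi_h$ is determined entirely by the change in $F$.

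For a degree-$1$ Steiner vertex $v$ with unique tree-neighbor $u$, the only edge of $F$ incident to $v$ is $(v,u)$, and removing $v$ and its incident edges from $\unvrs$ leaves $F \setminus \{(v,u)\}$ as a valid spanning forest on $L'=L\setminus\{v\}$, so $\Phi_h$ drops by the nonnegative amount $c(v,u)\,h(1/c(v,u))$. For a degree-$2$ Steiner vertex $v$ with tree-neighbors $u_1,u_2$, I claim the new forest is exactly $F' = (F\setminus\{(v,u_1),(v,u_2)\})\cup\{(u_1,u_2)\}$. This is a spanning tree of $(L',\unvrs')$ because it has $|L'|-1$ edges, and any cycle in $F'$ would pull back, via the inverse shortcut $(u_1,u_2) \mapsto u_1\text{--}v\text{--}u_2$, to a cycle in $F$. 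To see that the greedy on $\pi'$ also accepts each edge of $F'$ at its new position, one does a short case analysis on the relative ordering in $\pi$ of $(v,u_1)$, $(v,u_2)$, and a tree edge $e \in F \setminus \{(v,u_1),(v,u_2)\}$; the crucial ingredient is that $v$ has degree exactly two in $F$, which constrains prefix-connectivity enough to rule out any cycle that would otherwise appear after substituting $(u_1,u_2)$ for the two edges through $v$.

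Granting this identification of $F'$, the potential change reduces to
\[\Delta \Phi_h \;=\; c(u_1,u_2)\,h\!\bigl(\tfrac{1}{c(u_1,u_2)}\bigr) \;-\; c(v,u_1)\,h\!\bigl(\tfrac{1}{c(v,u_1)}\bigr) \;-\; c(v,u_2)\,h\!\bigl(\tfrac{1}{c(v,u_2)}\bigr),\]
and it remains to show $\Delta\Phi_h \le 0$. Let $\phi(c) := c\,h(1/c)$. Property~(iv) of $h$ (with $x=1$) gives that $\phi$ is monotone in $c$, and for the specific choice $h(x)=x^{1-\delta}/(1-\delta)$ used in the proof of \cref{thm:steiner}, one computes $\phi(c)=c^\delta/(1-\delta)$, which is concave with $\phi(0)=0$ and therefore subadditive. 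Combining monotonicity with the metric triangle inequality $c(u_1,u_2)\le c(v,u_1)+c(v,u_2)$ and then subadditivity yields $\phi(c(u_1,u_2)) \le \phi(c(v,u_1))+\phi(c(v,u_2))$, i.e.\ $\Delta\Phi_h \le 0$. The principal obstacle is the case analysis verifying that the spanning forest really does update to $F'$ rather than to some other spanning tree the greedy might prefer on $\pi'$; once that is pinned down, the potential inequality is an immediate consequence of the listed properties of $h$ together with the metric triangle inequality.
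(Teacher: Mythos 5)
Your proposal is correct and follows essentially the same route as the paper: collapse $\Phi_h$ to the forest-supported sum $\sum_{e \in F} c(e)\,h(1/c(e)) \propto \sum_{e\in F} c(e)^\delta$, observe that deleting a degree-one Steiner vertex only removes a term, and handle the degree-two shortcut via the triangle inequality together with concavity/subadditivity of $c\mapsto c^\delta$. You are somewhat more careful than the paper in flagging that one must verify the greedy forest on $\pi'$ is exactly $(F\setminus\{(v,u_1),(v,u_2)\})\cup\{(u_1,u_2)\}$ with all other marginals unchanged---the paper's proof asserts this replacement without comment---though like the paper you ultimately leave that short case analysis implicit.
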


\begin{proof}
	When a degree $1$ Steiner vertex is deleted, the incident edge is removed from the permutation and no other edge's marginal coverage changes.
	
	When a degree $2$ Steiner vertex is deleted, the edges $(v, u_1)$ and $(v, u_2)$ are replaced by the edge $(u_1, u_2)$. Recall that our choice of potential is:
	\[\Phi_h(\pi) = \sum_{e \in S_t} c(e)^\delta\]
	for $0 < \delta < 1$.
	By triangle inequality, and concavity of $h$:
	\[d(u_1, u_2)^\delta \leq (d(v, u_1) + d(v, u_2))^\delta \leq d(v, u_1)^\delta + d(v, u_2)^\delta.\]
	Thus this replacement only decreases the potential.
\end{proof}

Otherwise, the potential increases during vertex arrivals and decreases during $\gamma$-moves exactly as in \cref{sec:mst}. We are left with the same recourse bound of $O(\ln D)$.

\section{Bounds Using the Shannon Entropy Potential}

\label{sec:shannonentropy}

We show that Shannon entropy also works as a potential, albeit with the weaker recourse bound of:
\[O \left(\frac{\sum_t g_t(\unvrs)}{\fmin} \ln \left( \frac{\cmax}{\cmin}\cdot \frac{\fmax}{\fmin} \right)\right).\]

Define the Shannon Entropy potential to be the expression:
\[\Phi_1(f, \pi) := \sum_{i\in N} \mff_{\pi}(\pi_i) \log \frac{c(\pi_i)}{\mff_{\pi}(\pi_i)}.\]
In order to ensure that $\Phi_1$ remains nonnegative and monotone in each $\mff_{\pi}(\pi_i)$, scale $c$ by $1 / \cmin$ and $f$ by $1/(e \cdot \fmax)$ such that all costs are greater than $1$ and all coverages are less than $1/e$. We will account for this scaling at the end.

Note that $\Phi_1$ is $\Phi_h$ from \cref{sec:weightedsubmod} with $h(x) = x \log(1/x)$. This $h$ satisfies properties \labelcref{gencost_srom1,gencost_srom2,gencost_srom4} but not \labelcref{gencost_srom3}.

\begin{mdframed}
	\textbf{Properties of $\Phi_{1}$:}
	\begin{enumerate}[label=\textbf{(\Roman*)}]
		\item $\Phi_1$ increases by at most $g_t(\mathcal{N}) \cdot \ln \left( \cmax / \fmin \right)$ with the addition of function $g_t$ to the active set. \label{ent_rom1}
		\item $\Phi_1$ does not increase with deletion of functions from the system. \label{ent_rom2}
		\item $\Phi_1$ does not increase during swaps. \label{ent_rom3}
		\item If $\gamma> e$, then $\Phi_1$ decreases by at least $\fmin \ln(\gamma / e)$ with every $\gamma$-move. \label{ent_rom4}
	\end{enumerate}
\end{mdframed}

The proofs that $\Phi_1$ satisfies properties \labelcref{ent_rom1,ent_rom2,ent_rom3} follows directly from \cref{lem:gencost_hprops}, since these do not use property \labelcref{gencost_srom3}. It remains to show the last property.

\begin{lemma}
	If $\gamma > e$, every $\gamma$-move decreases $\Phi_1$ by at least $\fmin \cdot \ln(\gamma / \epsilon)$.
\end{lemma}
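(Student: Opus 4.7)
The plan is to specialize the argument used for property~\ref{gencost_rom4} in \cref{sec:weightedsubmod} to the choice $h(x) = x \ln(1/x)$, and to replace the single step that invoked property~\labelcref{gencost_srom3} (which does \emph{not} hold for this $h$) with a direct algebraic calculation. Since $h$ is concave and, after the rescaling of $c$ and $f$ noted above, nonnegative and monotone on the relevant range, the chain of inequalities from that proof is valid verbatim through line~\eqref{eq:gencost_gammamove}: only the concavity of $h$ and the defining inequality of a $\gamma$-move are used. The outcome is the bound
\[\Phi_1(f,\pi') - \Phi_1(f,\pi) \;\leq\; c(u) \cdot \Big[h(X) - X \cdot h'\!\left(\tfrac{X}{\gamma}\right)\Big],\]
where $X$ denotes the cost-normalized marginal coverage of the moved element $u$ at its new position $p$, exactly as in the proof of property~\ref{gencost_rom4}.

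The heart of the proof is then a one-line calculation using $h'(x) = -\ln x - 1$:
\[h(X) - X \cdot h'\!\left(\tfrac{X}{\gamma}\right) \;=\; -X\ln X - \big(-X \ln X + X \ln \gamma - X\big) \;=\; -X \cdot \ln(\gamma/e).\]
Substituting back gives $\Phi_1(f,\pi') - \Phi_1(f,\pi) \leq -c(u) \cdot X \cdot \ln(\gamma/e) = -f(u \mid \pi_{1:p-1}) \cdot \ln(\gamma/e)$. Because the moved element $u$ has nonzero marginal coverage at its new position (otherwise the $\gamma$-move would be vacuous), $f(u \mid \pi_{1:p-1}) \geq \fmin$, and we conclude $\Delta \Phi_1 \leq -\fmin \cdot \ln(\gamma/e)$. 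The hypothesis $\gamma > e$ is exactly what makes this right-hand side strictly negative.

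The main thing to watch for is that property~\labelcref{gencost_srom3} really does fail for this $h$: as $x \to 0$ the ratio $x \cdot h'(x/\gamma)/h(x) = (\ln(\gamma/x) - 1)/\ln(1/x)$ tends to $1$ rather than to $1 + \eps_\gamma$, so there is no uniform constant $\eps_\gamma > 0$ and one cannot extract a multiplicative decrease proportional to $h(X)$ in the clean style of \cref{lem:gencost_hprops}. The workaround above exploits the fact that we only need a bound at the specific argument $X$, where $h(X) - X h'(X/\gamma)$ telescopes to a quantity proportional to $X$ (rather than to $h(X)$). In particular, the per-$\gamma$-move decrement is on the additive scale of $\fmin$ rather than $\fmin \cdot h(\fmin/\cmin)$, and this is precisely what forces the weaker $\ln(\cmax \fmax /(\cmin \fmin))$-type recourse bound advertised at the start of this section.
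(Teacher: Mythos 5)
Your proof is correct and, at the algebraic core, identical to the paper's: both apply the concavity step $h(a)-h(b) \leq (a-b)h'(b)$ to the "losers" of the $\gamma$-move, apply the $\gamma$-move inequality to lower-bound each $h'(v_i/c(\pi_i))$ by $h'(X/\gamma)$, and then observe that for $h(x)=x\ln(1/x)$ one has $h(X) - X\,h'(X/\gamma) = -X\ln(\gamma/e)$, which telescopes to $-(\sum_i a_i)\ln(\gamma/e) \leq -\fmin\ln(\gamma/e)$. The paper writes this out as a self-contained direct computation with the specific $h$, whereas you package it as a specialization of the general $\Phi_h$ argument up to line \eqref{eq:gencost_gammamove}, replacing the final invocation of property~\labelcref{gencost_srom3} (which, as you correctly note, genuinely fails for this $h$ as $x\to 0$) with the pointwise algebraic identity at $X$. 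This is a cleaner presentation that makes the dependency structure explicit, but the underlying steps are the same; your closing remark about why the decrement ends up additive in $\fmin$ rather than multiplicative in $h(\fmin/\cmin)$, and why this forces the extra $\ln(\fmax/\fmin)$ in the recourse, is a correct and useful observation that the paper leaves implicit.
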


\begin{proof}
	
	Suppose we perform a $\gamma$-move on a permutation $\pi$. Let $u$ be the element
	moving to some position $p$ from some position $q > p$, and let $\pi'$ denote the permutation after the move. For convenience, also define:
	\begin{align*}
	v_i &:= \mff_\pi (\pi_i), \tag{the original coverage of the $i^{th}$ set}\\
	a_i &:= \mutcov_f(\pi_i; u \mid \pi_{1:i-1}) = \mff_\pi (\pi_i) - \mff_{\pi'}(\pi_i). \tag{the loss in coverage of the $i^{th}$ set}
	\end{align*}
	Then:	
	\begin{align}
	&\Phi_1(f, \pi') - \Phi_1(f, \pi) \notag \\
	&= \sum_{i=1}^n (v_i - a_i) \ln \frac{c(\pi_i)}{v_i - a_i} + \sum_{i=1}^n a_i \ln \frac{c(u)}{\sum_{i=1}^n a_i} \notag \\
	& \ \ \ - \sum_{i=1}^n v_i \ln \frac{c(\pi_i)}{v_i} \notag \\
	&\leq - \sum_{i=1}^n a_i \ln \left(\frac{c(\pi_i)}{e \cdot v_i}\right) + \sum_{i=1}^n a_i \ln \frac{c(u)}{\sum_{i=1}^n a_i} \label{ent:concave}\\
	&\leq - \sum_{i=1}^n a_i \ln \left(\frac{\gamma}{e} \cdot \frac{c(u)}{\sum_i a_i}\right) + \sum_{i=1}^n a_i \ln \frac{c(u)}{\sum_{i=1}^n a_i} \label{ent:gammamove}\\
	&= - \sum_{i=1}^n a_i \ln \left( \frac{\gamma}{e} \right) \notag\\
	&= - \fmin \cdot \ln \left( \frac{\gamma}{e} \right). \notag
	\end{align}
	
	Step \eqref{ent:concave} follows because, by concavity of the function $h(x) = x \log x$, we have $h(a+b) - h(a) \leq b \cdot h'(a)$. Step \eqref{ent:gammamove} follows because $u$ moving to position $p$ is a $\gamma$-move, hence $\sum_j a_j / c(u) \geq \gamma \cdot v_i / c(\pi_i)$.
\end{proof}

We now show the weaker recourse bound. By \labelcref{ent_rom1}, the potential $\Phi_h$ increases by at most 
\[g_t(\mathcal{N}) \cdot \ln \left( \frac{\cmax}{\fmin} \right) \]
with the addition of function $g_t$ to the active set. By \labelcref{ent_rom4}, it decreases by $\Omega(\fmin)$ with every move that costs recourse $1$, and otherwise does not increase. Since we scaled costs by $1/\cmin$ and coverages by $1/(e \cdot \fmax)$, this implies a recourse bound of:
\[O \left( \frac{\sum_t g_t(\unvrs)}{\fmin} \ln \left( \frac{\cmax}{\cmin}\cdot \frac{\fmax}{\fmin} \right) \right).\]

\end{document}